\newtheorem{theorem}{Theorem}\numberwithin{theorem}{section}
\theoremstyle{plain} 
\newtheorem{assumption}{Assumption}
\newtheorem{corollary}{Corollary}\numberwithin{corollary}{section}
\newtheorem{definition}{Definition}\numberwithin{definition}{section}
\newtheorem{lemma}{Lemma}\numberwithin{lemma}{section}
\newtheorem{proposition}{Proposition}\numberwithin{proposition}{section}
\newtheorem{remark}{Remark}
\newtheorem*{pf_appendix_1}{Proof of Proposition 3.1}
\newtheorem*{pf_appendix_2}{Proof of Theorem 3.1}
\numberwithin{equation}{section}
\begin{document}
\title[Long-Term Yield in an Affine HJM Framework on $S_{d}^{+}$]{Long-Term\,\,Yield\,\,in\,\,an\,\,Affine\,\,HJM\,\,Framework\,\,on\,\,$S_{d}^{+}$}
\author[F. Biagini]{Francesca Biagini}
\address[F. Biagini]{Department of Mathematics, LMU University\newline \indent Theresienstrasse 39, D-80333 Munich, Germany}
\email{francesca.biagini@math.lmu.de}%
\urladdr{http://www.mathematik.uni-muenchen.de/personen/professoren/biagini/index.html}
\author[A. Gnoatto]{Alessandro Gnoatto}
\address[A. Gnoatto]{Department of Mathematics, LMU University\newline \indent Theresienstrasse 39, D-80333 Munich, Germany}
\email{alessandro.gnoatto@math.lmu.de}%
\urladdr{http://www.mathematik.uni-muenchen.de/personen/mitarbeiter/gnoatto/index.html}
\author[M. H\"{a}rtel]{Maximilian H\"{a}rtel}
\address[M. H\"{a}rtel]{Department of Mathematics, LMU University\newline \indent Theresienstrasse 39, D-80333 Munich, Germany}
\email{maximilian.haertel@math.lmu.de}%
\urladdr{http://www.mathematik.uni-muenchen.de/personen/mitarbeiter/haertel/index.html}
 \thanks{The research leading to these results has received funding from the European Research Council under
the European Community's Seventh Framework Programme (FP7/2007-2013) / ERC grant agreement no [228087].}
\date{\today}
\subjclass{} %
\keywords{HJM, Affine Process, Long-Term Yield, Yield Curve, Wishart Process}

\begin{abstract}
We develop the HJM framework for forward rates driven by affine processes on the state space of symmetric positive semidefinite matrices. 
In this setting we find an explicit representation for the long-term yield in terms of the model parameters.
This generalises the results of \cite{Karoui97} and \cite{bh2012}, where the long-term yield is investigated under no-arbitrage 
assumptions in a HJM setting using Brownian motions and L\'{e}vy processes respectively. 
\end{abstract}
\maketitle


\section{Introduction}
Long-term interest rates are particularly relevant for the pricing and hedging of long-term fixed-income securities, pension funds, life and accident insurances, or interest rate swaps with a very long time to maturity. 
Thus, the modelling of long-term interest rates is the topic of several contributions which however do not provide a unique definition of long-term interest rates or yield. 
In Section \ref{Long_Term_Yield_in_an_Affine_HJM_Setting} we provide a brief discussion on the different conventions concerning the time to maturity defining the concept of long-term yield in the literature.
Several studies address the topic from a more mathematical or a more macroeconomic point of view. 
The macroeconomic approach is focused on identifying the macroeconomic factors influencing the long-term yield. 
For example the paper \cite{article_Mankiw} examines the impact of monetary and fiscal policies 
on long-term interest rates and rejects the hypothesis that long-term interest are overly sensitive to short-term rates. 
The article \cite{article_Guerkaynak} also studies the impact of macroeconomic news and monetary policy surprises on long-term yields and 
presents evidence that these factors have significant effects on short-term as well as on long-term interest rates. The work \cite{article_hoerdahl} describes a joint 
model of macroeconomic and yield curve dynamics where the continuously compounded spot rate is an affine function 
dependent on macroeconomic state variables. With the help of this model the influence of macroeconomic effects on the long-term yield can be measured. 
The finding of a model that jointly characterises the behaviour of the yield curve and macroeconomic variables as well as state results for the short-term and long-term 
interest rates is also the subject of \cite{article_Ang} and \cite{article_Diebold}. 
In \cite{article_Ang} a vector autoregression model is used to describe the relationship between interest rates and macroeconomy, whereas \cite{article_Diebold} uses 
a latent factor model with the inclusion of macroeconomic variables to model the yield curve. 
In \cite{article_Kim} the yield curve is modeled by a three-factor model, where the interest rates can be described with the help of three underlying latent factors which are employed in order to 
explain the empirical result of falling long-term yields.
\par
Mathematical approaches consider the long-term yield as an interest rate with time to maturity tending to infinity. 
In the textbook \cite{Book_Carmona} as well as in \cite{bh2012}, \cite{Karoui97}, and \cite{Yao2000}, the long-term yield is defined as the limit of the continuously compounded spot rate. 
In this paper we adopt this definition.
The respective form of the long-term yield then depends on the chosen interest rate model, whereas there can be made some universal statements 
concerning the asymptotic behaviour of yields in an arbitrage-free market, independent of the chosen setting.
\par
One of the most important results concerning the asymptotic behaviour of yields 
is that in an arbitrage-free market, long-term zero-coupon rates can never fall, as first stated in \cite{dybvig96}, consequently referred to as \emph{DIR-Theorem}. 
This result was made rigorous by \cite{article_McCulloch}. 
An alternative proof using a different definition of arbitrage can be found in \cite{article_Schulze}.
Then, \cite{article_Hubalek} provided a generalisation of the proof of the \emph{DIR-Theorem},
where they assume the existence of an equivalent martingale measure and omit some measurability conditions.
The assumption of the existence of an equivalent martingale measure is relaxed in \cite{article_Kardaras}.
Finally, \cite{article_Goldammer} generalised the theorem by dropping the requirement of the existence of the long-term yield, 
and showed that the limit superior of zero-coupon rates and forward rates never fall. 
From these results the general behviour of the long-term yield has been clarified. 
However only a few studies have contributed to find explicitly the form for the long-term yield in specific models.

Since it is very important to investigate the concrete structure of the long-term yield for several applications, 
our aim here is to provide an explicit representation of the long-term yield in an HJM framework driven by a general affine process 
on $S_{d}^{+}$. 
Concrete computations of the long-term yield as limit of the standard 
yield have been done in \cite{Book_Carmona}, \cite{Karoui97}, \cite{Yao1999}, \cite{Yao2000} in a Brownian motion setting 
and more recently in \cite{bh2012} in a general L\'{e}vy setting.
In \cite{bh2012} an explicit form for the long-term yield is provided that takes also in 
account the impact of jumps on the long-term behaviour. 
In this paper our setting presents the main advantage that the forward 
curve can be described by taking account of a rich interdependence 
structure among factors.
This provides a flexible way of describing the impact of different risk factors and of their correlations on the long-term yield. 
Under some integrability and measurability conditions on the parameters, we are able to obtain an explicit form of the long-term yield, which results to be independent of the underlying probability measure. 
This extends a result of Section 2.2 in \cite{Karoui97} to a multifactor setting including jumps. 
Moreover, we prove that in our context jumps in the dynamics of the yield do not impact the long-term behaviour.

In order to model the long-term yield, we first provide an extension of the classical Heath-Jarrow-Morton framework to a setting where the market is driven by semimartingale taking values on the cone $S_d^+$ of positive semidefinite symmetric $d\times d$ matrices. 
This class of stochastic processes has appealing features and is increasingly studied in finance research, in particular for modelling multivariate stochastic volatilities in equity and fixed income models, cf.\,\,e.g.\,\,\cite{article_Benabid}, \cite{article_DaFonseca2011}, \cite{article_DaFonseca4}, \cite{article_DaFonseca1}, \cite{article_gou03}, \cite{article_MPS02}, and \cite{article_Richter}. 
It allows to model a whole family of factors which share non-linear links among each other, providing a more realistic description of the market. In many situations, the presence of stochastic correlations among factors does not come at the cost of a loss of analytical tractability, as these processes are affine, in the sense of \cite{article_Cuchiero}. The class of affine processes on $S_{d}^{+}$, i.e. stochastically continuous Markov processes with the feature that the Laplace transform can be represented as an exponential-affine function, was 
introduced to applications in finance by \cite{article_gou06} and \cite{article_gou03} in the form of Wishart processes, a particular affine process first described by Bru in \cite{article_Bru}. 
Theoretical background to affine processes on $S_{d}^{+}$ can be found, among other publications, in \cite{phdthesis_Cuchiero}, \cite{article_Cuchiero}, \cite{article_Cuchiero2}, \cite{article_DFS}, \cite{article_Gnoatto}, \cite{article_Grasselli},  and \cite{article_MPS}. 
A first application of Wishart processes for short rates modelling is given in \cite{article_Gnoatto2}, while a Libor model using affine processes is constructed in \cite{article_FonsecaGnoatto}.  
Here we consider for the first time an affine HJM framework on $S_d^+$, where we develop formulas for forward rates, short rates, and continuously compounded spot rates as well as determine the HJM condition on the drift. Note also that we allow for general affine processes on $S_d^+$, i.e. we admit jumps. This setting provides a flexible and concise way of taking into account the influence of a large number of factor on interest rates dynamics  and represents a further contribution in capturing the dependence structure affecting the interest rates evolution. Moreover, in the final examples, we show that in this setting we can originate affine multidimensional realisations for the forward rate in the sense of \cite[Def. 3]{Chiarella_Kwon}.
\par
An interesting aspect of our study is the use of a matrix-valued driver, whose
elements are stochastically correlated among each other. Our choice for such 
rich multi-dimensional dynamics is open for different economic
interpretations which are beyond the scope of the present contribution. Let
us remark however that we view our specification as beneficial in two possible contexts. 
It provides an alternative way to capture the intrinsic multivariate and dynamic nature of the yield curve, and, if we
extend our view to the post-crisis interest rate market, i.e.\,\,to a multiple
curve interest rate setting, 
it can bes used in the description of positive spreads among different curves to take into account the impact of credit and liquidity/systemic risk.
Furthermore in the special case of a Wishart process as driving factor we are able to provide the correlation structure in a concise way, 
since the Wishart dynamics automatically guarantee that the elements of the driving process are stochastically correlated. 
\par
Explicit results on asymptotic behaviour of the long term yield are also of wide interest,
especially they could be relevant for the literature on long-term
risk, see in particular the recent contribution \cite{Hansen_Scheinkman2012}.
In \cite{Hansen_Scheinkman2012} the goal of the paper is to provide a term structure of risk prices by
changing the investment horizon. Our explicit results could be then used to provide analytical expression for the growth-rate risk in the limit.




\par
The paper is structured as follows. In Section \ref{Affine_Processes} we present the main properties of affine processes on $S_{d}^{+}$ as well as features that 
are important in the course of this paper. 
Then, in Section \ref{Affine_HJM_Framework} analytical expressions for different interest rates are developed under the HJM framework with an 
affine process $X$ on $S_{d}^{+}$ as stochastic driver of the forward rate.
In Section \ref{Long_Term_Yield_in_an_Affine_HJM_Setting} we provide an explicit representation of the long-term yield in the HJM framework on $S_{d}^{+}$ and consider some concrete examples.

\section{Affine Processes on $S^{+}_{d}$}\label{Affine_Processes}
Affine processes were initially studied by \cite{article_DuffieKan} and later fully characterised by \cite{article_DFS} on the state space 
$\mathbb{R}^{m}_{+} \times \mathbb{R}^{n}$ with $m,n \in \mathbb{N}$. The theoretical framework for affine processes on the state space $S_{d}^{+}$, 
can be found in extensive forms in \cite{article_Cuchiero} and \cite{article_Mayerhofer_Jumps}. 
In this section, we state, for the reader`s convenience, the results of these works which are used in the course of this paper as well as the basic required notations. 
In general, for the stochastic background and notation we refer to \cite{Book_Protter}.
Let $d \in \mathbb{N}$. Then, $\mathcal{M}_{d}$ 
denotes the set of all $d \times d$ matrices with entries in $\mathbb{R}$,  
$S_{d}$ is the space of symmetric $d \times d$ matrices with entries in $\mathbb{R}$,
and $S_{d}^{+}$ stands for the cone of symmetric $d \times d$ positive semidefinite matrices with entries in $\mathbb{R}$
which induces a partial order relation on $S_{d}$:
\begin{equation}
 \text{For } x, y \in S_{d} \text{ it is } x \preceq y \text{ if } y - x \in S_{d}^{+}\,.\nonumber
\end{equation}
The space $\mathcal{M}_{d}$ is endowed with the scalar product $A \cdot B \colonequals \operatorname{Tr}\left[A^{\top}\! B\right]$ 
for $A, B \in \mathcal{M}_{d}$, where $\operatorname{Tr}\left[A\right]$ denotes the trace of the matrix $A$. 
\par
Throughout this paper, given $A \subseteq \mathcal{M}_{d}$, $\mathcal{B}\!\left(A\right)$ denotes the Borel $\sigma$-algebra on $A$ 
and $b\!\left(A\right)$ the Banach space of bounded real-valued Borel-measurable functions $f$ on $A$ with norm $\left\Vert f \right\Vert_{\infty} = \sup_{x \in A} \left\vert f\!\left(x\right) \right\vert$. 
\par
Let $(\Omega, \mathcal{F} ,\left(\mathcal{F}_t \right)_{t \geq 0}, \mathbb{P}_{x})$ be a filtered probability space with the filtration $\left(\mathcal{F}_{t}\right)_{t \geq 0}$ 
satisfying the usual conditions of completeness and right-continuity and $X \colonequals \left(X_{t}\right)_{t \geq 0}$ a stochastic process on this probability space. 
For $x \in S_{d}^{+}$, $\mathbb{P}_{x}$ is a probability measure such that 
$\mathbb{P}_{x}\!\left(X_{0} = x\right) = 1$. 
Given $t > 0$, $X_{t-} \colonequals \lim_{s \uparrow t} X_{s}$, we define
\begin{equation}\label{Jump_Definition}
 \Delta X_{t} \colonequals X_{t} - X_{t-}\,,
\end{equation}
the jump at $t$, $\Delta X_{0} \equiv 0$.
\par
Next, we define the transition probabilities for all $t \geq 0$ as:
\begin{equation*}
 p_{t} : S^{+}_{d} \times \mathcal{B}\!\left(S^{+}_{d}\right) \rightarrow \left[0,1\right], \left(x,B\right) \mapsto \mathbb{P}_{x}\!\left(X_{t} \in B\right)\,.
\end{equation*}
Further, let $\left(P_{t}\right)_{t \geq 0}$ be a semigroup such that
\begin{equation}\label{Semigroup_Definition}
P_{t}f\!\left(x\right) \colonequals \int_{S^{+}_{d}\!} f\!\left(\xi\right)\, p_{t}\!\left(x,d\xi\right) = \mathbb{E}_{x}\!\left[f\!\left(X_{t}\right)\right], x \in S_{d}^{+},
\end{equation}
where $f \in b\!\left(S^{+}_{d}\right)$.
\par
We consider a time-homogeneous Markov process $X$ with state space $S^{+}_{d}$, i.e. the Markov property holds for all $A \in \mathcal{B}\!\left(S^{+}_{d}\right), x \in S^{+}_{d},$ and $s,t \geq 0$ (cf.\,\,Definition 17.3 in \cite{Book_Klenke}):
\begin{equation}\label{Markov_Property}
 \mathbb{P}_{x}\!\left(X_{t+s} \in A \mid \mathcal{F}_{s}\right) = p_{t}\!\left(X_{s},A\right)\ \ \  \mathbb{P}_{x}\text{ - a.s.}\nonumber
\end{equation}

Next, we want to define the characteristics of an affine process on $S_{d}^{+}$ (cf.\,\,Definition 2.1 in \cite{article_Cuchiero}).

\begin{definition}\label{Affine_Process_Definition}
 A Markov process $X$ with values in $S_{d}^{+}$ is called \emph{affine} if the following two properties hold:
\begin{enumerate}
  \item[(i)] It is stochastically continuous, i.e. it holds for all $t \geq 0$ and all $\epsilon > 0$: 
    \begin{equation}\label{Affine_Process_Definition_Equation_1}
      \lim_{s \rightarrow t} \mathbb{P}_{x}\!\left(\left\Vert X_{s} - X_{t}\right\Vert > \epsilon\right) = 0\,.\nonumber
    \end{equation}
  \item[(ii)] Its Laplace transform has exponential-affine dependence on the initial state, i.e. the following equation holds for all $t \geq 0$ and $u, x \in S_{d}^{+}$:
    \begin{equation}\label{Affine_Process_Definition_Equation_2}
      P_{t}e^{-\operatorname{Tr}\left[u x\right]} \overset{(\ref{Semigroup_Definition})}{=} \int_{S_{d}^{+}}\! {e^{-\operatorname{Tr}\left[u \xi \right]}\, p_{t}\!\left(x,d\xi\right)} = e^{-\phi\left(t,u\right) - \operatorname{Tr}\left[ \psi\left(t,u\right) x\right]}\,,
    \end{equation}
     for some functions $\phi: \mathbb{R}_{+} \times S_{d}^{+} \rightarrow \mathbb{R}_{+}$ and $\psi: \mathbb{R}_{+}\times S_{d}^{+} \rightarrow S_{d}^{+}$.
\end{enumerate}
\end{definition}


From the stochastic continuity of $X$ follows directly the weak convergence of the distributions $p_{t}\!\left(x,\cdot\right)$, $t \geq 0$, 
i.e. it holds for all $t \geq 0$ (cf.\,\,Satz 5.1 in \cite{Book_Bauer}):
\begin{equation*}
 \lim_{s \rightarrow t} p_{s}\!\left(x,\cdot\right) = p_{t}\!\left(x,\cdot\right)\,.
\end{equation*}

Note, that due to the non-negativity of $X$ the Laplace transform is well-defined and can be used to characterise an affine process. 
Further, in consequence of the stochastic continuity of the process according to Proposition 3.4 in \cite{article_Cuchiero}, the process 
$X$ is regular in the sense of Definition 2.2 in \cite{article_Cuchiero}.
\par
As well we consider that the affine Markov process is conservative, that means that the process will remain almost surely on the state space $S_{d}^{+}$ 
all the time.

\begin{definition}\label{Conservative_Affine_Process_Definition}
The affine process $X$ is called \emph{conservative} if for all $t \geq 0$ the following condition holds:
\begin{equation}\label{Conservative_Affine_Process_Definition_Equation_1}
p_{t}\!\left(x,S_{d}^{+}\right) = 1\,,\nonumber
\end{equation}
i.e. $X_{t} \in S_{d}^{+}$ $\mathbb{P}_{x}$-a.s.
\end{definition}

Now, we are able to introduce the so-called admissible parameter set which generalises the concept of L\'{e}vy 
triplet to the setting of affine processes on $S_{d}^{+}$ (cf.\,\,Definition 3.1 in \cite{article_Mayerhofer_Jumps}).

\begin{definition}\label{Admissible_Parameter_Set}
 An \emph{admissible parameter set} $\left(\alpha,b,B,m,\mu\right)$ consists of
  \begin{enumerate}
   \item[(i)] a linear diffusion coefficient $\alpha \in S_{d}^{+}$,
   \item[(ii)] a constant drift term $b \in S_{d}^{+}$ which satisfies
              \begin{equation}\label{Drift_Term_Equation}
               b \succeq \left(d-1\right) \alpha\,,\nonumber
              \end{equation}
   \item[(iii)] a Borel measure $m$ on $S_{d}^{+}\! \setminus\! \left\{ 0\right\}$ to represent the constant jump term
              \begin{equation}\label{Constant_Jump_Term_Equation}
               \int_{S_{d}^{+}\! \setminus \left\{ 0\right\}} \! \left(\left\Vert \xi \right\Vert \land 1\right) \, m\!\left(d\xi\right) < \infty\,,
              \end{equation}
   \item[(iv)] a linear jump coefficient $\mu : {S_{d}^{+}\! \setminus \left\{ 0\right\}} \rightarrow {S_{d}^{+}\! \setminus \left\{ 0\right\}}$ which is a $\sigma$-finite measure and satisfies 
              \begin{equation}\label{Linear_Jump_Coefficient_Equation}
               \int_{S_{d}^{+}\! \setminus \left\{ 0\right\}} \! \left(\left\Vert \xi \right\Vert \land 1\right) \, \mu\!\left(d\xi\right) < \infty\,,
              \end{equation}
   \item[(v)] a linear drift $B : S_{d}^{+} \rightarrow S_{d}^{+}$ that satisfies the condition
              \begin{equation}\label{Linear_Drift_Equation_1}
               \operatorname{Tr}\!\left[B\!\left(x\right) u\right] \geq 0 \text{ for all } x, u \in S_{d}^{+} \text{ with } \operatorname{Tr}\!\left[x u\right] = 0\,.\nonumber
              \end{equation}
  \end{enumerate}
\end{definition}


\begin{theorem}\label{Riccati_Equations_Theorem}
Suppose $X$ is a conservative affine process on $S_{d}^{+}$ with $d\geq 2$. 
Then $X$ is regular and has the Feller property. 
Moreover, there exists an admissible parameter set $\left(\alpha,b,B,m,\mu\right)$ such that 
$\phi: \mathbb{R}_{+} \times S_{d}^{+} \rightarrow \mathbb{R}_{+}$ and $\psi: \mathbb{R}_{+}\times S_{d}^{+} \rightarrow S_{d}^{+}$ 
in \eqref{Affine_Process_Definition_Equation_2} solve the generalised Riccati differential equations for $u \in S_{d}^{+}$
\begin{align}
 & \partial_{t} \phi\!\left(t,u\right) = F\!\left(\psi\!\left(t,u\right)\right),\ \ \ \phi\!\left(0,u\right) = 0\,, \label{Riccati_Equations_Theorem_Equation_1}\\
 & \partial_{t} \psi\!\left(t,u\right) = R\!\left(\psi\!\left(t,u\right)\right),\ \ \ \psi\!\left(0,u\right) = 0\,,  \label{Riccati_Equations_Theorem_Equation_2}
\end{align}
with
 \begin{align}
 & F\!\left(u\right) \colonequals \operatorname{Tr}\!\left[b u\right] - \int_{S_{d}^{+}\! \setminus \left\{ 0\right\}}\!\left(e^{-\operatorname{Tr}\left[u \xi\right]} - 1\right)\, m\!\left(d\xi\right)\,, \label{Riccati_Equations_Theorem_Equation_3}\\
 & R\!\left(u\right) \colonequals -2\, u\, \alpha\, u + B^{\top}\!\left(u\right)- \int_{S_{d}^{+}\! \setminus \left\{ 0\right\}}\!\left(e^{-\operatorname{Tr}\left[u \xi\right]} - 1\right)\, \mu\!\left(d\xi\right)\,. \label{Riccati_Equations_Theorem_Equation_4}
 \end{align}
Conversely, let $\left(\alpha,b,B,m,\mu\right)$ be an admissible parameter set and $d\geq 2$. 
Then there exists a unique conservative affine process $X$ on $S_{d}^{+}$ such that the affine property \eqref{Affine_Process_Definition_Equation_2} 
holds for all $t \geq 0$ and $u, x \in S_{d}^{+}$ with $\phi: \mathbb{R}_{+} \times S_{d}^{+} \rightarrow \mathbb{R}_{+}$ and $\psi: \mathbb{R}_{+}\times S_{d}^{+} \rightarrow S_{d}^{+}$ 
 given by \eqref{Riccati_Equations_Theorem_Equation_1} and \eqref{Riccati_Equations_Theorem_Equation_2}.
\end{theorem}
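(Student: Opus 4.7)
The strategy is classical: extract a functional equation from the semigroup/Markov structure, differentiate to obtain ODEs, and identify the right-hand sides via a Lévy–Khintchine-type representation respecting the cone structure of $S_d^+$.

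For the direct implication, my plan is to first exploit the semigroup property $P_{t+s}=P_tP_s$ applied to exponential test functions $x\mapsto e^{-\operatorname{Tr}[ux]}$. Using the affine property \eqref{Affine_Process_Definition_Equation_2} twice yields the flow (semiflow) equations
\begin{align}
\phi(t+s,u) &= \phi(s,u) + \phi(t,\psi(s,u))\,,\nonumber\\
\psi(t+s,u) &= \psi(t,\psi(s,u))\,,\nonumber
\end{align}
valid for all $s,t\geq 0$ and $u\in S_d^+$. Since stochastic continuity together with Proposition 3.4 of \cite{article_Cuchiero} forces regularity, the right-derivatives $F(u):=\partial_t\phi(t,u)|_{t=0}$ and $R(u):=\partial_t\psi(t,u)|_{t=0}$ exist and are continuous in $u$. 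Differentiating the flow equations in $s$ at $s=0$ then yields \eqref{Riccati_Equations_Theorem_Equation_1}–\eqref{Riccati_Equations_Theorem_Equation_2} with initial conditions $\phi(0,u)=0$, $\psi(0,u)=0$ coming from $P_0=\mathrm{Id}$.

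The main obstacle is the identification of $F$ and $R$ in the stated Lévy–Khintchine form \eqref{Riccati_Equations_Theorem_Equation_3}–\eqref{Riccati_Equations_Theorem_Equation_4}, together with the derivation of admissibility of $(\alpha,b,B,m,\mu)$. Here I would apply the Feller property to represent the extended generator of $X$ on sufficiently smooth exponential-affine functions and then compute this generator on $e^{-\operatorname{Tr}[ux]}$. Since the state space is the cone $S_d^+$ and not all of $S_d$, the admissible diffusion part must degenerate tangentially at the boundary faces of $S_d^+$; this is what produces the specific Wishart-type quadratic term $-2u\alpha u$ in $R$, while the linear operator $B$ must satisfy an inward-pointing (quasi-monotonicity) condition on boundary faces, and the jump measures $m,\mu$ must be supported on $S_d^+\setminus\{0\}$ and integrate $\|\xi\|\wedge 1$. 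The drift constraint $b\succeq(d-1)\alpha$ is exactly the analogue of the Feller condition $2b\geq\sigma^2$ for squared-Bessel processes and arises from requiring that the diffusion does not exit $S_d^+$.

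For the converse, given an admissible tuple $(\alpha,b,B,m,\mu)$ I would first show, using the inward-pointing property of $R$ on the boundary of $S_d^+$ and a standard comparison argument, that the ODE \eqref{Riccati_Equations_Theorem_Equation_2} has a unique global solution $\psi(\cdot,u)\in S_d^+$ for every $u\in S_d^+$; then $\phi$ is obtained by direct integration of \eqref{Riccati_Equations_Theorem_Equation_1}. Setting $f_{t,u}(x):=\exp(-\phi(t,u)-\operatorname{Tr}[\psi(t,u)x])$, the flow equations recovered from the ODEs guarantee Chapman–Kolmogorov, so that $f_{t,u}$ are Laplace transforms of a consistent family of sub-probability kernels $p_t(x,\cdot)$; Bochner's theorem on the cone together with the integrability conditions \eqref{Constant_Jump_Term_Equation}–\eqref{Linear_Jump_Coefficient_Equation} ensure these are genuine probability measures, i.e.\ the process is conservative. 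Uniqueness of the process follows from uniqueness of the Riccati solutions and the fact that exponential-affine functions $\{e^{-\operatorname{Tr}[ux]}:u\in S_d^+\}$ are measure-determining on $S_d^+$, so the transition semigroup, hence the law of $X$, is completely pinned down.
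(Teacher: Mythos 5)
The paper offers no proof of this statement: it is imported from the literature and the ``proof'' consists of a citation to Theorem 2.4 of \cite{article_Cuchiero} and Theorem 4.1 of \cite{article_Mayerhofer_Jumps}. Your sketch correctly reproduces the architecture of the arguments in those references (semiflow equations from Chapman--Kolmogorov, regularity to differentiate them, identification of $F$ and $R$ through the generator, and, for the converse, global solvability of the Riccati system on the cone followed by reconstruction of the semigroup), so the overall strategy is sound. Two small corrections: first, with the flow equation written as $\phi(t+s,u)=\phi(s,u)+\phi(t,\psi(s,u))$, differentiating in $s$ at $s=0$ produces a transport-type identity involving $\nabla_u\phi$, not \eqref{Riccati_Equations_Theorem_Equation_1}; to land directly on the Riccati form you must use the other factorisation $\phi(t+s,u)=\phi(t,u)+\phi(s,\psi(t,u))$ and differentiate in the increment variable $s$. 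Second, conservativeness in the converse direction is not a consequence of the integrability conditions \eqref{Constant_Jump_Term_Equation}--\eqref{Linear_Jump_Coefficient_Equation}; it follows from $F(0)=0$ and $R(0)=0$, which force $\phi(t,0)=\psi(t,0)=0$ and hence $p_t\!\left(x,S_d^+\right)=1$.

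The more substantive issue is that the parts of your outline phrased as ``this is what produces'' and ``must satisfy'' are precisely where essentially all of the technical content lies, and they are left as black boxes. In particular: the claim that for $d\geq 2$ the jump measures need only integrate $\left\Vert\xi\right\Vert\wedge 1$ (no truncation function, i.e.\ jumps of finite variation as in Remark \ref{Finite_Variation_Jumps_Remark}) is a nontrivial geometric consequence of $S_d^+$ being a proper cone and is not automatic from a generic L\'evy--Khintchine representation; the necessity and sufficiency of the drift condition $b\succeq(d-1)\alpha$ requires a genuine boundary analysis of the degenerate diffusion (it is not merely ``the analogue of $2b\geq\sigma^2$''); and in the converse direction the assertion that the candidate functions $e^{-\phi(t,u)-\operatorname{Tr}\left[\psi(t,u)x\right]}$ are Laplace transforms of (sub-)probability kernels needs a complete-monotonicity or approximation argument, not just an appeal to ``Bochner's theorem on the cone.'' As a roadmap your proposal matches the cited literature; as a proof it defers the hard steps to exactly the places where the cited papers spend most of their effort.
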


\begin{proof}
 Cf.\,\,Theorem 2.4 of \cite{article_Cuchiero} and Theorem 4.1 of \cite{article_Mayerhofer_Jumps}.
\end{proof}

Besides the admissible parameter set, we need to define the matrix variate Brownian motion for the representation of the affine process $X$ (cf.\,\,Definition 3.23 in \cite{article_Pfaffel}).

\begin{definition}\label{Matrix_Variate_Brownian_Motion_Definition}
 A \emph{matrix variate Brownian motion} $W \in \mathcal{M}_{d}$ is a matrix consisting of $d^2$ independent, one-dimensional 
Brownian motions $W_{ij}, 1 \leq i,j \leq d$.
\end{definition}

\begin{remark}\label{Finite_Variation_Jumps_Remark}
By \emph{(3.3)} of \cite{article_Mayerhofer_Jumps} we obtain that in the case of $d \geq 2$, the affine process $X$ has only jumps of finite variation, 
i.e for all $t\geq 0$
\begin{equation}\label{Finite_Variation_Jumps_Remark_Equation}
 \int_{0}^{t}\!\int_{S_{d}^{+}\! \setminus \left\{ 0\right\}}\! \left\Vert\xi\right\Vert \, \mu^{X\!}\!\left(ds,d\xi\right) < \infty\,.
\end{equation}
\end{remark}

Now, we can state the following representation of $X$.

\begin{theorem}\label{Affine_Process_Representation}
Let $X$ be a conservative affine process on $S_{d}^{+}$, $d\geq 2$, with admissible parameter set $\left(\alpha,b,B,m,\mu\right)$, 
where $Q \in \mathcal{M}_{d}$ such that $Q^{\top}Q = \alpha$. Then there exists a matrix Brownian motion 
$W \in \mathcal{M}_{d}$ such that $X$ admits the following representation:
\begin{equation}\label{Affine_Process_Representation_Equation_1}
X_{t} = x + \!\int_{0}^{t}\!\! \left(b\!+\!B\!\left(X_{s}\right)\right) ds + \!\int_{0}^{t}\!\!\left(\!\sqrt{X_{s}}dW_{s} Q\! +\! Q^{\top} dW_{s}^{\top}\! \sqrt{X_{s}}\right) +\! \int_{0}^{t}\!\int_{S_{d}^{+}\! \setminus \left\{ 0\right\}}\! \xi \, \mu^{X\!}\!\left(ds,d\xi\right), 
\end{equation}
where $\mu^{X\!}\!\left(ds,d\xi\right)$ is the random measure associated with the jumps of $X$, having the compensator
\begin{equation}\label{Affine_Process_Representation_Equation_2}
 \nu\!\left(dt,d\xi\right) \colonequals \left(m\!\left(d\xi\right) + \operatorname{Tr}\!\left[X_{t}\, \mu\!\left(d\xi\right)\right]\right) dt\,.
\end{equation}
\end{theorem}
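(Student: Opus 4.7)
The plan is to read off the semimartingale characteristics of $X$ from the infinitesimal generator (which is determined by the admissible parameter set via Theorem \ref{Riccati_Equations_Theorem}), and then to construct the matrix Brownian motion $W$ by a martingale representation argument on a suitably extended probability space.

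First, I would use regularity of the conservative affine process $X$ on $S_d^+$, guaranteed by Theorem \ref{Riccati_Equations_Theorem}, together with the exponential-affine form \eqref{Affine_Process_Definition_Equation_2}, to differentiate $P_t e^{-\operatorname{Tr}[u x]}$ at $t = 0$ using the Riccati equations \eqref{Riccati_Equations_Theorem_Equation_1}--\eqref{Riccati_Equations_Theorem_Equation_4}. This yields the explicit form of the generator $\mathcal{A}$ on exponentials $e^{-\operatorname{Tr}[u x]}$, where the diffusion contribution is of the form $2 \operatorname{Tr}[u \alpha u x]$, the drift contribution is $-\operatorname{Tr}[(b + B(x)) u]$, and the jump contribution is $\int_{S_d^+ \setminus \{0\}} (e^{-\operatorname{Tr}[u \xi]} - 1) \, (m(d\xi) + \operatorname{Tr}[x \mu(d\xi)])$. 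Since exponentials separate points, this identifies $\mathcal{A}$ and hence the full generator.

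Second, I would invoke the standard correspondence between the generator of a Feller process and its semimartingale characteristics (cf.\,\,Chapter II of Jacod--Shiryaev). The drift piece in $\mathcal{A}$ produces the absolutely continuous predictable drift $\int_0^t (b + B(X_s))\,ds$. The jump part immediately identifies the compensator of the jump measure $\mu^X$ as \eqref{Affine_Process_Representation_Equation_2}, and the finite-variation property \eqref{Finite_Variation_Jumps_Remark_Equation} ensures that $\int_0^t \int_{S_d^+ \setminus \{0\}} \xi \, \mu^X(ds,d\xi)$ is well-defined without truncation. Subtracting the drift and the pure-jump integral from $X$ leaves a continuous local martingale $M = X - x - \int_0^\cdot (b + B(X_s))\,ds - \int_0^\cdot \int \xi \, \mu^X(ds,d\xi)$ in $S_d$, whose quadratic variation I can compute from the diffusion part of $\mathcal{A}$: componentwise one obtains $d\langle M_{ij}, M_{kl}\rangle_t = \bigl(X_{ik}\alpha_{jl} + X_{il}\alpha_{jk} + X_{jk}\alpha_{il} + X_{jl}\alpha_{ik}\bigr)\,dt$.

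Third, I would construct $W$ so that $dM_t = \sqrt{X_t}\, dW_t\, Q + Q^\top dW_t^\top \sqrt{X_t}$. On the event where $X_t$ is invertible, $W$ can be defined essentially by inversion, namely via $dW_t = \tfrac{1}{2} X_t^{-1/2} dM_t (Q^\top Q)^{-1} Q^\top$ up to symmetrisation; a direct computation with $Q^\top Q = \alpha$ verifies that the resulting process is a continuous local martingale with $d\langle W_{ij}, W_{kl}\rangle_t = \delta_{ik}\delta_{jl}\,dt$, and L\'evy's characterisation then makes it a matrix Brownian motion in the sense of Definition \ref{Matrix_Variate_Brownian_Motion_Definition}. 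Off this event, that is where $X_t$ is singular or $Q$ fails to be invertible, I would enlarge the probability space to carry an independent $\mathcal{M}_d$-valued Brownian motion $\widetilde W$ and set $W_t$ equal to the "pseudo-inverse" construction above plus a correction driven by $\widetilde W$ on the kernel of $\sqrt{X_t}$; this is the standard device used to handle degeneracy of the volatility matrix for square-root type processes.

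The main obstacle is precisely this last step: the algebraic/linear-algebraic matching of the quadratic-variation tensor with the shape $\sqrt{X_s}\, dW_s\, Q + Q^\top dW_s^\top \sqrt{X_s}$, and the measure-theoretic care needed when $X_s$ approaches $\partial S_d^+$. Once $W$ is constructed and shown to be a matrix Brownian motion, identity \eqref{Affine_Process_Representation_Equation_1} follows by assembling the drift, continuous martingale, and pure-jump pieces.
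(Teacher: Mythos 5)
The paper does not actually prove this theorem: its ``proof'' is a one-line citation of Theorem 3.4 in the reference \cite{article_Mayerhofer_Jumps}, so there is no internal argument to compare against. Your outline essentially reconstructs the proof strategy of that cited result: read the generator off the Riccati functions $F$ and $R$, pass from the generator of the Feller process to its semimartingale characteristics, identify the compensator \eqref{Affine_Process_Representation_Equation_2} and use the finite-variation property \eqref{Finite_Variation_Jumps_Remark_Equation} to avoid truncation, compute the bracket of the continuous martingale part (your tensor $X_{ik}\alpha_{jl}+X_{il}\alpha_{jk}+X_{jk}\alpha_{il}+X_{jl}\alpha_{ik}$ is the correct one for the shape $\sqrt{X}\,dW\,Q+Q^{\top}dW^{\top}\sqrt{X}$ with $\alpha=Q^{\top}Q$), and then manufacture $W$ on an enlarged space. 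That is the right route, and you correctly identify the construction of $W$ as the only genuinely delicate step.

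One caution on that last step, which your sketch glosses over: the formula $dW_t=\tfrac{1}{2}X_t^{-1/2}\,dM_t\,(Q^{\top}Q)^{-1}Q^{\top}$ does not recover $W$ even where $X_t$ is invertible, because $Q(Q^{\top}Q)^{-1}Q^{\top}$ is only the orthogonal projection onto the range of $Q$, and the two summands of $dM$ mix $dW$ and $dW^{\top}$; so the candidate process is $W$ only modulo projections onto $\operatorname{ran}(\sqrt{X_t})$ and $\operatorname{ran}(Q)$. The extension of the probability space by an independent matrix Brownian motion is therefore needed not just on a null-ish degenerate set but as an integral part of the construction, and after defining $W$ one must still verify both that $\langle W_{ij},W_{kl}\rangle_t=\delta_{ik}\delta_{jl}\,t$ (L\'evy) and that the residual $M-\int(\sqrt{X}\,dW\,Q+Q^{\top}dW^{\top}\sqrt{X})$ is a continuous local martingale with vanishing quadratic variation, hence zero. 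These verifications are routine but not automatic; as a plan your proposal is sound and matches the literature the paper leans on.
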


\begin{proof}
 Cf.\,\,Theorem 3.4 in \cite{article_Mayerhofer_Jumps}.
\end{proof}

Note, that it is possible to choose $Q$ this way since
\begin{math}
Q^{\top}\!Q \in S_{d}^{+}
\end{math}
for all $Q \in \mathcal{M}_{d}$ due to Theorem 2.2 (ix) in \cite{article_Pfaffel}.

\begin{remark}\label{Wishart_Process_Remark}
If in \emph{Theorem \ref{Affine_Process_Representation}} we have $b = \delta \alpha$ with $\delta \geq 0$, ${B\!\left(z\right) = Mz + zM^{\top}}$ with $M \in \mathcal{M}_{d}$, and there are no jumps, the process $X$ is a Wishart process, 
cf.\,\,\cite{article_Bru}.
\end{remark}

Throughout this paper we consider $X$ to be a conservative, regular, affine process on the state space $S_{d}^{+}$ with $d \geq 2$, hence $X$ can be represented by equation 
(\ref{Affine_Process_Representation_Equation_1}). Furthermore, the linear drift coefficient $B$ is of the form
\begin{equation}\label{Linear_Drift_Equation_2}
 B\!\left(z\right) = Mz + zM^{\top} + G\!\left(z\right),\,z \in S_{d}^{+},
\end{equation}
where $M \in \mathcal{M}_{d}$ and $G : S_{d} \rightarrow S_{d}$ is linear satisfying $G\!\left(S_{d}^{+}\right) \subseteq S_{d}^{+}$ to encompass a wider range 
of affine processes (cf.\,\,(2.30) in \cite{article_Cuchiero}).
\par
 Note, that in the case of $X$ being not conservative, all subsequent calculations and the consequential results are still valid, as long as another set of 
admissible parameters is used with an additional constant killing rate term $c \in \mathbb{R}_{+}$ and an additional linear killing rate coefficient $\gamma \in S_{d}^{+}$. 
In the case of $d = 1$, the parameter set has to be extended by a truncation function for compensating the infinite variation part of the jumps. 
The most general admissible parameter set, encompassing the case of $X$ being not conservative on a state space with dimension $d = 1$, 
is stated in Definition 2.3 in \cite{article_Cuchiero}.



\section{Affine HJM Framework on $S_{d}^{+}$}\label{Affine_HJM_Framework}
 
We now provide a HJM framework to model the forward curve using affine processes on $S_{d}^{+}$ in the setting outlined in Section \ref{Affine_Processes}. 
\par
By a $T$-maturity zero-coupon bond we mean a contract that guarantees its holder the payment of one unit of currency at time $T$, with no intermediate payments. 
The contract value at time $t \leq T$ is denoted by $P\!\left(t,T\right)$ and the bond market satisfies the following hypotheses: (1) there exists a frictionless market for $T$-bonds for every maturity $T \geq 0$;
(2) $P\!\left(T,T\right) = 1$ for every $T \geq 0$; (3) for each fixed $t$, the zero-coupon bond price $P\!\left(t,T\right)$ is differentiable with respect to the maturity $T$. 
\par
The money market account is $\beta_{t} \colonequals \exp\!\left(\int_{0}^{t}\!r_{s}\,ds\right)$ with $r_{t}$ denoting the short rate at time $t$. 
We set $\Delta^2 \colonequals \left\{(t,T) \in \mathbb{R}_{+} \times \mathbb{R}_{+},\  t\leq T\right\}$ and assume the forward rates $f:\Omega \times \Delta^2 \rightarrow \mathbb{R}$ to evolve for every maturity $T > 0$ according to
\begin{align}\label{Forward_Rates_Process_Definition}
f\!\left(t,T\right) = f\!\left(0,T\right) + \int_{0}^{t}\! \alpha\!\left(s,T\right) \, ds + \int_{0}^{t} \operatorname{Tr}\!\left[\sigma\!\left(s,T\right) dX_{s} \right], \ 0\leq t\leq T,
\end{align}
where  $X$ is an affine conservative process with representation \eqref{Affine_Process_Representation_Equation_1} for a given initial value $x \in S_{d}^{+}$. 
Since we fix the initial value $X_{0} = x$, from now on we write $\mathbb{P}$ for $\mathbb{P}_{x}$.
We impose the following conditions on the drift $\alpha : \Omega \times \mathbb{R}_{+} \times \mathbb{R}_{+} \rightarrow \mathbb{R}$
and the volatilities $\sigma_{ij} : \Omega \times \mathbb{R}_{+} \times \mathbb{R}_{+} \rightarrow \mathbb{R}, \ i,j \in \left\{1,\dots,d\right\}$:\footnotemark

\footnotetext{For $\alpha$ and $\sigma$ we write the shortened version $\alpha\!\left(s,T\right) \colonequals \alpha\!\left(\omega,s,T\right)$ 
and $\sigma\!\left(s,T\right) \colonequals \sigma\!\left(\omega,s,T\right)$.}

\begin{assumption}\label{Assumption1}
\phantom{aa}\par\noindent
\begin{itemize}
\item $\alpha \colonequals \alpha\!\left(\omega ,s,u\right) : \left(\Omega\! \times \!\mathbb{R}_{+}\! \times \!\mathbb{R}_{+}, \mathcal{F}\! \otimes \mathcal{B}\!\left(\mathbb{R}_{+}\right)\! \otimes \mathcal{B}\!\left(\mathbb{R}_{+}\right) \right) \rightarrow \left(\mathbb{R}, \mathcal{B}\!\left(\mathbb{R}\right)\right)$ is jointly measurable.
\item For all $T \geq 0$: 
     \begin{equation}\nonumber
      \int_{0}^{T}\int_{0}^{T} \left\vert \alpha\!\left(s,u\right)\right\vert \, ds\, du < \infty\ \,\mathbb{P}\text{-a.s.}
     \end{equation}
\item For all $s, u \in \mathbb{R}_{+}$ and a.e. $\omega \in \Omega$: $\sigma\!\left(s,u\right) \in S_{d}^{+}$, i.e. $\sigma\!\left(s,u\right) $ is a symmetric positive semidefinite $d \times d$ matrix.
\item $\sigma_{ij} \colonequals \sigma_{ij}\!\left(\omega ,s,u\right) : \left(\Omega\! \times \!\mathbb{R}_{+}\! \times \!\mathbb{R}_{+}, \mathcal{F}\! \otimes \!\mathcal{B}\!\left(\mathbb{R}_{+}\right)\! \otimes \!\mathcal{B}\!\left(\mathbb{R}_{+}\right) \right)$ $\rightarrow \left(\mathbb{R}, \mathcal{B}\!\left(\mathbb{R}\right)\right)$ are jointly measurable for all $i,j \in \left\{1,\dots,d\right\}$.
\item For all $T \geq 0$: $\left(\alpha\!\left(s,T\right)\right)_{s \in \left[0,T\right]}$ and $\left(\sigma\!\left(s,T\right)\right)_{s \in \left[0,T\right]}$ are adapted.
\item For all $T \geq 0$: 
     \begin{equation}\nonumber
      \sup_{s,u \leq T} \left\Vert \sigma\!\left(s,u\right)\right\Vert < \infty\ \,\mathbb{P}\text{-a.s.}
     \end{equation}
\item For all $i,j \in \left\{1,\dots,d\right\}: \sigma_{ij} : \mathbb{R}_{+} \times \mathbb{R}_{+} \rightarrow \mathbb{R}$ is c\`{a}gl\`{a}d in both components.
\end{itemize}
\end{assumption}

Due to Assumption \ref{Assumption1} the forward rate process is well-defined in (\ref{Forward_Rates_Process_Definition}).
Note that other integrability conditions can be chosen to guarantee that the integrals in \eqref{Forward_Rates_Process_Definition} are well-defined. 
In this case the results of the paper will also apply under technical modifications of the proofs.

\begin{proposition}\label{Bond_Price_Process_Proposition}
If $X$ is a conservative affine process and \emph{Assumption \ref{Assumption1}} holds, then for every maturity $T > 0$ the 
zero-coupon bond price follows a process of the form
\begin{align}\label{Bond_Price_Process_Equation}
P\!\left(t,T\right) & = P\!\left(0,T\right) + \int_{0}^{t}\! P\!\left(s,T\right)\left(r_{s} + A\!\left(s,T\right)\right) \, ds \nonumber\\
& \phantom{===i} + 2 \int_{0}^{t}\! P\!\left(s,T\right) \operatorname{Tr}\!\left[\Sigma\!\left(s,T\right)\sqrt{X_{s}} \, dW_{s} Q\right] \nonumber\\
& \phantom{===i} + \int_{0}^{t}\! P\!\left(s-,T\right) \int_{S_{d}^{+}\! \setminus \left\{ 0\right\}}\! \left(e^{\operatorname{Tr}\left[ \Sigma\left(s,T\right)\,\xi\right]} - 1\right) \left(\mu^{X} - \nu\right)\left(ds,d\xi\right) ,
\end{align}
for $t \leq T$, where
\begin{equation}\label{Sigma_Definition}
 \Sigma\!\left(s,T\right) \colonequals - \int_{s}^{T}\! \sigma\!\left(s,u\right) \, du
\end{equation}
is the $T$-bond volatility and 
\begin{equation}\label{A_Definition}
A\!\left(t,T\right) \colonequals -\int_{t}^{T}\! \alpha\!\left(t,u\right) \, du - F\!\left(-\Sigma\!\left(t,T\right)\right) - \operatorname{Tr}\!\left[R\!\left(-\Sigma\!\left(t,T\right)\right)X_{t}\right],
\end{equation}
where $F$ and $R$ are given by \eqref{Riccati_Equations_Theorem_Equation_3}, \eqref{Riccati_Equations_Theorem_Equation_4} respectively.
\end{proposition}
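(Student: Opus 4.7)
The plan is to work with $Z(t,T) := \log P(t,T) = -\int_t^T f(t,v)\,dv$ and first derive its semimartingale dynamics in $t$. Substituting \eqref{Forward_Rates_Process_Definition} and swapping the orders of integration (classical Fubini for the $\alpha$-term and the stochastic Fubini theorem as in \cite{Book_Protter} for the integral against $X$, all justified by the joint measurability, adaptedness, $L^1$ and local boundedness conditions in Assumption~\ref{Assumption1} together with the c\`agl\`ad property of $\sigma$), I expect to obtain
\[
dZ(t,T) \;=\; \Bigl[r_t - \int_t^T \alpha(t,v)\,dv\Bigr]dt \;+\; \operatorname{Tr}\bigl[\Sigma(t,T)\,dX_t\bigr],
\]
which is also what the informal Leibniz-style differentiation $dZ(t,T)=f(t,t)\,dt-\int_t^T df(t,v)\,dv$ produces once the stochastic integral and the Lebesgue integral are exchanged.

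Next I would substitute the representation \eqref{Affine_Process_Representation_Equation_1} of $X$ into $\operatorname{Tr}[\Sigma(t,T)\,dX_t]$, splitting $dX$ into its drift $(b+B(X_t))dt$, its diffusion part $\sqrt{X_t}\,dW_tQ+Q^{\top}dW_t^{\top}\sqrt{X_t}$, and the jump integral against $\mu^X$. The symmetry of $\Sigma(t,T)$ and cyclicity of the trace reduce the diffusion contribution to $2\operatorname{Tr}[\Sigma(t,T)\sqrt{X_t}\,dW_tQ]$, whose predictable quadratic variation computes (using $Q^{\top}Q=\alpha$) to $d[Z^c,Z^c]_t = 4\operatorname{Tr}[\Sigma(t,T)\alpha\Sigma(t,T)X_t]\,dt$. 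It\^o's formula for general semimartingales with jumps applied to $P(t,T)=e^{Z(t,T)}$ then yields
\[
\frac{dP(t,T)}{P(t-,T)} \;=\; dZ^{c}(t,T) + \tfrac{1}{2}\,d[Z^{c},Z^{c}]_t + \int_{S_d^+\setminus\{0\}}\!\bigl(e^{\operatorname{Tr}[\Sigma(t,T)\xi]}-1\bigr)\mu^X(dt,d\xi),
\]
and I would split the last integral into a compensated part plus the integral against $\nu$ from \eqref{Affine_Process_Representation_Equation_2}; this split is legitimate since $X$ has jumps of finite variation by Remark~\ref{Finite_Variation_Jumps_Remark}, so the integrand $e^{\operatorname{Tr}[\Sigma\xi]}-1=O(\|\xi\|)$ near zero is integrable against $\nu$.

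Finally I would collect the $dt$-coefficient of $P(t-,T)^{-1}dP(t,T)$, which equals
\[
r_t - \int_t^T\alpha(t,v)\,dv + \operatorname{Tr}[\Sigma b] + \operatorname{Tr}[\Sigma B(X_t)] + 2\operatorname{Tr}[\Sigma\alpha\Sigma X_t] + \int_{S_d^+\setminus\{0\}}\!\bigl(e^{\operatorname{Tr}[\Sigma\xi]}-1\bigr)\bigl(m(d\xi)+\operatorname{Tr}[X_t\mu(d\xi)]\bigr),
\]
with $\Sigma=\Sigma(t,T)$. Comparing with \eqref{Riccati_Equations_Theorem_Equation_3} and \eqref{Riccati_Equations_Theorem_Equation_4} evaluated at $-\Sigma(t,T)$, and using the adjoint identity $\operatorname{Tr}[B^{\top}(\Sigma)X_t]=\operatorname{Tr}[\Sigma B(X_t)]$ together with $(-\Sigma)\alpha(-\Sigma)=\Sigma\alpha\Sigma$, one recognises this bracket as $r_t+A(t,T)$ from \eqref{A_Definition}. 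The only genuinely technical point is the stochastic-Fubini step at the very beginning; everything else is It\^o calculus plus careful bookkeeping of the signs induced by the argument $-\Sigma$ inside $F$ and $R$.
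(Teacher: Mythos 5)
Your proposal is correct and follows essentially the same route as the paper's own proof: set $Z(t,T)=\log P(t,T)$, apply the (stochastic) Fubini theorem and the representation \eqref{Affine_Process_Representation_Equation_1} to get the semimartingale decomposition of $Z$, use the symmetry of $\Sigma$ to reduce the diffusion term to $2\operatorname{Tr}[\Sigma\sqrt{X}\,dW\,Q]$, apply It\^o's formula with jumps to $e^{Z}$, compensate the jump measure using the finite-variation property, and identify the drift with $r_t+A(t,T)$ via the Riccati functions and the adjoint identity $\operatorname{Tr}[B^{\top}(u)Y]=\operatorname{Tr}[B(Y)u]$. Your quadratic-variation expression $4\operatorname{Tr}[\Sigma\alpha\Sigma X_t]$ coincides with the paper's $4\operatorname{Tr}[Q\,\Sigma X_t\Sigma\,Q^{\top}]$ by cyclicity of the trace and $Q^{\top}Q=\alpha$, so the two arguments agree in every essential step.
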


\begin{proof}
The proof can be found in the appendix.
\end{proof}

Note that from Assumption \ref{Assumption1} it follows that $-\Sigma\!\left(t,T\right) \in S_{d}^{+}$ for all $t,T \geq 0$ 
since $\sigma\!\left(t,T\right) \in S_{d}^{+}$  and it is easy to show that $\int_{t}^{T}\!\sigma\!\left(t,u\right)\, du \in S_{d}^{+}$.
Therefore all necessary integrals are finite with respect to $\mu^{X}$, $\nu$, and the compensated jump measure $\left(\mu^{X}-\nu\right)$, 
since $X$ has jumps of finite variation and is regular due to Theorem \ref{Riccati_Equations_Theorem}. From this it also follows that 
$F\!\left(-\Sigma\!\left(t,T\right)\right)$ and $R\!\left(-\Sigma\!\left(t,T\right)\right)$ exist.

\begin{remark}\label{Remark_Bond_Price}
The bond-price process $P\!\left(t,T\right), 0\leq t\leq T$, can be rewritten the following way:
 \begin{align}\label{Remark_Bond_Price_Equation_1}
P\!\left(t,T\right) & = P\!\left(0,T\right)  + \int_{0}^{t}\! P\!\left(s,T\right)\left(r_{s} + C\!\left(s,T\right)\right) \, ds + \int_{0}^{t}\! P\!\left(s-,T\right) \operatorname{Tr}\!\left[\Sigma\!\left(s,T\right) dX_{s}\right]\nonumber\\
& \phantom{===} + \int_{0}^{t} \!\int_{S_{d}^{+}\! \setminus \left\{ 0\right\}\!}\! P\!\left(s-,T\right)\! \left(e^{\operatorname{Tr}\left[\Sigma\left(s,T\right)\, \xi\right]} \!-\! 1\! -\! \operatorname{Tr}\!\left[\Sigma\!\left(s,T\right) \xi\right]\right)\! \left(\mu^{X\!} - \nu\right)\!\left(ds,d\xi\right),
\end{align}
with for all $0\leq t\leq T$
\small
\begin{equation}\label{C_Definition}
 C\!\left(t,T\right) \colonequals A\!\left(t,T\right) - \operatorname{Tr}\!\left[\Sigma\!\left(t,T\right) \left(b + B\!\left(X_{t}\right)\right)\right] - \int_{S_{d}^{+}\! \setminus \left\{ 0\right\}\!}\! \operatorname{Tr}\!\left[\Sigma\!\left(t,T\right) \xi\right]\left(m\!\left(d\xi\right) + \operatorname{Tr}\!\left[X_{t} \mu\!\left(d\xi\right)\right]\right),
\end{equation}
\normalsize
where $A\!\left(t,T\right)$ is defined in \eqref{A_Definition}.
\end{remark}

\begin{proof}
Representation \eqref{Remark_Bond_Price_Equation_1} follows by \eqref{Bond_Price_Process_Equation}, \eqref{Affine_Process_Representation_Equation_1}, and \eqref{C_Definition}. 
\par
Note that due to Proposition 1.28 of Chapter II in \cite{Book_JacodShiryaev} we are able to combine the measures $\mu^{X\!}\!\left(ds,d\xi\right)$ and $\nu\!\left(ds,d\xi\right)$ to $\left(\mu^{X\!} - \nu\right)\!\left(ds,d\xi\right)$, 
since the affine process $X$ has only jumps of finite variation (cf.\,\,\eqref{Finite_Variation_Jumps_Remark_Equation}) 
and Assumption \ref{Assumption1} guarantees that all integrals above are finite.
\end{proof}

As an immediate consequence of representation \eqref{Bond_Price_Process_Equation} for the bond price, we obtain the following corollary.

\begin{corollary}
 For every maturity $T > 0$, the discounted zero-coupon bond price follows a process of the formula
\begin{align}\label{Discounted_Bond_Price_Process_Corollary}
 \frac{P\!\left(t,T\right)}{\beta_{t}} & = P\!\left(0,T\right) + \int_{0}^{t}\! \frac{P\!\left(s,T\right)}{\beta_{s}} A\!\left(s,T\right) ds + 2 \int_{0}^{t}\! \frac{P\!\left(s,T\right)}{\beta_{s}} \operatorname{Tr}\!\left[\Sigma\!\left(s,T\right)\!\sqrt{X_{s}} \, dW_{s} Q\right] \nonumber\\
& \phantom{===i} + \int_{0}^{t}\! \frac{P\!\left(s-,T\right)}{\beta_{s}} \int_{S_{d}^{+}\! \setminus \left\{ 0\right\}}\! \left(e^{\operatorname{Tr}\left[ \Sigma\left(s,T\right)\,\xi\right]} - 1\right) \left(\mu^{X} - \nu\right)\left(ds,d\xi\right) ,
\end{align}
for all $t\leq T$.
\end{corollary}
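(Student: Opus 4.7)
The plan is to apply the integration-by-parts formula (Itô product rule) to the semimartingale product $P(t,T)\cdot \beta_t^{-1}$, using the representation of $P(\cdot,T)$ given in Proposition \ref{Bond_Price_Process_Proposition} as the starting point. The key observation is that $\beta_t = \exp\!\left(\int_0^t r_s\,ds\right)$ is continuous and of finite variation, so its inverse satisfies $d\beta_t^{-1} = -\beta_t^{-1} r_t\, dt$, and the quadratic covariation $\left[P(\cdot,T),\beta^{-1}\right]_t$ vanishes identically.

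First I would write, by the product rule,
\begin{equation*}
d\!\left(\frac{P(t,T)}{\beta_t}\right) = \frac{1}{\beta_t}\,dP(t,T) + P(t-,T)\,d\beta_t^{-1} = \frac{1}{\beta_t}\,dP(t,T) - \frac{P(t-,T)}{\beta_t}\,r_t\,dt,
\end{equation*}
where we may write $P(t-,T)$ or $P(t,T)$ interchangeably in the last term because it is integrated against the continuous finite-variation process $\int_0^\cdot r_s\,ds$. Then I would substitute the three differential pieces of $dP(t,T)$ from \eqref{Bond_Price_Process_Equation}: the drift $P(s,T)(r_s + A(s,T))\,ds$, the Brownian piece $2P(s,T)\operatorname{Tr}[\Sigma(s,T)\sqrt{X_s}\,dW_s Q]$, and the compensated jump integral against $(\mu^X-\nu)$.

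The arithmetic step is that the short-rate term inside $dP(t,T)$, namely $\beta_t^{-1}P(t,T)r_t\,dt$, exactly cancels the subtracted term $\beta_t^{-1}P(t-,T)r_t\,dt$, leaving the drift contribution $\beta_t^{-1}P(s,T)A(s,T)\,ds$. Integrating from $0$ to $t$ and using $P(0,T)/\beta_0 = P(0,T)$ then yields the asserted formula \eqref{Discounted_Bond_Price_Process_Corollary}.

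I do not expect any genuine obstacle: all integrability and finiteness statements required to make the stochastic integrals and the compensated-jump integral well-defined were already verified in the discussion following Proposition \ref{Bond_Price_Process_Proposition} (finiteness of $F(-\Sigma)$, $R(-\Sigma)$, finite variation of jumps via Remark \ref{Finite_Variation_Jumps_Remark}, and Assumption \ref{Assumption1}). The only minor care point is justifying that $\beta^{-1}$ has no jumps and that the stochastic-integral terms may be multiplied through by the locally bounded process $\beta^{-1}$, which follows from standard properties of stochastic integration against semimartingales and against the compensated random measure $\mu^X-\nu$.
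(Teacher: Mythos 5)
Your proposal is correct and is exactly the argument the paper compresses into one line ("follows directly from the definition of the money market account and Proposition \ref{Bond_Price_Process_Proposition}"): integration by parts with the continuous finite-variation process $\beta^{-1}$, vanishing covariation, and cancellation of the short-rate terms. Nothing is missing; the $P(s,T)$ versus $P(s-,T)$ point in the $ds$-integrals is handled correctly.
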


\begin{proof}
This follows directly from the definition of the money market account and Proposition \ref{Bond_Price_Process_Proposition}.
\end{proof}

We now investigate the restrictions on the dynamics \eqref{Forward_Rates_Process_Definition} under the assumption of no arbitrage. 
Let $\mathbb{Q} \sim \mathbb{P}$ be an equivalent probability measure. 
By Theorem 3.12 of \cite{article_bj_kab_run} there exists $\gamma \in \mathcal{M}_{d}$ with $\int_{0}^{t}\! \left\Vert \gamma_{s} \right\Vert^{2} \, ds < \infty$ for all $t \geq 0$ 
such that $W^{\ast}_{t} = W_{t} - \int_{0}^{t}\! \gamma_{s} \, ds$, $t \geq 0$, is a matrix variate Brownian motion under $\mathbb{Q}$ and
an $\mathcal{F}_{t} \otimes \mathcal{B}\!\left(\left[0,t\right]\right) \otimes \mathcal{B}\!\left(S_{d}^{+}\right)$ 
measurable function $K : \Omega \times \mathbb{R}_{+} \times S_{d}^{+}\! \setminus \left\{ 0\right\} \rightarrow \mathbb{R}_{+} $ with
\begin{equation}\label{Q_Compensator_Equation_1}
 \int_{0}^{t}\!\int_{S_{d}^{+}\! \setminus \left\{ 0\right\}} \! \left\vert K\!\left(s,\xi\right) \right\vert \, \nu\!\left(ds,d\xi\right) < \infty\ \, \mathbb{P}\text{-a.s.} \nonumber
\end{equation}
for all $t\geq 0$, such that $\mu^{X\!}$ has the $\mathbb{Q}$-compensator 
\begin{equation}\label{Q_Compensator_Equation_2}
 \nu^{\ast\!}\!\left(dt,d\xi\right) \colonequals K\!\left(t,\xi\right) \nu\!\left(dt,d\xi\right).
\end{equation}
Furthermore, for all $t \geq 0$
\begin{displaymath}
 \frac{d\mathbb{Q}}{d\mathbb{P}}\bigm|_{\mathcal{F}_{t}} = L_{t}
\end{displaymath}
with
\begin{align}\label{Radon}
 \log L_{t} & = \int_{0}^{t}\!\gamma_{s} \, dW_{s} - \int_{0}^{t}\! \left\Vert \gamma_{s} \right\Vert^{2} \, ds
 + \int_{0}^{t}\!\int_{S_{d}^{+}\! \setminus \left\{ 0\right\}}\! \log K\!\left(s,\xi\right) \mu^{X\!}\!\left(ds,d\xi\right) \nonumber\\
& \phantom{====} + \int_{0}^{t}\!\int_{S_{d}^{+}\! \setminus \left\{ 0\right\}}\! \left(1 - K\!\left(s,\xi\right)\right) \nu\!\left(ds,d\xi\right).
\end{align}

\begin{definition}\label{ELMM_Definition}
Let $\mathbb{Q} \sim \mathbb{P}$. Then $\mathbb{Q}$ is an \emph{equivalent local martingale measure (ELMM)} for the bond market if for all $T > 0$ the 
discounted bond price process $\frac{P\left(t,T\right)}{\beta_{t}}, t\in \left[0,T\right]$, is a $\mathbb{Q}$-local martingale.
\end{definition}

\begin{theorem}[\textbf{HJM drift condition on $S_{d}^{+}$}]\label{HJM_Drift_Condition_Theorem}
 A probability measure $\mathbb{Q} \sim \mathbb{P}$ with Radon-Nikodym density \eqref{Radon} is an \emph{ELMM} if and only if 
\small
\begin{align}\label{HJM_Drift_Condition_Equation}
  \alpha\!\left(t,T\right) & = - \operatorname{Tr}\!\left[\sigma\!\left(t,T\right)\left(b + B\!\left(X_{t}\right) + 2 \sqrt{X_{t}}\, \gamma_{t}\, Q\right)\right] - 4 \operatorname{Tr}\!\left[Q\, \sigma\!\left(t,T\right) X_{t}\, \Sigma\!\left(t,T\right) Q^{\top}\right] \nonumber\\
& \phantom{====} - \int_{S_{d}^{+}\! \setminus \left\{ 0\right\}\!}\! \operatorname{Tr}\left[\sigma\!\left(t,T\right) \xi\right] e^{\operatorname{Tr}\left[\Sigma\left(t,T\right)\,\xi\right]} K\!\left(t,\xi\right) \left(m\!\left(d\xi\right)\! +\! \operatorname{Tr}\!\left[X_{s} \mu\!\left(d\xi\right)\right]\right)
\end{align}
\normalsize
for all $T > 0,\, dt \otimes d\mathbb{P}$-a.s.\\
In this case, the $\mathbb{Q}$-dynamics of the forward rates $f\!\left(t,T\right), 0\leq t\leq T$, are of the form
\small
\begin{align}\label{Forward_Rate_Representation_Q}
f\!\left(t,T\right) & = f\!\left(0,T\right) + \int_{0}^{t}\! \left\{4 \operatorname{Tr}\left[ Q\, \sigma\!\left(s,T\right) X_{s}\, \int_{s}^{T}\!\sigma\!\left(s,u\right) du\  Q^{\top}\right] \right.\nonumber\\
& \phantom{=====} \left. - \int_{S_{d}^{+}\! \setminus \left\{ 0\right\}\!}\! K\!\left(s,\xi\right) \operatorname{Tr}\!\left[\sigma\!\left(s,T\right) \xi\right] \left(e^{\operatorname{Tr}\left[\Sigma\left(s,T\right)\,\xi\right]}\! -\! 1\right) \left(m\!\left(d\xi\right)\! +\! \operatorname{Tr}\!\left[X_{s} \mu\!\left(d\xi\right)\right] \right)\right\} ds\nonumber\\
& \phantom{=====} + \int_{0}^{t}\!\int_{S_{d}^{+}\! \setminus \left\{ 0\right\}}\!\operatorname{Tr}\!\left[\sigma\!\left(s,T\right) \xi\right] \left(\mu^{X\!}-\nu^{\ast\!}\right)\!\left(ds,d\xi\right)\nonumber\\
& \phantom{=====} + 2 \int_{0}^{t}\! \operatorname{Tr}\left[\sigma\!\left(s,T\right) \sqrt{X_{s}}\, dW^{\ast}_{s}\, Q\right]\,.
\end{align}
\normalsize
\end{theorem}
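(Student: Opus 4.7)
The plan is to rely on the discounted bond-price representation of the preceding Corollary under $\mathbb{P}$, apply Girsanov to pass to $\mathbb{Q}$, and then enforce the local-martingale condition by equating the total drift to zero. The $T$-derivative of that identity will recover \eqref{HJM_Drift_Condition_Equation}; reinserting it into \eqref{Forward_Rates_Process_Definition} will yield \eqref{Forward_Rate_Representation_Q}.

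Concretely, I start from \eqref{Discounted_Bond_Price_Process_Corollary}. Girsanov on $W$ gives $dW_s = dW^{\ast}_s + \gamma_s\,ds$, and the change of compensator $\nu^{\ast} = K\nu$ lets me split $\mu^{X}-\nu = (\mu^{X}-\nu^{\ast}) + (K-1)\nu$. Substituting, the discounted bond price acquires, on top of the $A(s,T)$-drift, two additional absolutely continuous contributions: $2\,\operatorname{Tr}[\Sigma(s,T)\sqrt{X_s}\,\gamma_s\,Q]$ from the Brownian part and $\int(e^{\operatorname{Tr}[\Sigma(s,T)\xi]}-1)(K(s,\xi)-1)(m(d\xi)+\operatorname{Tr}[X_s\mu(d\xi)])$ from the jump part. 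The $\mathbb{Q}$-martingale property is equivalent to vanishing of the compensator, so
\begin{equation*}
A(t,T)+2\operatorname{Tr}[\Sigma(t,T)\sqrt{X_t}\,\gamma_t Q]+\!\!\int_{S_d^+\!\setminus\{0\}}\!\!(e^{\operatorname{Tr}[\Sigma(t,T)\xi]}-1)(K-1)(m(d\xi)+\operatorname{Tr}[X_t\mu(d\xi)])=0
\end{equation*}
for $dt\otimes d\mathbb{P}$-a.e.\ $(t,\omega)$ and all $T\geq t$.

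Next I differentiate this identity in $T$, using $\partial_T\Sigma(t,T) = -\sigma(t,T)$ together with the explicit forms of $F$ and $R$ in \eqref{Riccati_Equations_Theorem_Equation_3}--\eqref{Riccati_Equations_Theorem_Equation_4}. From $A(t,T) = -\int_t^T\alpha(t,u)\,du - F(-\Sigma(t,T)) - \operatorname{Tr}[R(-\Sigma(t,T))X_t]$ I obtain, after a directional derivative, the contributions $-\alpha(t,T)$, $-\operatorname{Tr}[\sigma(t,T)b] - \int\operatorname{Tr}[\sigma\xi]e^{\operatorname{Tr}[\Sigma\xi]}m(d\xi)$ from $\partial_T F$, and, using $\alpha=Q^{\top}Q$ and $\operatorname{Tr}[B^{\top}(\sigma)X_t]=\operatorname{Tr}[\sigma B(X_t)]$, the contributions $-4\operatorname{Tr}[Q\sigma X_t\Sigma Q^{\top}] - \operatorname{Tr}[\sigma B(X_t)] - \int\operatorname{Tr}[\sigma\xi]e^{\operatorname{Tr}[\Sigma\xi]}\operatorname{Tr}[X_t\mu(d\xi)]$ from $\partial_T\operatorname{Tr}[R(-\Sigma)X_t]$. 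Here the cross terms $\operatorname{Tr}[\sigma Q^{\top}Q\Sigma X_t]$ and $\operatorname{Tr}[\Sigma Q^{\top}Q\sigma X_t]$ collapse by transposing-under-the-trace and by symmetry of $\sigma,\Sigma,X_t$ into the single term $2\operatorname{Tr}[Q\sigma X_t\Sigma Q^{\top}]$. Combining the resulting terms (in particular merging $-\operatorname{Tr}[\sigma\xi]e^{\operatorname{Tr}[\Sigma\xi]}$ plus its $(K-1)$-multiple into $-K\operatorname{Tr}[\sigma\xi]e^{\operatorname{Tr}[\Sigma\xi]}$) produces exactly \eqref{HJM_Drift_Condition_Equation}. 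Conversely, if \eqref{HJM_Drift_Condition_Equation} holds, integrating back in $T$ gives the zero-drift identity for $A(t,T)$, so $P(\cdot,T)/\beta$ is a $\mathbb{Q}$-local martingale.

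For the second claim, I substitute the just-derived expression for $\alpha(t,T)$ into \eqref{Forward_Rates_Process_Definition}, then replace $dX_s$ by its representation \eqref{Affine_Process_Representation_Equation_1}, split $dW = dW^{\ast}+\gamma\,ds$ and $\mu^{X}-\nu = (\mu^{X}-\nu^{\ast})+(K-1)\nu$ inside the trace term $\operatorname{Tr}[\sigma(s,T)\,dX_s]$. The drifts $\operatorname{Tr}[\sigma(b+B(X_s)+2\sqrt{X_s}\gamma_s Q)]$ coming from the trace expansion cancel exactly with the corresponding terms in $\alpha(s,T)$; what remains from the jump part collapses to the quoted expression involving $K(s,\xi)(e^{\operatorname{Tr}[\Sigma\xi]}-1)$ after writing $K e^{\operatorname{Tr}[\Sigma\xi]} = K(e^{\operatorname{Tr}[\Sigma\xi]}-1)+K$ and absorbing the $K$-part into the $\mathbb{Q}$-compensated jump integral with respect to $\nu^{\ast}=K\nu$. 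The main technical nuisance is bookkeeping: the matrix-valued chain rule for $\partial_T R(-\Sigma(t,T))$ and the identification $\operatorname{Tr}[\sigma\alpha\Sigma X]+\operatorname{Tr}[\Sigma\alpha\sigma X] = 2\operatorname{Tr}[Q\sigma X\Sigma Q^{\top}]$ via cyclicity/symmetry are the only non-routine steps, and one must justify differentiation under the $\xi$-integral and under the vanishing-drift identity using the admissibility bounds \eqref{Constant_Jump_Term_Equation}--\eqref{Linear_Jump_Coefficient_Equation} together with the boundedness of $\sigma$ imposed in Assumption~\ref{Assumption1}.
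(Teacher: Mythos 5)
Your proposal is correct and follows essentially the same route as the paper's proof: starting from the discounted bond-price representation, introducing the $\mathbb{Q}$-drift terms via $dW=dW^{\ast}+\gamma\,dt$ and $\nu^{\ast}=K\nu$, setting the resulting absolutely continuous drift to zero, differentiating in $T$ through \eqref{A_Definition} and the explicit forms of $F$ and $R$ (with the same trace/symmetry manipulations collapsing the cross terms into $4\operatorname{Tr}[Q\,\sigma X_t\,\Sigma\,Q^{\top}]$ and the merging of the $(K-1)$ and $1$ jump contributions into $K$), and finally substituting the drift condition back into \eqref{Forward_Rates_Process_Definition}. The bookkeeping and the technical justifications you flag (differentiation under the $\xi$-integral, integrability from admissibility and Assumption \ref{Assumption1}) match what the paper does.
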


\begin{proof}
The proof can be found in the appendix.
\end{proof}

Theorem \ref{HJM_Drift_Condition_Theorem} shows that the important property of the classical HJM framework, established in \cite{hea92}, that the 
forward rates are only dependent on the volatility in an arbitrage-free market, still holds in the framework of affine processes on $S_{d}^{+}$.
\par
Next, we want to investigate how the short rate process $r_{t}, t\geq 0,$ can be represented in the current framework.

\begin{corollary}\label{Short_Rate_Process_Proposition}
 Suppose that $f\!\left(0,T\right)$, $\alpha\!\left(t,T\right)$ and $\sigma\!\left(t,T\right)$ are differentiable in $T$ for all $t \geq 0$, 
$\partial_{T} \alpha\!\left(t,T\right)$ is jointly measurable, adapted, and c\`{a}gl\`{a}d in $t$,
and $\partial_{T} \sigma\!\left(t,T\right)$ is jointly measurable, adapted, and c\`{a}gl\`{a}d in $t$. Further, it holds for all $t\geq 0$ that
\begin{equation}\label{Assumption_Forward_Rate_0_Derivative_Integrable}
 \int_{0}^{t}\! \left\vert \partial_{u} f\!\left(0,u\right) \right\vert \, du < \infty\,,\nonumber
\end{equation}
as well as
\begin{equation}\label{Assumption_Alpha_Derivative_Integrable}
 \int_{\mathbb{R}_{+}}\! \int_{\mathbb{R}_{+}}\! \left\vert \partial_{T} \alpha\!\left(t,T\right) \right\vert \, dt\, dT < \infty\,.\nonumber
\end{equation}
Then, the short rate process $\left(r_{t}\right)_{t\geq 0}$ is of the form
\begin{equation}\label{Short_Rate_Process_Equation}
 r_{t} = r_{0} + \int_{0}^{t}\! \phi\!\left(u\right) \, du + \int_{0}^{t}\!\operatorname{Tr}\!\left[\sigma\!\left(u,u\right) dX_{u}\right]\,,
\end{equation}
where
\begin{equation}\label{Phi_Definition}
\phi\!\left(u\right) \colonequals \alpha\!\left(u,u\right) + \partial_{u} f\!\left(0,u\right) + \int_{0}^{u}\! \partial_{u} \alpha\!\left(s,u\right)\, ds + \int_{0}^{u}\!\operatorname{Tr}\!\left[\partial_{u} \sigma\!\left(s,u\right) dX_{s}\right]\,.\nonumber
\end{equation}
\end{corollary}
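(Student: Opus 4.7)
The plan is to start from the identity $r_t=f(t,t)$ and extract a semimartingale decomposition in $t$ by decomposing each of the three pieces of \eqref{Forward_Rates_Process_Definition} (evaluated at $T=t$) using the fundamental theorem of calculus in the maturity variable, then interchanging the order of integration where necessary.

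First, setting $T=t$ in \eqref{Forward_Rates_Process_Definition} gives
\begin{equation*}
r_t \;=\; f(0,t) + \int_0^t \alpha(s,t)\,ds + \int_0^t \operatorname{Tr}\!\left[\sigma(s,t)\,dX_s\right].
\end{equation*}
For the deterministic initial curve, the integrability of $\partial_u f(0,u)$ yields $f(0,t)=r_0+\int_0^t \partial_u f(0,u)\,du$. For the drift piece I would use differentiability of $\alpha(s,\cdot)$ to write $\alpha(s,t)=\alpha(s,s)+\int_s^t \partial_u\alpha(s,u)\,du$, so that classical Fubini (justified by the integrability assumption on $\partial_T\alpha$) gives
\begin{equation*}
\int_0^t \alpha(s,t)\,ds \;=\; \int_0^t \alpha(u,u)\,du + \int_0^t\!\int_0^u \partial_u\alpha(s,u)\,ds\,du.
\end{equation*}

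The delicate step is the analogous decomposition of the stochastic integral. Using differentiability of $\sigma(s,\cdot)$, I would write $\sigma(s,t)=\sigma(s,s)+\int_s^t\partial_u\sigma(s,u)\,du$, so that by linearity of the trace and of the stochastic integral,
\begin{equation*}
\int_0^t \operatorname{Tr}\!\left[\sigma(s,t)\,dX_s\right] \;=\; \int_0^t \operatorname{Tr}\!\left[\sigma(s,s)\,dX_s\right] + \int_0^t \operatorname{Tr}\!\left[\left(\int_s^t \partial_u\sigma(s,u)\,du\right)dX_s\right].
\end{equation*}
The second term must be rewritten by a stochastic Fubini interchange on the simplex $\{(s,u):0\le s\le u\le t\}$:
\begin{equation*}
\int_0^t \operatorname{Tr}\!\left[\left(\int_s^t \partial_u\sigma(s,u)\,du\right)dX_s\right] \;=\; \int_0^t\!\left(\int_0^u \operatorname{Tr}\!\left[\partial_u\sigma(s,u)\,dX_s\right]\right)du.
\end{equation*}
This is the main obstacle: the continuous martingale part of $X$ from \eqref{Affine_Process_Representation_Equation_1} requires the classical stochastic Fubini theorem (e.g.\ in the form of Theorem 65, Ch.~IV of \cite{Book_Protter}), while the finite-variation jump part can be swapped pathwise using Remark \ref{Finite_Variation_Jumps_Remark} together with the compact-support argument provided by the c\`agl\`ad, locally bounded regularity of $\partial_u\sigma$. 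I would verify the needed integrability condition by using boundedness of $\sigma$ and $\partial_u\sigma$ on compacts (inherited from the measurability/c\`agl\`ad assumptions), together with Remark \ref{Finite_Variation_Jumps_Remark} for the pure-jump martingale component and the standard moment bounds for the Brownian part coming from \eqref{Affine_Process_Representation_Equation_1}.

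Combining the three decompositions and relabeling $s\leftrightarrow u$ where convenient, the $du$-integrands group together into the function $\phi(u)$ defined in the statement, and the remaining stochastic differential $\operatorname{Tr}[\sigma(u,u)\,dX_u]$ gives exactly \eqref{Short_Rate_Process_Equation}. The only non-routine bookkeeping is ensuring that $\phi(u)$ is well-defined as a semimartingale integral for each $u$, which follows from Assumption \ref{Assumption1} applied to $\partial_u\sigma$ in place of $\sigma$.
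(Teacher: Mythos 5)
Your proposal is correct and follows essentially the same route as the paper's proof, which likewise rests on the identity $r_u = f(u,u)$ combined with the classical Fubini theorem for the drift term and the stochastic Fubini theorem for the $dX$-integral; you have merely made explicit the fundamental-theorem-of-calculus decompositions in the maturity variable and the integrability checks that the paper leaves implicit. No gaps.
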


\begin{proof}
Representation \eqref{Short_Rate_Process_Equation} is a consequence of the theorem of Fubini for integrable functions 
(cf.\,\,\cite{Book_Klenke}, Chapter 14, Theorem 14.16), the stochastic Fubini theorem (cf.\,\,\cite{Book_Protter}, Chapter IV, Theorem 65) 
and the characterisation of the short rate process that holds for all $s \geq 0$
\begin{align}\label{Short_Rate_Representation}
r_{u} & \colonequals f\!\left(u,u\right) \overset{\eqref{Forward_Rates_Process_Definition}}{=} f\!\left(0,u\right) + \int_{0}^{u}\! \alpha\!\left(s,u\right) \, ds + \int_{0}^{u}\! \operatorname{Tr}\!\left[ \sigma\!\left(s,u\right) dX_{s} \right]\,.
\end{align}
\end{proof}

We now calculate the yield process 
\begin{equation}\label{Yield_Process_Definition}
 Y\!\left(t,T\right) \colonequals - \frac{\log P\left(t,T\right)}{T-t},\, 0\leq t\leq T,
\end{equation}
for $T > 0$ in the HJM framework for affine processes on $S_{d}^{+}$. 
We recall that the term ``yield curve'' is used differently in the literature. For example, in \cite{brigo06} it is a combination of simply compounded spot rates for maturities up to 
one year and annually compounded spot rates for maturities greater than one year. In this paper we will refer to the function $T \mapsto Y\!\left(t,T\right)$ as yield curve in $t$, see also Section 2.4.4 of \cite{Book_Filipovic}.
\par
Note that if $f : \mathbb{R}^{n} \rightarrow S_{d}$ for some $n,d \in \mathbb{N}$, then it is for $a, b \in \mathbb{R},\, x \in \mathbb{R}^{n}$:
\begin{equation}\label{Trace_Function_Derivative_Equation}
 \operatorname{Tr}\!\left[\int_{a}^{b} f\!\left(x\right)^{\top} \partial_{x} f\!\left(x\right)\, dx\right] = 
\frac{1}{2} \left(\left\Vert f\!\left(b\right) \right\Vert^{2} - \left\Vert f\!\left(a\right) \right\Vert^{2}\right)\,.
\end{equation}

\begin{lemma}\label{Cont_Comp_Spot_Rate_Lemma}
Let $0 \leq t < T$ and let $X$ be an affine process as in \eqref{Affine_Process_Representation_Equation_1}. Under \emph{Assumption \ref{Assumption1}} and the \emph{ELMM} $\mathbb{Q}$
 the yield for $\left[t,T\right]$ can be expressed in the compact form
\begin{align}\label{Cont_Comp_Spot_Rate_Equation}
Y\!\left(t,T\right) & = Y\!\left(0;t,T\right) + 2 \int_{0}^{t} \! \operatorname{Tr}\!\left[Q \frac{\Gamma\!\left(s,T\right) - \Gamma\!\left(s,t\right)}{T-t} Q^{\top}\right]ds\nonumber\\
& \phantom{====} + \int_{0}^{t}\! \int_{S_{d}^{+}\! \setminus \left\{ 0\right\}}\!\frac{e^{\operatorname{Tr}\left[\Sigma\left(s,T\right)\, \xi\right]} - e^{\operatorname{Tr}\left[\Sigma\left(s,t\right)\, \xi\right]}}{T-t}\,\nu^{\ast\!}\!\left(ds,d\xi\right)\nonumber\\
& \phantom{====} - \int_{0}^{t}\! \int_{S_{d}^{+}\! \setminus \left\{ 0\right\}}\! \frac{\operatorname{Tr}\left[\left(\Sigma\!\left(s,T\right) - \Sigma\!\left(s,t\right)\right)\xi\right]}{T - t}\, \mu^{X\!}\!\left(ds,d\xi\right)\nonumber\\
& \phantom{====} - 2 \int_{0}^{t} \! \operatorname{Tr}\!\left[\frac{\Sigma\!\left(s,T\right) - \Sigma\!\left(s,t\right)}{T-t} \sqrt{X_{s}} dW_{s}^{\ast} Q\right]
\end{align}
with the continuously compounded forward rate for $\left[t,T\right]$ prevailing at $0$ given by
\begin{equation}\label{yield_0_equation}
 Y\!\left(0;t,T\right) \colonequals \frac{1}{T-t}\left(\int_{t}^{T}\!f\!\left(0,u\right) du\right)
 \end{equation}
and
\begin{equation}\label{Definition_Gamma}
 \Gamma\!\left(s,t\right) \colonequals \Sigma\!\left(s,t\right)X_{s}\Sigma\!\left(s,t\right)
\end{equation}
for all $s,t \geq 0$.
\end{lemma}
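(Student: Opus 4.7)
The starting point is the identity $P(t,T)=\exp\bigl(-\int_t^T f(t,u)\,du\bigr)$, which upon taking logs and dividing by $-(T-t)$ yields
\[
Y(t,T)=\frac{1}{T-t}\int_t^T f(t,u)\,du.
\]
Plugging in the $\mathbb{Q}$-dynamics \eqref{Forward_Rate_Representation_Q} and then exchanging the order of integration in the $du\,ds$ integrals (ordinary Fubini for the drift, stochastic Fubini in the form of Chapter IV, Theorem~65 of \cite{Book_Protter} for the $dW^{\ast}$ and $(\mu^X-\nu^\ast)$ integrals; the boundedness assumptions on $\sigma$ in Assumption~\ref{Assumption1} together with Remark~\ref{Finite_Variation_Jumps_Remark} guarantee the required integrability) reduces everything to computing $\int_t^T\!\bigl[\cdots\bigr]du$ inside each $ds$–integral. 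The $f(0,u)$ term produces $Y(0;t,T)$ as in \eqref{yield_0_equation} after dividing by $T-t$.

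The key algebraic simplifications come from the relation $\sigma(s,u)=-\partial_u\Sigma(s,u)$, which follows directly from \eqref{Sigma_Definition}, together with $\int_t^T\sigma(s,u)\,du=\Sigma(s,t)-\Sigma(s,T)$. For the quadratic Wishart-type drift I rewrite, using $\int_s^u\sigma(s,v)\,dv=-\Sigma(s,u)$,
\[
4\operatorname{Tr}\!\Bigl[Q\,\sigma(s,u)X_s\!\int_s^u\!\sigma(s,v)\,dv\,Q^\top\Bigr]=4\operatorname{Tr}\!\bigl[\partial_u\Sigma(s,u)\,X_s\,\Sigma(s,u)\,Q^\top Q\bigr].
\]
Since $\Sigma(s,u)$, $X_s$ and $Q^\top Q$ are symmetric, an application of \eqref{Trace_Function_Derivative_Equation} (or equivalently a direct use of the product rule, exploiting $\operatorname{Tr}[A^\top B]=\operatorname{Tr}[B^\top A]$) gives
\[
\int_t^T 4\operatorname{Tr}\!\bigl[\partial_u\Sigma(s,u)X_s\Sigma(s,u)Q^\top Q\bigr]du = 2\operatorname{Tr}\!\bigl[Q\bigl(\Gamma(s,T)-\Gamma(s,t)\bigr)Q^\top\bigr],
\]
with $\Gamma$ as in \eqref{Definition_Gamma}. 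This produces the first line on the right-hand side of \eqref{Cont_Comp_Spot_Rate_Equation} after dividing by $T-t$.

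For the jump drift, the identity $\operatorname{Tr}[\sigma(s,u)\xi]\,e^{\operatorname{Tr}[\Sigma(s,u)\xi]}=-\partial_u e^{\operatorname{Tr}[\Sigma(s,u)\xi]}$ gives
\[
\int_t^T\!\operatorname{Tr}[\sigma(s,u)\xi]\bigl(e^{\operatorname{Tr}[\Sigma(s,u)\xi]}-1\bigr)du=-\bigl(e^{\operatorname{Tr}[\Sigma(s,T)\xi]}-e^{\operatorname{Tr}[\Sigma(s,t)\xi]}\bigr)+\operatorname{Tr}\!\bigl[(\Sigma(s,T)-\Sigma(s,t))\xi\bigr].
\]
Multiplied by $-K(s,\xi)\bigl(m(d\xi)+\operatorname{Tr}[X_s\mu(d\xi)]\bigr)ds$ this contributes, via \eqref{Q_Compensator_Equation_2}, the full $\nu^{\ast}$–integral $\int_0^t\!\int\frac{e^{\operatorname{Tr}[\Sigma(s,T)\xi]}-e^{\operatorname{Tr}[\Sigma(s,t)\xi]}}{T-t}\nu^{\ast}(ds,d\xi)$ plus a $\nu^{\ast}$–integral of $-\operatorname{Tr}[(\Sigma(s,T)-\Sigma(s,t))\xi]/(T-t)$. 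The compensated jump term in \eqref{Forward_Rate_Representation_Q}, after the $du$–integration, produces $-\int_0^t\!\int\frac{\operatorname{Tr}[(\Sigma(s,T)-\Sigma(s,t))\xi]}{T-t}(\mu^X-\nu^\ast)(ds,d\xi)$; adding the two and using $\nu^\ast+(\mu^X-\nu^\ast)=\mu^X$ collapses the $\nu^\ast$–pieces with the compensating term into a single $\mu^X$–integral, matching the third line of \eqref{Cont_Comp_Spot_Rate_Equation}.

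Finally, the Brownian term $2\int_0^t\operatorname{Tr}[\sigma(s,u)\sqrt{X_s}\,dW^\ast_s Q]$, integrated in $u$ via stochastic Fubini, gives
\[
2\int_0^t\operatorname{Tr}\!\bigl[(\Sigma(s,t)-\Sigma(s,T))\sqrt{X_s}\,dW^\ast_s Q\bigr]=-2\int_0^t\operatorname{Tr}\!\bigl[(\Sigma(s,T)-\Sigma(s,t))\sqrt{X_s}\,dW^\ast_s Q\bigr],
\]
yielding the last line of \eqref{Cont_Comp_Spot_Rate_Equation} after division by $T-t$. The only genuinely delicate step is the symmetry argument that produces the factor $2$ in the quadratic-drift simplification; everything else is bookkeeping of signs under the identifications $\partial_u\Sigma=-\sigma$ and $\int_t^T\sigma(s,u)du=\Sigma(s,t)-\Sigma(s,T)$, together with justification of the Fubini interchanges from Assumption~\ref{Assumption1}.
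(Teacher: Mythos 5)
Your proposal is correct and follows essentially the same route as the paper: write $Y(t,T)=\frac{1}{T-t}\int_t^T f(t,u)\,du$, substitute the $\mathbb{Q}$-dynamics \eqref{Forward_Rate_Representation_Q}, interchange the $du$ and $ds$ integrations via the ordinary and stochastic Fubini theorems, and integrate out $u$ using $\sigma(s,u)=-\partial_u\Sigma(s,u)$ together with \eqref{Trace_Function_Derivative_Equation} for the quadratic drift (the paper packages this last step as its auxiliary identity \eqref{Help_Equation_Lemma_Cont_Comp_Spot_2}, but the computation is identical). All signs, the factor $2$ in the $\Gamma$-term, and the recombination of the $\nu^{\ast}$- and $(\mu^{X}-\nu^{\ast})$-integrals into a single $\mu^{X}$-integral check out.
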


\begin{proof}
Let $0 \leq t < T$. 
Note that for $0\leq s \leq t\leq T$ it holds
\begin{align}\label{Help_Equation_Lemma_Cont_Comp_Spot_1}
 \int_{t}^{T} \! \sigma\!\left(s,u\right)\, du 
& \overset{(\ref{Sigma_Definition})}{=} - \left(\Sigma\!\left(s,T\right) - \Sigma\!\left(s,t\right)\right)\,.
\end{align}
Further, for some $a,b,s \geq 0$ it is
\begin{align}\label{Help_Equation_Lemma_Cont_Comp_Spot_2}
& \int_{a}^{b} \! \operatorname{Tr}\!\left[Q\, \sigma\!\left(s,u\right) X_{s}\, \Sigma\!\left(s,u\right)\, Q^{\top}\right] du \nonumber\\
& \phantom{===}\underset{\phantom{(\ref{Trace_Function_Derivative_Equation})}}{\overset{(\ref{Sigma_Definition})}{=}} - \int_{a}^{b} \! \operatorname{Tr}\!\left[Q\, \partial_{u}\Sigma\!\left(s,u\right) X_{s}\, \Sigma\!\left(s,u\right)\, Q^{\top}\right] du\nonumber\\
& \phantom{===}\overset{\phantom{(\ref{Trace_Function_Derivative_Equation})}}{=} - \int_{a}^{b} \! \operatorname{Tr}\!\left[\left(Q\, \Sigma\!\left(s,u\right) \sqrt{X_{s}}\right)^{\top}\! \partial_{u}\!\left(\!Q\, \Sigma\!\left(s,u\right) \sqrt{X_{s}}\right)\right] du\nonumber\\
& \phantom{===}\overset{(\ref{Trace_Function_Derivative_Equation})}{=} - \frac{1}{2}\left(\left\Vert Q\, \Sigma\!\left(s,b\right)\sqrt{X_{s}}\right\Vert^{2} - \left\Vert Q\, \Sigma\!\left(s,a\right)\sqrt{X_{s}}\right\Vert^{2}\right)\nonumber\\
& \phantom{===}\underset{\phantom{(\ref{Trace_Function_Derivative_Equation})}}{\overset{(\ref{Definition_Gamma})}{=}} - \frac{1}{2}\operatorname{Tr}\!\left[Q\left(\Gamma\!\left(s,b\right) - \Gamma\!\left(s,a\right)\right)Q^{\top}\right]\,.
\end{align}
Then by applying the Fubini theorems, the yield for $\left[t,T\right]$ is 
\begin{align}
 Y\!\left(t,\!T\right) & \overset{\phantom{\eqref{Forward_Rate_Representation_Q}}}{=} \frac{1}{T-t}\left(\int_{t}^{T}\!f\!\left(t,u\right) du\right)\nonumber\\
& \overset{(\ref{Forward_Rate_Representation_Q})}{=} \int_{t}^{T}\frac{f\!\left(0,\!u\right)}{T-t}\, du - \frac{4}{T-t} \int_{t}^{T}\! \int_{0}^{t}\! \operatorname{Tr}\!\left[Q\, \sigma\!\left(s,u\right) X_{s} \Sigma\!\left(s,u\right) Q^{\top}\right] ds \, du\nonumber\\
& \phantom{====} + \frac{1}{T\!-\!t}\int_{t}^{T}\! \int_{0}^{t}\!\int_{S_{d}^{+}\! \setminus \left\{ 0\right\}}\!\operatorname{Tr}\!\left[\sigma\!\left(s,u\right) \xi\right] \left(\mu^{X\!} - \nu^{\ast\!}\right)\!\left(ds,d\xi\right)\, du\nonumber\\
& \phantom{====} - \frac{1}{T\!-\!t}\int_{t}^{T}\! \int_{0}^{t}\! \int_{S_{d}^{+}\! \setminus \left\{ 0\right\}}\! \operatorname{Tr}\!\left[\sigma\!\left(s,T\right) \xi\right]\left(e^{\operatorname{Tr}\left[\Sigma\left(s,T\right)\,\xi\right]} - 1\right) \nu^{\ast\!}\!\left(ds,d\xi\right)\, du\nonumber\\
& \phantom{====} + \frac{2}{T\!-\!t} \int_{t}^{T}\! \int_{0}^{t} \operatorname{Tr}\!\left[\sigma\!\left(s,u\right) \sqrt{X_{s}}\, dW_{s}^{\ast} Q\right] du\nonumber\\
& \overset{\eqref{yield_0_equation}}{=} Y\!\left(0;t,T\right) - \frac{4}{T-t} \int_{0}^{t}\! \int_{t}^{T} \operatorname{Tr}\!\left[Q\, \sigma\!\left(s,u\right) X_{s} \Sigma\!\left(s,u\right) Q^{\top}\right] du \, ds\nonumber\\
& \phantom{====} - \frac{1}{T\!-\!t}\int_{0}^{t}\! \int_{t}^{T}\!\int_{S_{d}^{+}\! \setminus \left\{ 0\right\}}\!\partial_{u} \operatorname{Tr}\!\left[\Sigma\!\left(s,u\right)\xi\right]\left(\mu^{X\!} - \nu^{\ast\!}\right)\!\left(du,d\xi\right)\, ds\nonumber\\
& \phantom{====} + \frac{1}{T\!-\!t}\int_{0}^{t}\! \int_{t}^{T}\!\int_{S_{d}^{+}\! \setminus \left\{ 0\right\}}\! \partial_{u} e^{\operatorname{Tr}\left[\Sigma\left(s,u\right)\,\xi\right]}\,\nu^{\ast\!}\!\left(du,d\xi\right)\,ds \nonumber\\
& \phantom{====} - \frac{1}{T\!-\!t}\int_{0}^{t}\! \int_{t}^{T}\!\int_{S_{d}^{+}\! \setminus \left\{ 0\right\}}\! \partial_{u}\operatorname{Tr}\!\left[\Sigma\!\left(s,u\right)\xi\right]\, \nu^{\ast\!}\!\left(du,d\xi\right)\,ds \nonumber\\
& \phantom{====}  + \frac{2}{T\!-\!t} \int_{0}^{t}\! \operatorname{Tr}\!\left[ \int_{t}^{T} \! \sigma\!\left(s,u\right) du \, \sqrt{X_{s}}\, dW_{s}^{\ast} Q\right]\nonumber\\ 
& \overset{\eqref{Help_Equation_Lemma_Cont_Comp_Spot_1}}{\underset{(\ref{Help_Equation_Lemma_Cont_Comp_Spot_2})}{=}} Y\!\left(0;t,T\right) + 2 \int_{0}^{t} \! \operatorname{Tr}\!\left[Q \frac{\Gamma\!\left(s,T\right) - \Gamma\!\left(s,t\right)}{T-t} Q^{\top}\right]ds \nonumber\\
& \phantom{====} + \int_{0}^{t}\! \int_{S_{d}^{+}\! \setminus \left\{ 0\right\}}\! \frac{e^{\operatorname{Tr}\left[\Sigma\left(s,T\right)\, \xi\right]} - e^{\operatorname{Tr}\left[\Sigma\left(s,t\right)\, \xi\right]}}{T-t}\, \nu^{\ast\!}\!\left(ds,d\xi\right)\nonumber\\
& \phantom{====} - \int_{0}^{t}\! \int_{S_{d}^{+}\! \setminus \left\{ 0\right\}}\! \frac{\operatorname{Tr}\left[\left(\Sigma\!\left(s,T\right) - \Sigma\!\left(s,t\right)\right)\xi\right]}{T - t}\, \mu^{X\!}\!\left(ds,d\xi\right)\nonumber\\
& \phantom{====} - 2 \int_{0}^{t} \! \operatorname{Tr}\!\left[\frac{\Sigma\!\left(s,T\right) - \Sigma\!\left(s,t\right)}{T-t} \sqrt{X_{s}}\, dW_{s}^{\ast} Q\right]\,.\nonumber
\end{align}
\end{proof}

\begin{corollary}\label{Remark_Yield}
 By \eqref{Affine_Process_Representation_Equation_2}, \eqref{Q_Compensator_Equation_2}, and \eqref{Cont_Comp_Spot_Rate_Equation} we obtain that
\begin{align}\label{Remark_Yield_Equation_0}
Y\!\left(t,T\right) & = Y\!\left(0;t,T\right) + \int_{0}^{t} \!\left\{ 2 \operatorname{Tr}\!\left[Q \frac{\Gamma\!\left(s,T\right) - \Gamma\!\left(s,t\right)}{T-t} Q^{\top}\right] \right.\nonumber\\
& \phantom{====} \left. + \int_{S_{d}^{+}\! \setminus \left\{ 0\right\}}\!\frac{M\!\left(s,t,T,\xi\right) K\!\left(s,\xi\right)}{T-t} \left(m\!\left(d\xi\right) + \operatorname{Tr}\!\left[X_{s}\mu\!\left(d\xi\right)\right]\right) \right\} ds\nonumber\\
& \phantom{====} - \int_{0}^{t}\! \int_{S_{d}^{+}\! \setminus \left\{ 0\right\}}\! \frac{\operatorname{Tr}\left[\left(\Sigma\!\left(s,T\right) - \Sigma\!\left(s,t\right)\right)\xi\right]}{T - t}\left(\mu^{X\!}-\nu^{\ast\!}\right)\!\left(ds,d\xi\right)\nonumber\\
& \phantom{====} - 2 \int_{0}^{t} \! \operatorname{Tr}\!\left[\frac{\Sigma\!\left(s,T\right) - \Sigma\!\left(s,t\right)}{T-t} \sqrt{X_{s}} dW_{s}^{\ast} Q\right]
\end{align}
with 
\begin{equation}\label{Definition_M}
 M\!\left(s,t,T,\xi\right) \colonequals e^{\operatorname{Tr}\left[\Sigma\left(s,T\right)\, \xi\right]} - e^{\operatorname{Tr}\left[\Sigma\left(s,t\right)\xi\right]} 
- \operatorname{Tr}\left[\left(\Sigma\!\left(s,T\right) - \Sigma\!\left(s,t\right)\right)\xi\right].
\end{equation}
\end{corollary}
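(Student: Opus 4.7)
The plan is to start from the representation \eqref{Cont_Comp_Spot_Rate_Equation} obtained in Lemma \ref{Cont_Comp_Spot_Rate_Lemma} and to regroup the two jump integrals into the sum of a compensated jump integral and a drift expressed against $\nu^{\ast}$, after which I substitute the explicit form of $\nu^{\ast}$ provided by \eqref{Affine_Process_Representation_Equation_2} and \eqref{Q_Compensator_Equation_2}.

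First I would split the $\mu^{X}$-integral appearing in \eqref{Cont_Comp_Spot_Rate_Equation} by adding and subtracting its compensator, namely write
\begin{equation*}
\int_{0}^{t}\!\!\int_{S_{d}^{+}\setminus\{0\}}\!\frac{\operatorname{Tr}[(\Sigma(s,T)-\Sigma(s,t))\xi]}{T-t}\,\mu^{X\!}(ds,d\xi)
=\int_{0}^{t}\!\!\int\!\frac{\operatorname{Tr}[(\Sigma(s,T)-\Sigma(s,t))\xi]}{T-t}(\mu^{X\!}-\nu^{\ast\!})(ds,d\xi)+\int_{0}^{t}\!\!\int\!\frac{\operatorname{Tr}[(\Sigma(s,T)-\Sigma(s,t))\xi]}{T-t}\,\nu^{\ast\!}(ds,d\xi).
\end{equation*}
This step is legitimate because under Assumption \ref{Assumption1} the map $\xi\mapsto\operatorname{Tr}[\Sigma(s,T)\xi]-\operatorname{Tr}[\Sigma(s,t)\xi]$ is linear in $\xi$ with uniformly bounded coefficients on $[0,T]$, and Remark \ref{Finite_Variation_Jumps_Remark} combined with \eqref{Finite_Variation_Jumps_Remark_Equation} guarantees that the integrals with respect to $\mu^{X}$ and $\nu^{\ast}$ are individually finite.

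Next I would combine the resulting $\nu^{\ast}$-integral with the existing
$\int\frac{e^{\operatorname{Tr}[\Sigma(s,T)\xi]}-e^{\operatorname{Tr}[\Sigma(s,t)\xi]}}{T-t}\,\nu^{\ast}(ds,d\xi)$
term of \eqref{Cont_Comp_Spot_Rate_Equation}. The integrands add algebraically into precisely the quantity $M(s,t,T,\xi)/(T-t)$ defined in \eqref{Definition_M}. Finally I would substitute $\nu^{\ast}(ds,d\xi)=K(s,\xi)\,(m(d\xi)+\operatorname{Tr}[X_{s}\,\mu(d\xi)])\,ds$, which is just \eqref{Q_Compensator_Equation_2} together with the explicit form \eqref{Affine_Process_Representation_Equation_2} of $\nu$. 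Collecting the terms reproduces the claimed expression \eqref{Remark_Yield_Equation_0}, with the $(\mu^{X}-\nu^{\ast})$-integral coming from the compensated part of the split and the Brownian and $\Gamma$-terms carried over unchanged from \eqref{Cont_Comp_Spot_Rate_Equation}.

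The only place where one needs to be slightly careful is the add-and-subtract step: one has to check that the linear integrand $\operatorname{Tr}[(\Sigma(s,T)-\Sigma(s,t))\xi]$ is absolutely $\nu^{\ast}$-integrable on $[0,t]\times(S_{d}^{+}\!\setminus\{0\})$ so that the splitting does not produce ill-defined cancellations. This follows from the admissibility conditions \eqref{Constant_Jump_Term_Equation} and \eqref{Linear_Jump_Coefficient_Equation} on $m$ and $\mu$ (which control the small-jump behaviour), from the boundedness of $\Sigma$ implied by Assumption \ref{Assumption1}, and from the integrability of $K$ entering the Radon--Nikodym density \eqref{Radon}; these conditions together guarantee that both $\int K\,\operatorname{Tr}[\Sigma\,\xi]\,\nu$-type integrals and the $\mu^{X}$-integral in the statement are finite, so the algebraic identity goes through without further ado.
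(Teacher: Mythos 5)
Your proposal is correct and follows essentially the same route as the paper: both proofs add and subtract the $\nu^{\ast}$-integral of the linear term $\operatorname{Tr}[(\Sigma(s,T)-\Sigma(s,t))\xi]/(T-t)$ so as to compensate the $\mu^{X}$-integral and simultaneously absorb the linear term into the exponential-difference integrand to form $M$, and then substitute $\nu^{\ast}(ds,d\xi)=K(s,\xi)\left(m(d\xi)+\operatorname{Tr}[X_{s}\mu(d\xi)]\right)ds$ via \eqref{Q_Compensator_Equation_2} and \eqref{Affine_Process_Representation_Equation_2}. Your additional remarks on the integrability needed to justify the splitting are consistent with what the paper invokes (finite variation of the jumps and Assumption \ref{Assumption1}).
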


\begin{proof}
 Let $0 \leq t < T$. Then, we have 
\small
\begin{align*}
Y\!\left(t,T\right) & \overset{\eqref{Cont_Comp_Spot_Rate_Equation}}{=} Y\!\left(0;t,T\right) + 2 \int_{0}^{t} \! \operatorname{Tr}\!\left[Q \frac{\Gamma\!\left(s,T\right) - \Gamma\!\left(s,t\right)}{T-t} Q^{\top}\right]ds\nonumber\\
& \phantom{====} + \int_{0}^{t}\! \int_{S_{d}^{+}\! \setminus \left\{ 0\right\}}\!\frac{e^{\operatorname{Tr}\left[\Sigma\left(s,T\right)\, \xi\right]} - e^{\operatorname{Tr}\left[\Sigma\left(s,t\right)\, \xi\right]}}{T-t}\,\nu^{\ast\!}\!\left(ds,d\xi\right)\nonumber\\
& \phantom{====} - \int_{0}^{t}\! \int_{S_{d}^{+}\! \setminus \left\{ 0\right\}}\! \frac{\operatorname{Tr}\left[\left(\Sigma\!\left(s,T\right) - \Sigma\!\left(s,t\right)\right)\xi\right]}{T - t}\, \mu^{X\!}\!\left(ds,d\xi\right)\nonumber\\
& \phantom{====} - 2 \int_{0}^{t} \! \operatorname{Tr}\!\left[\frac{\Sigma\!\left(s,T\right) - \Sigma\!\left(s,t\right)}{T-t} \sqrt{X_{s}} dW_{s}^{\ast} Q\right]\nonumber\\
& \overset{\phantom{\eqref{Cont_Comp_Spot_Rate_Equation}}}{=} Y\!\left(0;t,T\right) + 2 \int_{0}^{t} \! \operatorname{Tr}\!\left[Q \frac{\Gamma\!\left(s,T\right) - \Gamma\!\left(s,t\right)}{T-t} Q^{\top}\right]ds\nonumber\\
& \phantom{====} + \int_{0}^{t}\! \int_{S_{d}^{+}\! \setminus \left\{ 0\right\}}\!\frac{e^{\operatorname{Tr}\left[\Sigma\left(s,T\right)\, \xi\right]} - e^{\operatorname{Tr}\left[\Sigma\left(s,t\right)\, \xi\right]} - \operatorname{Tr}\left[\left(\Sigma\!\left(s,T\right) - \Sigma\!\left(s,t\right)\right)\xi\right]}{T-t}\, \nu^{\ast\!}\!\left(ds,d\xi\right)\nonumber\\
& \phantom{====} - \int_{0}^{t}\! \int_{S_{d}^{+}\! \setminus \left\{ 0\right\}}\! \frac{\operatorname{Tr}\left[\left(\Sigma\!\left(s,T\right) - \Sigma\!\left(s,t\right)\right)\xi\right]}{T - t}\left(\mu^{X\!}-\nu^{\ast\!}\right)\!\left(ds,d\xi\right)\nonumber\\
& \phantom{====} - 2 \int_{0}^{t} \! \operatorname{Tr}\!\left[\frac{\Sigma\!\left(s,T\right) - \Sigma\!\left(s,t\right)}{T-t} \sqrt{X_{s}} dW_{s}^{\ast} Q\right]\nonumber\\
& \underset{\eqref{Definition_M}}{\overset{\eqref{Affine_Process_Representation_Equation_2}}{=}}Y\!\left(0;t,T\right) +  \int_{0}^{t} \!\left\{ 2 \operatorname{Tr}\!\left[Q \frac{\Gamma\!\left(s,T\right) - \Gamma\!\left(s,t\right)}{T-t} Q^{\top}\right] \right.\nonumber\\
& \phantom{====} \left. + \int_{S_{d}^{+}\! \setminus \left\{ 0\right\}}\!\frac{M\!\left(s,t,T,\xi\right) K\!\left(s,\xi\right)}{T-t} \left(m\!\left(d\xi\right) + \operatorname{Tr}\!\left[X_{s}\mu\!\left(d\xi\right)\right]\right) \right\} ds\nonumber\\
& \phantom{====} - \int_{0}^{t}\! \int_{S_{d}^{+}\! \setminus \left\{ 0\right\}}\! \frac{\operatorname{Tr}\left[\left(\Sigma\!\left(s,T\right) - \Sigma\!\left(s,t\right)\right)\xi\right]}{T - t}\left(\mu^{X\!}-\nu^{\ast\!}\right)\!\left(ds,d\xi\right)\nonumber\\
& \phantom{====} - 2 \int_{0}^{t} \! \operatorname{Tr}\!\left[\frac{\Sigma\!\left(s,T\right) - \Sigma\!\left(s,t\right)}{T-t} \sqrt{X_{s}} dW_{s}^{\ast} Q\right].
\end{align*}
\normalsize
\end{proof}

\section{Long-Term Yield in an Affine HJM Setting on $S_{d}^{+}$}\label{Long_Term_Yield_in_an_Affine_HJM_Setting}
The expression ``long-term yield'' is subject to different interpretations in the literature. For instance, the European Central Bank 
understands the market yields of government bonds with time to maturity close to 10 years as long-term interest rates (cf.\,\,\cite{ECB2013}), 
whereas in \cite{Shiller79} also high-grade bonds with time to maturity longer than 20 years are examined to investigate 
long-term yields. In \cite{Yao2000} it is pointed out that for the valuation of some financial securities yield curves with maturities up to 100 years 
are necessary. Here we interpret ``long-term yield`` as the yield with time to maturity going to infinity. This approach, 
adopted by \cite{bh2012}, \cite{Book_Carmona}, \cite{dybvig96}, \cite{Karoui97}, is useful for modelling 
interest rates within a long-time horizon because the asymptotic behaviour can give information about the shape of the yield curve in the 
long run where only few empirical data is available. Here we study the asymptotic behaviour of the 
long-term yield in the affine HJM setting, introduced in Section \ref{Affine_HJM_Framework}.
\par
Throughout this section in the setting outlined in Section \ref{Affine_Processes} we assume directly that $\mathbb{P}$ is an 
ELMM for $\frac{P\left(t,T\right)}{\beta_{t}}$, $t \in \left[0,T\right]$, for all $T > 0$. 
More precisely, $X$ is a conservative affine process on $S_{d}^{+}$, $d \geq 2$, with representation \eqref{Affine_Process_Representation_Equation_1}, \eqref{Affine_Process_Representation_Equation_2}
 on $(\Omega, \mathcal{F} ,\left(\mathcal{F}_t \right)_{t \geq 0}, \mathbb{P})$ and the yield takes the form \eqref{Remark_Yield_Equation_0}, 
where we write $\nu$ instead of $\nu^{\ast}$ for the sake of simplicity.


\begin{assumption}\label{Assumption2}
Let $\Sigma\!\left(s,t\right)$ be defined as in \eqref{Sigma_Definition} for all $0 \leq s \leq t$ and $W$ a matrix variate Brownian motion.
There exists a progressively measurable process $w \in L\!\left(W\right)$ with values in $S_{d}^{+}$ such that 
 for every $i,j \in \left\{1,\dots,d\right\}$,  
$w_{ij}$ is a c\`{a}dl\`{a}g process
with 
\begin{equation}\label{Assumption2_w}
 \frac{1}{\sqrt{t}} \left\vert \Sigma\!\left(s,t\right)_{ij} \right\vert \leq w_{ij}\!\left(s\right)\ \,\mathbb{P}\text{-a.s.}
\end{equation}
for all $0 \leq s \leq t$ and $t\neq 0$.
\end{assumption}

\begin{definition}\label{LongTermYield}
 The \emph{long-term yield} $\left(\ell_{t}\right)_{t \geq 0}$ is the process defined by
\begin{equation}\label{LongTermYieldEqu}
 \ell_{t} \colonequals \lim\limits_{T \rightarrow \infty} Y\!\left(t,T\right),
\end{equation}
where $Y\!\left(t,T\right), t \in \left[0,T\right]$, is the yield process for $T \geq 0$ given by equation \eqref{Yield_Process_Definition}.
\end{definition}

\begin{definition}\label{LongTermDrift}
If the forward rate process is defined as in \eqref{Forward_Rates_Process_Definition}, the \emph{long-term drift} $\mu_{\infty}\!\left(t\right), t \geq 0$, is the process on $\mathcal{M}_{d}$ given by
\begin{equation}\label{LongTermDriftEqu}
 \mu_{\infty}\!\left(t\right) \colonequals \lim\limits_{T \rightarrow \infty} \frac{\Gamma\!\left(t,T\right)}{T-t} = \lim\limits_{T \rightarrow \infty} \frac{\Gamma\!\left(t,T\right)}{T}\ \,\mathbb{P}\text{-a.s.}
\end{equation}
for all $t\geq 0$, where $\Gamma\!\left(t,T\right), t \in \left[0,T\right]$, is introduced in \eqref{Definition_Gamma} for every $T \geq 0$.
\par
\noindent
Furthermore, the \emph{long-term volatility} $\sigma_{\infty}\!\left(t\right), t \geq 0$, is the process on $\mathcal{M}_{d}$ given by
\begin{equation}\label{LongTermVolaEqu}
 \sigma_{\infty}\!\left(t\right) \colonequals \lim\limits_{T \rightarrow \infty} \frac{\Sigma\!\left(t,T\right)}{T-t} = \lim\limits_{T \rightarrow \infty} \frac{\Sigma\!\left(t,T\right)}{T}\ \,\mathbb{P}\text{-a.s.}
\end{equation}
for all $t\geq 0$, where $\Sigma\!\left(t,T\right), t \in \left[0,T\right]$, is introduced in \eqref{Sigma_Definition} for every $T \geq 0$.
\end{definition}

Here we are supposing that the limits \eqref{LongTermYieldEqu}, \eqref{LongTermDriftEqu} and \eqref{LongTermVolaEqu} are well-defined.
The long-term yield can be characterised as an integral of $\mu_{\infty}$ and $\sigma_{\infty}$ by using the 
following results.

\begin{proposition}\label{LongTermYield0}
 Let $0 \leq t \leq T$. The \emph{long-term yield} at $0$ is
\begin{equation}\label{ContForwardPrev0}
 \lim\limits_{T \rightarrow \infty} Y\!\left(0;t,T\right) = \lim\limits_{T \rightarrow \infty} Y\!\left(0,T\right) = \ell_{0}\ \,\mathbb{P}\text{-a.s.}\nonumber
\end{equation}
\end{proposition}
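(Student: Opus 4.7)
The plan is to reduce the statement to simple algebra: express $Y(0;t,T)$ in terms of $Y(0,T)$ by splitting the integral $\int_t^T = \int_0^T - \int_0^t$, and then let $T \to \infty$. The existence of $\ell_0 = \lim_{T \to \infty} Y(0,T)$ is guaranteed by Definition \ref{LongTermYield} applied at $t=0$, so the only issue is to show the ``shift by $t$'' is asymptotically negligible.

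Concretely, since $P(0,T) = \exp\!\left(-\int_0^T f(0,u)\,du\right)$, the yield at $0$ is
\begin{equation*}
Y(0,T) = -\frac{\log P(0,T)}{T} = \frac{1}{T} \int_0^T f(0,u)\,du,
\end{equation*}
so using definition \eqref{yield_0_equation} one has
\begin{equation*}
Y(0;t,T) = \frac{1}{T-t}\left(\int_0^T f(0,u)\,du - \int_0^t f(0,u)\,du\right) = \frac{T}{T-t}\,Y(0,T) - \frac{1}{T-t}\int_0^t f(0,u)\,du.
\end{equation*}
For fixed $t \geq 0$, the integrability hypotheses built into Assumption \ref{Assumption1} (ensuring the forward rate process is well defined at maturity $t$) give $\int_0^t |f(0,u)|\,du < \infty$, hence the second term tends to $0$ as $T\to\infty$. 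The factor $\tfrac{T}{T-t}$ converges to $1$, and by Definition \ref{LongTermYield} applied with $t=0$ the first factor $Y(0,T)$ converges $\mathbb{P}$-a.s. to $\ell_0$. Taking the limit termwise yields $\lim_{T \to \infty} Y(0;t,T) = \ell_0$ and also $\lim_{T \to \infty} Y(0,T) = \ell_0$.

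There is essentially no obstacle here: the statement is a deterministic consequence of the fact that Cesàro-type averages are insensitive to a bounded shift of the domain of integration. I would simply present the two-line algebraic identity above, invoke the finiteness of $\int_0^t f(0,u)\,du$, and conclude by letting $T\to\infty$ in each of the three terms. The only point worth flagging explicitly in the write-up is that $\ell_0$ is assumed to exist (as part of the well-posedness of \eqref{LongTermYieldEqu} for $t=0$); everything else is immediate.
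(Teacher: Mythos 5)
Your proof is correct, and it is essentially the intended argument: the paper itself gives no proof here but simply cites Proposition 3.3 of \cite{bh2012}, which carries out the same computation of writing $Y\!\left(0;t,T\right)$ as $\frac{T}{T-t}\,Y\!\left(0,T\right)$ minus an asymptotically vanishing term $\frac{1}{T-t}\int_{0}^{t}f\!\left(0,u\right)du$. Your explicit flagging of the finiteness of $\int_{0}^{t}\left\vert f\!\left(0,u\right)\right\vert du$ and of the assumed existence of $\ell_{0}$ covers the only points that need care.
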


\begin{proof}
 Cf.\,\,Proposition 3.3 of \cite{bh2012}.
\end{proof}

\begin{proposition}\label{LongTermVola0}
Under \emph{Assumption \ref{Assumption1}} and \emph{\ref{Assumption2}}, it holds for all $t \geq 0$:
\begin{equation}\label{Prop_LongTermVola_Equ}
 \lim\limits_{T \rightarrow \infty} 2\,\int\limits_{0}^{t}\! \operatorname{Tr}\!\left[\frac{\Sigma\!\left(s,T\right) - \Sigma\!\left(s,t\right)}{T-t} \sqrt{X_{s}}\, dW_{s}\, Q\right] = 2 \int\limits_{0}^{t} \!  \operatorname{Tr}\!\left[\sigma_{\infty}\!\left(s\right) \sqrt{X_{s}}\, dW_{s}\,Q\right],
\end{equation}
where $\sigma_{\infty}\!\left(s\right), s \geq 0$, is the long-term volatility process defined by equation \eqref{LongTermVolaEqu}, $\Sigma\!\left(s,t\right)$, $s \geq 0$, is defined for all $t \geq 0$ as in \eqref{Sigma_Definition}, and the convergence in \eqref{Prop_LongTermVola_Equ} is uniform on compacts in probability (ucp).
\end{proposition}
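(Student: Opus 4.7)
The plan is to reduce the left-hand side of \eqref{Prop_LongTermVola_Equ} minus its proposed limit to a single stochastic integral and then apply the dominated convergence theorem for stochastic integrals (see, e.g., Protter, Chapter IV, Theorem 32), which yields ucp convergence. Concretely, set
\begin{equation*}
 H^{T}(s) \colonequals 2\left(\frac{\Sigma(s,T)-\Sigma(s,t)}{T-t}-\sigma_{\infty}(s)\right),\qquad 0\leq s\leq t,
\end{equation*}
so that the claim is equivalent to $\int_{0}^{t}\operatorname{Tr}[H^{T}(s)\sqrt{X_{s}}\,dW_{s}\,Q]\to 0$ ucp as $T\to\infty$. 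Using the cyclic property of the trace, this integral can be written as the finite sum $\sum_{i,j=1}^{d}\int_{0}^{t}(Q H^{T}(s)\sqrt{X_{s}})_{ji}\,dW_{ij}(s)$ of scalar stochastic integrals against one-dimensional Brownian motions, so it suffices to check pointwise convergence and uniform domination componentwise.

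For pointwise convergence, I would argue that for every $s\in[0,t]$ a.s.,
\begin{equation*}
 \frac{\Sigma(s,T)-\Sigma(s,t)}{T-t}=\frac{\Sigma(s,T)}{T-t}-\frac{\Sigma(s,t)}{T-t}\;\longrightarrow\;\sigma_{\infty}(s)
\end{equation*}
as $T\to\infty$, where the first term converges to $\sigma_{\infty}(s)$ by \eqref{LongTermVolaEqu} and the second tends to $0$ since $t$ is fixed. Hence $H^{T}(s)\to 0$ a.s.\ for each $s$, and after multiplication by the (nonrandom in $T$) factors $\sqrt{X_{s}}$ and $Q$ the componentwise integrands converge to $0$ pointwise in $s$ a.s.

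For the domination, I would use Assumption \ref{Assumption2}. Componentwise,
\begin{equation*}
 \bigl|(\Sigma(s,T)-\Sigma(s,t))_{ij}\bigr|\leq \sqrt{T}\,w_{ij}(s)+\sqrt{t}\,w_{ij}(s),
\end{equation*}
hence for any $T\geq 2t$ one has $|(\Sigma(s,T)-\Sigma(s,t))_{ij}/(T-t)|\leq C_{t}\,w_{ij}(s)$ for a deterministic constant $C_{t}$, and the same type of bound holds for $\sigma_{\infty}(s)_{ij}$ by passing to the limit. Since $w\in L(W)$ by Assumption \ref{Assumption2} and the components of $Q\sqrt{X_{s}}$ are c\`{a}dl\`{a}g (in particular locally bounded) processes independent of $T$, the scalar integrands $(Q H^{T}(s)\sqrt{X_{s}})_{ji}$ are dominated uniformly in $T\geq 2t$ by a process in $L(W_{ij})$.

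With pointwise convergence and the uniform $L(W)$-bound in hand, the scalar stochastic dominated convergence theorem delivers ucp convergence of each scalar integral to $0$, and summing the $d^{2}$ components yields the ucp convergence of the trace expression, which is exactly \eqref{Prop_LongTermVola_Equ}. The main obstacle is really bookkeeping: carefully writing the matrix-valued integrand as a sum of scalar ones so that Protter's theorem applies, and checking that $\sigma_{\infty}$ itself is an admissible integrand (a progressively measurable process with values in $S_{d}^{+}$), which follows from its construction as an a.s.\ limit of progressively measurable processes together with the above uniform bound by $C_{t}\,w$.
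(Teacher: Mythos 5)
Your overall strategy is the same as the paper's: both proofs rest on the dominated convergence theorem for stochastic integrals (Protter, Chapter IV, Theorem 32), with the dominating process built from the bound $\left\vert \Sigma\!\left(s,T\right)_{ij}\right\vert \leq \sqrt{T}\, w_{ij}\!\left(s\right)$ of Assumption \ref{Assumption2} combined with $\sqrt{X_{s}}$ and $Q$, and with integrability of the dominating process coming from $w \in L\!\left(W\right)$ and $\sqrt{X} \in L\!\left(W\right)$. The pointwise identification of the limit via \eqref{LongTermVolaEqu} and the componentwise bookkeeping are likewise in line with the paper.

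The one step that does not deliver the stated conclusion as written is the mode of convergence. You apply the stochastic DCT to the single integrand $H^{T}\!\left(s\right) = 2\left(\frac{\Sigma\left(s,T\right)-\Sigma\left(s,t\right)}{T-t}-\sigma_{\infty}\!\left(s\right)\right)$, which depends on the terminal time $t$ both through $\Sigma\!\left(s,t\right)$ and through the denominator $T-t$. Protter's theorem gives ucp convergence of the integral process $u \mapsto \int_{0}^{u} K^{T}\, dW$ for a \emph{fixed} integrand $K^{T}$ (a process in $s$ only); when the integrand itself varies with the endpoint $t$, it yields, for each fixed $t$, convergence in probability of the random variable $\int_{0}^{t}\operatorname{Tr}\!\left[H^{T}\!\left(s\right)\sqrt{X_{s}}\, dW_{s}\, Q\right]$, but not the uniformity of $\sup_{t\in\left[a,b\right]}$ that the proposition asserts. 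The paper avoids this by splitting: the DCT is applied only to the $t$-independent integrand $2\, Q\, \Sigma\!\left(s,T\right)\sqrt{X_{s}}/T$, while the $t$-dependent remainder $\frac{1}{T}\int_{0}^{t}\operatorname{Tr}\!\left[2\, Q\, \Sigma\!\left(s,t\right)\sqrt{X_{s}}\, dW_{s}\right]$ is disposed of by observing that its supremum over $t$ in a compact interval is a.s.\ finite and then dividing by $T$. Once you insert this (or an equivalent) decomposition, your argument is complete; the remaining ingredients --- the reduction to scalar integrals, the domination constant $C_{t}$ valid for $T \geq 2t$, and the check that $\sigma_{\infty}$ is an admissible integrand --- are sound and match the paper's reasoning.
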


\begin{proof}
Fix $t \geq 0$.
By Assumption \ref{Assumption1} we have that for all compact intervals $\left[a,b\right]$ with $0 \leq a < b$
\begin{equation}\label{Prop_LongTermVol_1}
\sup_{t \in \left[a,b\right]} \left\vert \int_{0}^{t}\! \operatorname{Tr}\!\left[2\, Q\, \Sigma\!\left(s,t\right)\sqrt{X_{s}}\, dW_{s}\right]\right\vert
< \infty \ \,\mathbb{P}\text{-a.s.}\nonumber
\end{equation}
Consequently on every compact interval $\left[a,b\right]$
\begin{equation}\label{Prop_LongTermVol_1_1}
 \frac{1}{T} \sup_{t \in \left[a,b\right]}  \int_{0}^{t}\! \operatorname{Tr}\!\left[2\, Q\, \Sigma\!\left(s,t\right) \sqrt{X_{s}}\, dW_{s}\right] \overset{T \rightarrow \infty}{\longrightarrow} 0 \ \,\mathbb{P}\text{-a.s.}\nonumber
\end{equation}
Therefore
\begin{equation}\label{Prop_LongTermVol_2}
 \frac{1}{T} \int_{0}^{t}\! \operatorname{Tr}\!\left[2\, Q\, \Sigma\!\left(s,t\right) \sqrt{X_{s}}\, dW_{s}\right] \overset{T \rightarrow \infty}{\longrightarrow} 0 \text{ in ucp.}
\end{equation}
Next, we define $H^{T} \colonequals H^{T}_{s}, s \geq 0$, with 
\begin{equation}\label{Prop_LongTermVol_3}
 H^{T}_{s} \colonequals 2 \frac{Q\, \Sigma\!\left(s,T\right) \sqrt{X_{s}}}{T}\,.
\end{equation}
Then for $T \rightarrow \infty:\ H_{s}^{T} \rightarrow 2\,Q\,\sigma_{\infty}\!\left(s\right)\sqrt{X_{s}}\ \text{a.s. for all}\, s \geq 0$.
\par
\noindent
Since we investigate long-term interest rates it is sufficient to impose long times of maturity, say $T\geq 1$.
Due to Assumption \ref{Assumption2}, we then have that for all $0 \leq s \leq T$ with $T \geq 1$
\begin{align}\label{Prop_LongTermVol_4}
 \left\Vert H_{s}^{T} \right\Vert & \overset{(\ref{Prop_LongTermVol_3})}{=} \frac{2}{T} \left\Vert Q\, \Sigma\!\left(s,T\right) \sqrt{X_{s}}\,\right\Vert \nonumber\\
&  \overset{\phantom{(\ref{Prop_LongTermVol_3})}}{=} \frac{2}{T} \left(\operatorname{Tr}\!\left[\sqrt{X_{s}}\,\Sigma\!\left(s,T\right)Q^{\top}Q\,\Sigma\!\left(s,T\right)\sqrt{X_{s}}\,\right]\right)^{\nicefrac{1}{2}} \nonumber\\
&  \overset{\phantom{(\ref{Prop_LongTermVol_3})}}{=} \frac{2}{T} \left(\sum_{i,j,k,l,m,n} \sqrt{X_{ij,s}}\, \Sigma\!\left(s,T\right)_{jk} Q^{\top}_{kl} \, Q_{lm}\, \Sigma\!\left(s,T\right)_{mn} \sqrt{X_{ni,s}}\right)^{\nicefrac{1}{2}} \nonumber\\
&  \underset{\phantom{(\ref{Prop_LongTermVol_3})}}{\overset{(\ref{Assumption2_w})}{\leq}} \frac{2}{T} \left(\sum_{i,j,k,l,m,n} \sqrt{T} \, \sqrt{X_{ij,s}}\, w_{jk}\!\left(s\right) \, Q^{\top}_{kl} \, Q_{lm}\, \sqrt{T} \, w_{mn}\!\left(s\right) \, \sqrt{X_{ni,s}}\right)^{\nicefrac{1}{2}} \nonumber\\
&  \overset{\phantom{(\ref{Prop_LongTermVol_3})}}{=} \frac{2}{\sqrt{T}} \left(\sum_{i,j,k,l,m,n} \sqrt{X_{ij,s}}\, w_{jk}\!\left(s\right) \, Q^{\top}_{kl} \, Q_{lm}\, w_{mn}\!\left(s\right) \, \sqrt{X_{ni,s}}\right)^{\nicefrac{1}{2}} \nonumber\\
&  \overset{\phantom{(\ref{Prop_LongTermVol_3})}}{\leq} 2 \left(\sum_{i,j,k,l,m,n} \sqrt{X_{ij,s}}\, w_{jk}\!\left(s\right) \, Q^{\top}_{kl} \, Q_{lm}\, w_{mn}\!\left(s\right) \, \sqrt{X_{ni,s}}\right)^{\nicefrac{1}{2}} \nonumber\\
&  \overset{\phantom{(\ref{Prop_LongTermVol_3})}}{=} 2 \left(\operatorname{Tr}\!\left[\sqrt{X_{s}} \, w\!\left(s\right) \, Q^{\top}  Q \, w\!\left(s\right)  \sqrt{X_{s}}\right]\right)^{\nicefrac{1}{2}} \nonumber\\
&  \overset{\phantom{(\ref{Prop_LongTermVol_3})}}{=} 2 \left\Vert Q \, w\!\left(s\right) \sqrt{X_{s}}\right\Vert \equalscolon h\!\left(s\right). \nonumber
\end{align}
Further, it is $w \in L\!\left(W\right)$ due to Assumption \ref{Assumption2} and 
we know from Theorem \ref{Affine_Process_Representation} that $\sqrt{X} \in L\!\left(W\right)$. By using Theorem 16 in Chapter IV, Section 2 of \cite{Book_Protter} it follows $h \in L\!\left(W\right)$.
Then, applying the dominated convergence theorem for semimartingales (cf.\,\,Theorem 32 in Chapter IV, Section 2 of \cite{Book_Protter}), 
we get:
\begin{equation}\label{Prop_LongTermVol_5}
 \int_{0}^{t}\! \operatorname{Tr}\!\left[2\, \frac{Q\, \Sigma\!\left(s,T\right) \sqrt{X_{s}}}{T} \, dW_{s}\right] \overset{T \rightarrow \infty}{\longrightarrow} 2 \int_{0}^{t} \!  \operatorname{Tr}\!\left[\sigma_{\infty}\!\left(s\right) \sqrt{X_{s}} \, dW_{s}\, Q\right] \text{ in ucp.}
\end{equation}
It follows due to Lemma 5.8 of \cite{Book_Georgii},
 \eqref{Prop_LongTermVol_2}, and \eqref{Prop_LongTermVol_5}:
\begin{displaymath}
  2\,\int_{0}^{t}\! \operatorname{Tr}\!\left[\frac{\Sigma\!\left(s,T\right) - \Sigma\!\left(s,t\right)}{T-t} \sqrt{X_{s}} dW_{s} Q\right] \overset{T \rightarrow \infty}{\longrightarrow} 2 \int_{0}^{t} \!  \operatorname{Tr}\!\left[\sigma_{\infty}\!\left(s\right) \sqrt{X_{s}}\, dW_{s}\, Q\right] \text{ in ucp}.
\end{displaymath}
\end{proof}

\begin{proposition}\label{LongTermDrift0}
Under \emph{Assumption \ref{Assumption1}} and \emph{\ref{Assumption2}}, it holds for all $t \geq 0$:
\begin{equation}\label{Prop_LongTermDrift_Equ}
 \lim\limits_{T \rightarrow \infty} 2\,\int\limits_{0}^{t}\! \operatorname{Tr}\!\left[Q\,\frac{\Gamma\!\left(s,T\right) - \Gamma\!\left(s,t\right)}{T-t}\, Q^{\top}\right] ds = 2 \int\limits_{0}^{t} \!  \operatorname{Tr}\!\left[Q\, \mu_{\infty}\!\left(s\right) Q^{\top}\right] ds,
\end{equation}
where $\mu_{\infty}\!\left(s\right), s \geq 0$, is the long-term drift process defined by equation \eqref{LongTermDriftEqu}, 
$\Gamma\!\left(s,t\right)$, $s \geq 0$, is defined for all $t \geq 0$ as in \eqref{Definition_Gamma}, and the convergence in 
\eqref{Prop_LongTermDrift_Equ} is in ucp.
\end{proposition}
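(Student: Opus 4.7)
The proof follows the same strategy as Proposition \ref{LongTermVola0}, but is considerably simpler because the integrator is Lebesgue measure rather than a semimartingale. The plan is therefore to replace the dominated convergence theorem for semimartingales by the classical Lebesgue DCT applied $\omega$-wise, and to combine the resulting pathwise uniform convergence on compacts with a careful decomposition of the integrand to handle the $T$-dependence in the denominator.

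The first step is to split
\begin{equation*}
\frac{\Gamma(s,T) - \Gamma(s,t)}{T-t} \;=\; \frac{T}{T-t}\cdot\frac{\Gamma(s,T)}{T} \;-\; \frac{\Gamma(s,t)}{T-t}.
\end{equation*}
Fix a compact interval $[0,T_0]$ on which the ucp claim is to be established and restrict to $T\ge 2T_0$, so that the prefactor $T/(T-t)$ lies in $[1,2]$ and converges to $1$ uniformly in $t\in[0,T_0]$. Assumption \ref{Assumption2} yields entrywise $|\Sigma(s,t)_{ij}|\le\sqrt{t}\,w_{ij}(s)$, hence in the Frobenius norm $\|\Sigma(s,t)\|^{2}\le t\,\|w(s)\|^{2}$ and, by submultiplicativity, $\|\Gamma(s,t)\|\le t\,\|w(s)\|^{2}\|X_s\|$. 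Applied to the second summand this gives
\begin{equation*}
\sup_{s,t\in[0,T_0]}\left\|\frac{\Gamma(s,t)}{T-t}\right\|\;\le\; \frac{T_0\,\|w(s)\|^{2}\|X_s\|}{T-T_0}\;\longrightarrow\;0.
\end{equation*}

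The second step is to build a dominating function for the main summand. By the same entrywise bound, now applied at $T$ in place of $t$,
\begin{equation*}
\left\|\frac{\Gamma(s,T)}{T}\right\|\;\le\;\frac{\|\Sigma(s,T)\|^{2}\|X_s\|}{T}\;\le\;\|w(s)\|^{2}\|X_s\|,
\end{equation*}
so the integrand of interest is dominated in Frobenius norm by $\|Q\|^{2}\|w(s)\|^{2}\|X_s\|$, uniformly in $T$. Since $w$ and $X$ are c\`{a}dl\`{a}g, this bound is $\mathbb{P}$-a.s.\ bounded on $[0,T_0]$ and in particular Lebesgue integrable there. Combined with the pointwise convergence $\Gamma(s,T)/T\to\mu_{\infty}(s)$ supplied by \eqref{LongTermDriftEqu}, Lebesgue's dominated convergence theorem yields $\int_{0}^{T_0}\bigl|\operatorname{Tr}[Q(\Gamma(s,T)/T-\mu_{\infty}(s))Q^{\top}]\bigr|\,ds\to 0$ $\mathbb{P}$-a.s., whence
\begin{equation*}
\sup_{t\in[0,T_0]}\left|\int_{0}^{t}\!\operatorname{Tr}\!\left[Q\,\frac{\Gamma(s,T)}{T}\,Q^{\top}\right]ds\;-\;\int_{0}^{t}\!\operatorname{Tr}\!\left[Q\,\mu_{\infty}(s)\,Q^{\top}\right]ds\right|\;\longrightarrow\;0\quad\mathbb{P}\text{-a.s.}
\end{equation*}

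Assembling the three pieces (uniform-in-$t$ control of $T/(T-t)\to 1$, uniform vanishing of the $\Gamma(s,t)/(T-t)$ term, and the DCT conclusion above) delivers pathwise uniform convergence in $t\in[0,T_0]$, which is strictly stronger than ucp and yields the claim. The principal obstacle is the uniform-in-$t$ control of the two denominators $T-t$, which is secured by the buffer $T\ge 2T_0$; without this buffer the pathwise dominating bound would degenerate as $t\to T$.
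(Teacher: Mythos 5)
Your proof is correct and follows essentially the same route as the paper's: split off the $\Gamma(s,t)$ term and show it vanishes uniformly on compacts, then dominate $\Gamma(s,T)/T$ via the bound $|\Sigma(s,T)_{ij}|\le\sqrt{T}\,w_{ij}(s)$ from Assumption \ref{Assumption2} and pass to the limit by dominated convergence. Your version is marginally more explicit about the $T/(T-t)$ prefactor (the paper absorbs this into the identity of the two limits in \eqref{LongTermDriftEqu}) and invokes the classical Lebesgue DCT pathwise where the paper cites a DCT for progressive processes, but these are cosmetic rather than substantive differences.
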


\begin{proof}
Fix $t \geq 0$. 
Since the process $\Gamma\!\left(s,t\right), s,t \geq 0,$ is continuous in $t$ for all fixed $s$ and c\`{a}dl\`{a}g in $s$ for all fixed $t$ 
it follows  by (4) of Section 2.8 in \cite{book_Applebaum} that $\Gamma\!\left(s,t\right), s\geq 0,$ 
is bounded on all compact intervals $\left[a,b\right]$ with $t \in \left[a,b\right]$ and $0 \leq a < b$ for a.e. $\omega \in \Omega$, i.e. 
\begin{equation}\label{Prop_LongTermDrift_00}
 \sup_{t \in \left[a,b\right]}\left\vert \int_{0}^{t}\! \operatorname{Tr}\!\left[Q\, \Gamma\!\left(s,t\right) Q^{\top}\right] ds \right\vert < \infty\ \,\mathbb{P}\text{-a.s.}\nonumber
\end{equation}
Consequently on every compact interval $\left[a,b\right]$
\begin{equation}\label{Prop_LongTermDrift_000}
 \frac{1}{T}\sup_{t \in \left[a,b\right]} \int_{0}^{t}\! \operatorname{Tr}\!\left[Q\, \Gamma\!\left(s,t\right) Q^{\top}\right] ds \overset{T \rightarrow \infty}{\longrightarrow} 0 \ \,\mathbb{P}\text{-a.s.}\nonumber
\end{equation}
Therefore
\begin{equation}\label{Prop_LongTermDrift_1}
\frac{1}{T} \int_{0}^{t}\! \operatorname{Tr}\!\left[Q\,\Gamma\!\left(s,t\right)Q^{\top}\right] ds  \overset{T \rightarrow \infty}{\longrightarrow} 0\, \text{ in ucp.}
\end{equation}
Let us define $G^{T} \colonequals G^{T}_{s}, s \geq 0$, with 
\begin{equation}\label{Prop_LongTermDrift_2}
 G^{T}_{s} \colonequals 2 \frac{Q\, \Gamma\!\left(s,T\right) Q^{\top}}{T}\,.
\end{equation}
Then for $T \rightarrow \infty:\ G_{s}^{T} \rightarrow 2\, Q \,\mu_{\infty}\!\left(s\right) Q^{\top}\ \text{a.s. for all}\, s \geq 0$.
\par
\noindent
By Assumption \ref{Assumption2} we have that for all $i,j \in \left\{1,\dots,d\right\}$ and $0 \leq s \leq T$:
\begin{align}\label{Prop_LongTermDrift_2_1}
 \Gamma\!\left(s,T\right)_{ij} & \overset{(\ref{Definition_Gamma})}{=} \left(\Sigma\!\left(s,T\right) X_{s} \Sigma\!\left(s,T\right)\right)_{ij}
 \overset{\phantom{(\ref{Definition_Gamma})}}{=} \sum_{k,l} \Sigma\!\left(s,T\right)_{ik} X_{kl,s} \Sigma\!\left(s,T\right)_{lj} \nonumber\\
& \underset{\phantom{(\ref{Definition_Gamma})}}{\overset{(\ref{Assumption2_w})}{\leq}} \sum_{k,l} \sqrt{T} \, w_{ik}\!\left(s\right) X_{kl,s}\, \sqrt{T} \, w_{lj}\!\left(s\right) 
 \overset{\phantom{(\ref{Definition_Gamma})}}{=} T \left(w\!\left(s\right) X_{s} w\!\left(s\right)\right)_{ij}\,.
\end{align}
Therefore we have that for all $0 \leq s \leq T$
\begin{align}\label{Prop_LongTermDrift_3}
 \left\Vert G_{s}^{T} \right\Vert & \overset{(\ref{Prop_LongTermDrift_2})}{=} \frac{2}{T} \left\Vert Q\, \Gamma\!\left(s,T\right) Q^{\top}\,\right\Vert \nonumber\\
& \overset{\phantom{(\ref{Definition_Gamma})}}{=} \frac{2}{T} \left(\operatorname{Tr}\!\left[Q \,\Gamma\!\left(s,T\right) Q^{\top}  Q \, \Gamma\!\left(s,T\right) Q^{\top}\right]\right)^{\nicefrac{1}{2}} \nonumber\\
& \overset{\phantom{(\ref{Definition_Gamma})}}{=} \frac{2}{T} \left(\sum_{i,j,k,l,m,n} Q_{ij} \, \Gamma\!\left(s,T\right)_{jk} Q^{\top}_{kl} \, Q_{lm} \, \Gamma\!\left(s,T\right)_{mn} Q^{\top}_{ni}\right)^{\nicefrac{1}{2}} \nonumber\\
& \underset{\phantom{(\ref{Definition_Gamma})}}{\overset{(\ref{Prop_LongTermDrift_2_1})}{\leq}} \frac{2}{T} \left(\sum_{i,j,k,l,m,n} Q_{ij} \, T \left(w\!\left(s\right)X_{s} w\!\left(s\right)\right)_{jk} Q^{\top}_{kl} \, Q_{lm} \, T \left(w\!\left(s\right)X_{s} w\!\left(s\right)\right)_{mn} Q^{\top}_{ni}\right)^{\nicefrac{1}{2}} \nonumber\\
& \overset{\phantom{(\ref{Definition_Gamma})}}{=} 2 \left(\operatorname{Tr}\!\left[Q \, w\!\left(s\right) X_{s} w\!\left(s\right) Q^{\top} Q \,  w\!\left(s\right) X_{s} w\!\left(s\right) Q^{\top} \right]\right)^{\nicefrac{1}{2}} \nonumber\\
& \overset{\phantom{(\ref{Definition_Gamma})}}{=} 2 \left\Vert Q \, w\!\left(s\right) X_{s} w\!\left(s\right) Q^{\top} \right\Vert \equalscolon g\!\left(s\right), \nonumber
\end{align}
where $g$ is a c\`{a}dl\`{a}g process. 
It follows $\int_{0}^{t} g\!\left(s\right)\,ds < \infty$ for all $t \geq 0$ by (4) of Section 2.8 in \cite{book_Applebaum} 
and we can apply the DCT for progressive processes (cf.\,\,Corollary 6.26 in Chapter 6 of \cite{Book_Klenke}). 
By using Lemma 5.8 of \cite{Book_Georgii} and \eqref{Prop_LongTermDrift_1} we obtain \eqref{Prop_LongTermDrift_Equ}.
\end{proof}

\begin{proposition}\label{LongTermJumps_Proposition}
 Under \emph{Assumption \ref{Assumption1}} and \emph{\ref{Assumption2}}, it holds for all $t \geq 0$:
\begin{equation}\label{Prop_LongTermJumps_Equ}
 \int_{0}^{t}\!\int_{S_{d}^{+}\! \setminus \left\{ 0\right\}}\! \frac{\operatorname{Tr}\!\left[\left(\Sigma\!\left(s,T\right) \!-\! \Sigma\!\left(s,t\right)\right)\xi\right]}{T-t}\, \mu^{X\!}\!\left(ds,d\xi\right) \!\overset{T \rightarrow \infty}{\longrightarrow}\!
\int_{0}^{t}\!\int_{S_{d}^{+}\! \setminus \left\{ 0\right\}}\!  \operatorname{Tr}\!\left[\sigma_{\infty}\!\left(s\right) \xi\right] \mu^{X\!}\!\left(ds,d\xi\right),
\end{equation}
where $\sigma_{\infty}\!\left(s\right), s \geq 0$, is the long-term volatility process defined by equation \eqref{LongTermVolaEqu}, 
$\Sigma\!\left(s,t\right)$, $s \geq 0$, is defined for all $t \geq 0$ as in \eqref{Sigma_Definition}, and the convergence in \eqref{Prop_LongTermJumps_Equ} is in ucp.
\end{proposition}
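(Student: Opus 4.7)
My strategy mirrors that of Propositions \ref{LongTermVola0} and \ref{LongTermDrift0}. I would split the integrand by linearity:
\[
\int_{0}^{t}\!\!\int_{S_{d}^{+}\setminus\{0\}}\! \frac{\operatorname{Tr}\!\left[(\Sigma(s,T)-\Sigma(s,t))\xi\right]}{T-t}\,\mu^{X}(ds,d\xi) = A(T) - B(T),
\]
where $A(T)$ contains $\Sigma(s,T)$ in the numerator and $B(T)$ contains $\Sigma(s,t)$. The first step is to dispose of $B(T)$: by Assumption \ref{Assumption1}, $\Sigma(s,t)$ is bounded for $s$ in compact subsets of $\mathbb{R}_+$, and by the finite variation of jumps \eqref{Finite_Variation_Jumps_Remark_Equation}, the integral $\int_0^t\!\int \operatorname{Tr}[\Sigma(s,t)\xi]\,\mu^X(ds,d\xi)$ is finite $\mathbb{P}$-a.s., with its supremum over $t\in[a,b]$ also finite a.s. Dividing by $T-t$ then yields $B(T)\to 0$ in ucp, exactly as in the first step of the two preceding propositions.

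Next, I would handle $A(T)$ by pathwise dominated convergence against the jump measure $\mu^X$ (which a.s.\ has countably many atoms, so the DCT applies in the form of the DCT for sums). The pointwise convergence follows from
\[
\frac{\operatorname{Tr}[\Sigma(s,T)\xi]}{T-t} = \frac{T}{T-t}\,\operatorname{Tr}\!\left[\frac{\Sigma(s,T)}{T}\xi\right] \xrightarrow{T\to\infty} \operatorname{Tr}[\sigma_\infty(s)\xi]\quad \mathbb{P}\text{-a.s.}
\]
by \eqref{LongTermVolaEqu}. For a dominating function I would invoke Assumption \ref{Assumption2}: for $t\in[a,b]$ and $T\geq b+1$, the quantity $\sqrt{T}/(T-t)$ is bounded by some constant $C_{a,b}$, so the entrywise bound $|\Sigma(s,T)_{ij}|\leq \sqrt{T}\,w_{ij}(s)$ produces
\[
\left|\frac{\operatorname{Tr}[\Sigma(s,T)\xi]}{T-t}\right| \leq C'_{a,b}\,\|w(s)\|\,\|\xi\|.
\]
The right-hand side is $\mu^X$-integrable on $[0,b]\times(S_d^+\setminus\{0\})$ a.s., since $w$ is càdlàg (hence bounded on $[0,b]$) and $\int_0^b\int_{S_d^+\setminus\{0\}}\|\xi\|\,\mu^X(ds,d\xi)<\infty$ by Remark \ref{Finite_Variation_Jumps_Remark}. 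Pathwise DCT then gives $A(T)\to\int_0^t\!\int \operatorname{Tr}[\sigma_\infty(s)\xi]\,\mu^X(ds,d\xi)$ $\mathbb{P}$-a.s.

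The main obstacle I anticipate is upgrading pointwise a.s.\ convergence to ucp in $t$. This is precisely why I make the dominating bound above uniform in $t\in[a,b]$ once $T\geq b+1$: under such a uniform domination, pathwise DCT in fact yields a.s.\ convergence that is uniform in $t\in[a,b]$, which is strictly stronger than the required ucp statement. Combining this with the ucp convergence $B(T)\to 0$ (obtained by the same uniform-in-$t$ boundedness of $\int_0^t\!\int \operatorname{Tr}[\Sigma(s,t)\xi]\,\mu^X(ds,d\xi)$ over $[a,b]$ via a supremum bound, as in \eqref{Prop_LongTermVol_2}) completes the argument.
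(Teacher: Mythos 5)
Your proposal is correct and follows essentially the same route as the paper: it disposes of the $\Sigma\!\left(s,t\right)$ term via an a.s.\ supremum bound over compact $t$-intervals divided by $T$, and treats the $\Sigma\!\left(s,T\right)$ term by pathwise dominated convergence against $\mu^{X}$ with the dominating function $\left\Vert w\!\left(s\right)\right\Vert \left\Vert \xi \right\Vert$ supplied by Assumption \ref{Assumption2} and the finite variation of jumps in Remark \ref{Finite_Variation_Jumps_Remark}. The one step you state loosely --- that uniform domination upgrades fixed-$t$ a.s.\ convergence to ucp --- is exactly what the paper makes precise via Lemma \ref{Lemma_Help_Proposition}, which pulls the supremum over $t \in \left[a,b\right]$ inside the integral over $\left[0,b\right]$, so your argument matches the paper's in substance.
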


\begin{proof}
 Fix $t \geq 0$. 
First, notice that
\begin{align}\label{Prop_LongTermJumps_0}
 \left\vert \int_{0}^{t}\!\int_{S_{d}^{+}\! \setminus \left\{ 0\right\}}\!\operatorname{Tr}\!\left[\Sigma\!\left(s,t\right)\xi\right]\mu^{X\!}\!\left(ds,d\xi\right)\right\vert & \overset{\phantom{\eqref{Assumption2_w}}}{\leq} \int_{0}^{t}\!\int_{S_{d}^{+}}\!\left\vert \operatorname{Tr}\!\left[\Sigma\!\left(s,t\right)\xi\right]\right\vert\mu^{X\!}\!\left(ds,d\xi\right)\nonumber\\
& \overset{\phantom{\eqref{Assumption2_w}}}{\leq} \sqrt{t} \int_{0}^{t}\!\int_{S_{d}^{+}\! \setminus \left\{ 0\right\}}\!\frac{1}{\sqrt{t}}\left\Vert\Sigma\!\left(s,t\right)\right\Vert \left\Vert\xi\right\Vert \mu^{X\!}\!\left(ds,d\xi\right)\nonumber\\
& \overset{\eqref{Assumption2_w}}{\leq} \sqrt{t} \int_{0}^{t}\!\int_{S_{d}^{+}\! \setminus \left\{ 0\right\}}\! \left\Vert w\!\left(s\right)\right\Vert \left\Vert\xi\right\Vert \mu^{X\!}\!\left(ds,d\xi\right)\nonumber\\
& \overset{\phantom{\eqref{Assumption2_w}}}{\leq} \sqrt{t} \sup_{u \in\left[0,t\right]} \left\Vert w\!\left(u\right)\right\Vert \int_{0}^{t}\!\int_{S_{d}^{+}\! \setminus \left\{ 0\right\}}\! \left\Vert\xi\right\Vert \, \mu^{X\!}\!\left(ds,d\xi\right)\,.\nonumber
\end{align}
Define 
\begin{math}
 q\!\left(t\right) \colonequals \sqrt{t}\, \sup_{u \in\left[0,t\right]} \left\Vert w\!\left(u\right)\right\Vert Z_{t}
\end{math}
with 
\begin{displaymath}
 Z_{t} \colonequals \int_{0}^{t}\!\int_{S_{d}^{+}\! \setminus \left\{ 0\right\}}\! \left\Vert\xi\right\Vert \, \mu^{X\!}\!\left(ds,d\xi\right), t\geq 0\,.
\end{displaymath}
Note that $Z_{t}, t\geq 0,$ is a well-defined c\`{a}dl\`{a}g process by \eqref{Finite_Variation_Jumps_Remark_Equation}.
Then for all compact intervals $\left[a,b\right]$ with $0 \leq a < b$ it is 
\begin{align*}
 \sup_{t \in \left[a,b\right]}\left\vert q\!\left(t\right) \right\vert & = 
\sup_{t \in \left[a,b\right]}\left\vert \sqrt{t} \sup_{u \in \left[0,t\right]} \left\Vert w\!\left(u\right)\right\Vert Z_{t} \right\vert 
 \leq \sqrt{b} \sup_{t \in \left[a,b\right]} \sup_{u \in \left[0,t\right]}  \left\Vert w\!\left(u\right)\right\Vert \sup_{t \in \left[a,b\right]} Z_{t}\\
& = \sqrt{b} \sup_{u \in \left[0,b\right]}  \left\Vert w\!\left(u\right)\right\Vert \sup_{t \in \left[a,b\right]} Z_{t}
 < \infty
\end{align*}
because of (4) of Section 2.8 in \cite{book_Applebaum} applied for the c\`{a}dl\`{a}g processes $\left\Vert w\!\left(t\right) \right\Vert,t \geq 0,$ and $Z_{t}, t\geq 0$. 
Consequently on every compact interval $\left[a,b\right]$
\begin{equation}\label{Prop_LongTermJumps_1}
\frac{1}{T} \sup_{t \in \left[a,b\right]} \int_{0}^{t}\!\int_{S_{d}^{+}\! \setminus \left\{ 0\right\}}\! \operatorname{Tr}\!\left[\Sigma\!\left(s,t\right)\xi\right] \mu^{X\!}\!\left(ds,d\xi\right) \overset{T \rightarrow \infty}{\longrightarrow} 0\ \,\mathbb{P}\text{-a.s.}\nonumber
\end{equation}
Therefore
\begin{equation}\label{Prop_LongTermJumps_2}
\frac{1}{T} \int_{0}^{t}\!\int_{S_{d}^{+}\! \setminus \left\{ 0\right\}}\! \operatorname{Tr}\!\left[\Sigma\!\left(s,t\right)\xi\right] \mu^{X\!}\!\left(ds,d\xi\right) \overset{T \rightarrow \infty}{\longrightarrow} 0\, \text{ in ucp.}
\end{equation}
Due to Assumption~\ref{Assumption2} we have for $s,t \leq T$ and $T \geq 1$ that
\begin{equation}\label{Prop_LongTermJumps_3}
  \frac{\left\vert \operatorname{Tr}\left[\Sigma\!\left(s,T\right) \xi\right]\right\vert}{T} \leq \frac{1}{T} \left\Vert \Sigma\!\left(s,T\right)\right\Vert \left\Vert \xi\right\Vert
 \overset{\eqref{Assumption2_w}}{\leq} \frac{1}{\sqrt{T}} \left\Vert w\!\left(s\right)\right\Vert \left\Vert \xi\right\Vert 
\leq \left\Vert w\!\left(s\right)\right\Vert \left\Vert \xi\right\Vert \equalscolon j\!\left(s,\xi\right). \nonumber
\end{equation}
We first show that the process $j$ is integrable with respect to the random measure $\mu^{X\!}$ on $\left[0,t\right] \times {S_{d}^{+}\! \setminus\! \left\{ 0\right\}}$ 
for all $t\geq 0$:
\begin{align*}
\int_{0}^{t}\!\int_{S_{d}^{+}\! \setminus \left\{ 0\right\}}\! j\!\left(s,\xi\right) \mu^{X\!}\!\left(ds,d\xi\right)  & = 
\int_{0}^{t}\!\int_{S_{d}^{+}\! \setminus \left\{ 0\right\}}\!\left\Vert w\!\left(s\right)\right\Vert \left\Vert\xi\right\Vert \, \mu^{X\!}\!\left(ds,d\xi\right)\nonumber\\
  & \leq  \sup_{u \in \left[0,t\right]}  \left\Vert w\!\left(u\right)\right\Vert Z_{t} < \infty
\end{align*}
due to (4) of Section 2.8 in \cite{book_Applebaum} applied for the c\`{a}dl\`{a}g process $\left\Vert w\!\left(t\right) \right\Vert,t \geq 0,$ and by \eqref{Finite_Variation_Jumps_Remark_Equation}.
We have with the DCT and \eqref{LongTermVolaEqu} that for all fixed $t\geq 0$
\small
\begin{align}\label{Prop_LongTermJumps_5}
\int_{0}^{t}\!\int_{S_{d}^{+}\! \setminus \left\{ 0\right\}\!}\! \frac{\operatorname{Tr}\!\left[\Sigma\!\left(s,T\right)\xi\right]}{T} \mu^{X\!}\!\left(ds,d\xi\right) 
& \overset{T \rightarrow \infty}{\longrightarrow} \int_{0}^{t}\!\int_{S_{d}^{+}\! \setminus \left\{ 0\right\}\!}\!  \operatorname{Tr}\!\left[\sigma_{\infty}\!\left(s\right) \xi\right] \mu^{X\!}\!\left(ds,d\xi\right)\ \,\mathbb{P}\text{-a.s.}
\end{align}
\normalsize
Then, by \eqref{Prop_LongTermJumps_5} applied for $t=b$ and by \eqref{Lemma_Help_Proposition_Equation_0} from Lemma \ref{Lemma_Help_Proposition}, it follows that 
\begin{equation}\label{Prop_LongTermJumps_7}
 \sup_{t \in \left[a,b\right]} \int_{0}^{t}\!\int_{S_{d}^{+}\! \setminus \left\{ 0\right\}}\! \frac{\operatorname{Tr}\!\left[\Sigma\!\left(s,T\right)\xi\right]}{T} \mu^{X\!}\!\left(ds,d\xi\right)  \overset{T \rightarrow \infty}{\longrightarrow} \sup_{t \in \left[a,b\right]} \int_{0}^{t}\!\int_{S_{d}^{+}\! \setminus \left\{ 0\right\}}\!  \operatorname{Tr}\!\left[\sigma_{\infty\!}\!\left(s\right) \xi\right] \mu^{X\!}\!\left(ds,d\xi\right)\nonumber
\end{equation}
$\mathbb{P}$-a.s. and therefore in probability.
Hence by \cite{Book_Protter}, page 57
\begin{equation}\label{Prop_LongTermJumps_8}
 \lim\limits_{T \rightarrow \infty}\int_{0}^{t}\!\int_{S_{d}^{+}\! \setminus \left\{ 0\right\}\!}\! \frac{\operatorname{Tr}\left[\Sigma\!\left(s,T\right)\xi\right]}{T}\, \mu^{X\!}\!\left(ds,d\xi\right) = \int_{0}^{t}\!\int_{S_{d}^{+}\! \setminus \left\{ 0\right\}\!}\!  \operatorname{Tr}\!\left[\sigma_{\infty\!}\!\left(s\right) \xi\right] \mu^{X\!}\!\left(ds,d\xi\right) \, \text{ in ucp.}
\end{equation}
By using \eqref{Prop_LongTermJumps_2} as well as \eqref{Prop_LongTermJumps_8} it follows \eqref{Prop_LongTermJumps_Equ}.
\end{proof}

\begin{lemma}\label{Lemma_Help_Proposition}
 With $\Sigma\!\left(t,T\right), t\geq 0,$ defined as in \eqref{Sigma_Definition} and $\sigma_{\infty\!}\!\left(t\right), t\geq 0,$ as in 
\eqref{LongTermVolaEqu} it holds for $0 \leq a < b$
\small
\begin{align}\label{Lemma_Help_Proposition_Equation_0}
 & \left\vert \sup_{t \in \left[a,b\right]} \int_{0}^{t}\!\int_{S_{d}^{+}\! \setminus \left\{ 0\right\}\!}\! \frac{\operatorname{Tr}\!\left[\Sigma\!\left(s,T\right)\xi\right]}{T} \mu^{X\!}\!\left(ds,d\xi\right) - \sup_{t \in \left[a,b\right]} \int_{0}^{t}\!\int_{S_{d}^{+}\! \setminus \left\{ 0\right\}\!}\! \operatorname{Tr}\!\left[\sigma_{\infty\!}\!\left(s\right) \xi\right] \mu^{X\!}\!\left(ds,d\xi\right) \right\vert \phantom{====}\nonumber\\
 & \phantom{====} \leq \int_{0}^{b}\!\int_{S_{d}^{+}\! \setminus \left\{ 0\right\}}\! \left\vert \frac{\operatorname{Tr}\!\left[\Sigma\!\left(s,T\right)\xi\right]}{T} - \operatorname{Tr}\!\left[\sigma_{\infty\!}\!\left(s\right) \xi\right] \right\vert \mu^{X\!}\!\left(ds,d\xi\right).
\end{align}
\normalsize
\end{lemma}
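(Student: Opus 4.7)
The plan is to reduce the inequality to two elementary facts: first, the general inequality $\left|\sup_{t\in[a,b]} F(t) - \sup_{t\in[a,b]} G(t)\right| \le \sup_{t\in[a,b]} |F(t)-G(t)|$ for any two real functions on $[a,b]$; and second, the fact that $\mu^{X}(ds,d\xi)$ is a positive (random) measure, so the triangle inequality can be pulled inside the integrals.

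More concretely, set
\begin{equation*}
 F\!\left(t\right) \colonequals \int_{0}^{t}\!\int_{S_{d}^{+}\! \setminus \left\{ 0\right\}}\! \frac{\operatorname{Tr}\!\left[\Sigma\!\left(s,T\right)\xi\right]}{T} \mu^{X\!}\!\left(ds,d\xi\right), \quad
 G\!\left(t\right) \colonequals \int_{0}^{t}\!\int_{S_{d}^{+}\! \setminus \left\{ 0\right\}}\! \operatorname{Tr}\!\left[\sigma_{\infty\!}\!\left(s\right) \xi\right] \mu^{X\!}\!\left(ds,d\xi\right),
\end{equation*}
both of which are well-defined finite processes on $[a,b]$ by Remark \ref{Finite_Variation_Jumps_Remark} together with the bounds used in the proof of Proposition \ref{LongTermJumps_Proposition}. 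First I would verify the standard supremum inequality: for every $t\in[a,b]$ one has $F(t) \le G(t) + |F(t)-G(t)| \le \sup_{s\in[a,b]} G(s) + \sup_{s\in[a,b]} |F(s)-G(s)|$, hence $\sup_{t\in[a,b]} F(t) - \sup_{t\in[a,b]} G(t) \le \sup_{t\in[a,b]} |F(t)-G(t)|$, and by symmetry the same with the roles of $F$ and $G$ reversed. This yields
\begin{equation*}
 \left\vert \sup_{t\in[a,b]} F\!\left(t\right) - \sup_{t\in[a,b]} G\!\left(t\right) \right\vert \le \sup_{t\in[a,b]} \left\vert F\!\left(t\right) - G\!\left(t\right) \right\vert.
\end{equation*}

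Next I would bound the right-hand side. Since $\mu^{X\!}\!\left(ds,d\xi\right)$ is a positive (counting) measure, for each $t\in[a,b]$ the triangle inequality gives
\begin{equation*}
 \left\vert F\!\left(t\right) - G\!\left(t\right) \right\vert = \left\vert \int_{0}^{t}\!\int_{S_{d}^{+}\! \setminus \left\{ 0\right\}}\! \left(\frac{\operatorname{Tr}\!\left[\Sigma\!\left(s,T\right)\xi\right]}{T} - \operatorname{Tr}\!\left[\sigma_{\infty\!}\!\left(s\right) \xi\right]\right)\mu^{X\!}\!\left(ds,d\xi\right) \right\vert
\end{equation*}
\begin{equation*}
 \le \int_{0}^{t}\!\int_{S_{d}^{+}\! \setminus \left\{ 0\right\}}\! \left\vert \frac{\operatorname{Tr}\!\left[\Sigma\!\left(s,T\right)\xi\right]}{T} - \operatorname{Tr}\!\left[\sigma_{\infty\!}\!\left(s\right) \xi\right] \right\vert \mu^{X\!}\!\left(ds,d\xi\right).
\end{equation*}
Because the integrand is nonnegative, the right-hand side is monotone increasing in $t$, and since $t\le b$ it is dominated by the same integral taken up to $b$. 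Taking the supremum over $t\in[a,b]$ then yields the claimed bound.

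There is no serious obstacle here: the argument only uses (i) the general fact $|\sup F-\sup G|\le \sup|F-G|$, (ii) positivity of $\mu^{X}$ for pulling the absolute value inside, and (iii) monotonicity of the resulting nonnegative integral in $t$. The only point to be careful about is that all integrals are finite $\mathbb{P}$-a.s., but this is ensured by the a.s.\ finiteness established earlier (cf.\ \eqref{Finite_Variation_Jumps_Remark_Equation} and the domination argument for $j(s,\xi)=\|w(s)\|\|\xi\|$ inside the proof of Proposition \ref{LongTermJumps_Proposition}), so the manipulations above are legitimate pathwise.
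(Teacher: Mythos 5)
Your proposal is correct and follows essentially the same route as the paper: the paper derives the bound $\left\vert \sup_{t} F(t) - \sup_{t} G(t)\right\vert \leq \sup_{t}\left\vert F(t)-G(t)\right\vert$ in situ via matching upper and lower estimates (a supremum and an infimum of the difference of the integrals), whereas you invoke it as the standard supremum inequality, and both arguments then pull the absolute value inside the integral against the positive measure $\mu^{X}$ and use monotonicity in $t$ to replace the supremum by the integral up to $b$. Your remarks on the a.s.\ finiteness of the integrals are consistent with what the paper relies on implicitly.
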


\begin{proof}
 Let $0 \leq a \leq b$. Then, we have
\begin{align}\label{Lemma_Help_Proposition_Equation_1}
& \sup_{t \in \left[a,b\right]} \int_{0}^{t}\!\int_{S_{d}^{+}\! \setminus \left\{ 0\right\}\!}\! \frac{\operatorname{Tr}\!\left[\Sigma\!\left(s,T\right)\xi\right]}{T} \mu^{X\!}\!\left(ds,d\xi\right) - \sup_{t \in \left[a,b\right]} \int_{0}^{t}\!\int_{S_{d}^{+}\! \setminus \left\{ 0\right\}\!}\! \operatorname{Tr}\!\left[\sigma_{\infty\!}\!\left(s\right) \xi\right] \mu^{X\!}\!\left(ds,d\xi\right)\phantom{====}\nonumber\\
& \phantom{====} = \sup_{t \in \left[a,b\right]} \int_{0}^{t}\!\int_{S_{d}^{+}\! \setminus \left\{ 0\right\}\!}\!\left( \frac{\operatorname{Tr}\!\left[\Sigma\!\left(s,T\right)\xi\right]}{T} - \operatorname{Tr}\!\left[\sigma_{\infty\!}\!\left(s\right) \xi\right] + \operatorname{Tr}\!\left[\sigma_{\infty\!}\!\left(s\right) \xi\right]\right) \mu^{X\!}\!\left(ds,d\xi\right) \nonumber\\
& \phantom{=======} - \sup_{t \in \left[a,b\right]} \int_{0}^{t}\!\int_{S_{d}^{+}\! \setminus \left\{ 0\right\}\!}\! \operatorname{Tr}\!\left[\sigma_{\infty\!}\!\left(s\right) \xi\right] \mu^{X\!}\!\left(ds,d\xi\right) \nonumber\\
& \phantom{====} \leq \sup_{t \in \left[a,b\right]} \int_{0}^{t}\!\int_{S_{d}^{+}\! \setminus \left\{ 0\right\}\!}\!\left(\frac{\operatorname{Tr}\!\left[\Sigma\!\left(s,T\right)\xi\right]}{T} - \operatorname{Tr}\!\left[\sigma_{\infty\!}\!\left(s\right) \xi\right] \right)\mu^{X\!}\!\left(ds,d\xi\right).
\end{align}
Furthermore
\begin{align}\label{Lemma_Help_Proposition_Equation_2}
 & \sup_{t \in \left[a,b\right]} \int_{0}^{t}\!\int_{S_{d}^{+}\! \setminus \left\{ 0\right\}\!}\! \frac{\operatorname{Tr}\!\left[\Sigma\!\left(s,T\right)\xi\right]}{T} \mu^{X\!}\!\left(ds,d\xi\right) - \sup_{t \in \left[a,b\right]} \int_{0}^{t}\!\int_{S_{d}^{+}\! \setminus \left\{ 0\right\}\!}\! \operatorname{Tr}\!\left[\sigma_{\infty\!}\!\left(s\right) \xi\right] \mu^{X\!}\!\left(ds,d\xi\right)\phantom{====}\nonumber\\
& \phantom{====} = \sup_{t \in \left[a,b\right]} \int_{0}^{t}\!\int_{S_{d}^{+}\! \setminus \left\{ 0\right\}\!}\! \frac{\operatorname{Tr}\!\left[\Sigma\!\left(s,T\right)\xi\right]}{T} \mu^{X\!}\!\left(ds,d\xi\right) \nonumber\\
& \phantom{=======} - \sup_{t \in \left[a,b\right]} \int_{0}^{t}\!\int_{S_{d}^{+}\! \setminus \left\{ 0\right\}\!}\!\left( \operatorname{Tr}\!\left[\sigma_{\infty\!}\!\left(s\right) \xi\right] - \frac{\operatorname{Tr}\!\left[\Sigma\!\left(s,T\right)\xi\right]}{T} + \frac{\operatorname{Tr}\!\left[\Sigma\!\left(s,T\right)\xi\right]}{T}\right) \mu^{X\!}\!\left(ds,d\xi\right)\nonumber\\
& \phantom{====} \geq - \sup_{t \in \left[a,b\right]} \int_{0}^{t}\!\int_{S_{d}^{+}\! \setminus \left\{ 0\right\}\!}\!\left(\operatorname{Tr}\!\left[\sigma_{\infty\!}\!\left(s\right) \xi\right] - \frac{\operatorname{Tr}\!\left[\Sigma\!\left(s,T\right)\xi\right]}{T}\right) \mu^{X\!}\!\left(ds,d\xi\right)\nonumber\\
& \phantom{====} = \inf_{t \in \left[a,b\right]} \int_{0}^{t}\!\int_{S_{d}^{+}\! \setminus \left\{ 0\right\}\!}\!\left( \frac{\operatorname{Tr}\!\left[\Sigma\!\left(s,T\right)\xi\right]}{T} - \operatorname{Tr}\!\left[\sigma_{\infty\!}\!\left(s\right) \xi\right]\right) \mu^{X\!}\!\left(ds,d\xi\right).
\end{align}
Hence, it follows from \eqref{Lemma_Help_Proposition_Equation_1} and \eqref{Lemma_Help_Proposition_Equation_2} that
\begin{align}\label{Lemma_Help_Proposition_Equation_3}
 & \left\vert \sup_{t \in \left[a,b\right]} \int_{0}^{t}\!\int_{S_{d}^{+}\! \setminus \left\{ 0\right\}\!}\! \frac{\operatorname{Tr}\!\left[\Sigma\!\left(s,T\right)\xi\right]}{T} \mu^{X\!}\!\left(ds,d\xi\right) - \sup_{t \in \left[a,b\right]} \int_{0}^{t}\!\int_{S_{d}^{+}\! \setminus \left\{ 0\right\}\!}\! \operatorname{Tr}\!\left[\sigma_{\infty\!}\!\left(s\right) \xi\right] \mu^{X\!}\!\left(ds,d\xi\right) \right\vert \phantom{====}\nonumber\\
 & \phantom{====} \leq \left\vert \sup_{t \in \left[a,b\right]}  \int_{0}^{t}\!\int_{S_{d}^{+}\! \setminus \left\{ 0\right\}\!}\!\left( \frac{\operatorname{Tr}\!\left[\Sigma\!\left(s,T\right)\xi\right]}{T} - \operatorname{Tr}\!\left[\sigma_{\infty\!}\!\left(s\right) \xi\right]\right) \mu^{X\!}\!\left(ds,d\xi\right) \right\vert \nonumber\\
 & \phantom{=======} \vee \left\vert \inf_{t \in \left[a,b\right]}  \int_{0}^{t}\!\int_{S_{d}^{+}\! \setminus \left\{ 0\right\}\!}\!\left( \frac{\operatorname{Tr}\!\left[\Sigma\!\left(s,T\right)\xi\right]}{T} - \operatorname{Tr}\!\left[\sigma_{\infty\!}\!\left(s\right) \xi\right]\right) \mu^{X\!}\!\left(ds,d\xi\right) \right\vert \nonumber\\
 & \phantom{====} \leq \sup_{t \in \left[a,b\right]}  \int_{0}^{t}\!\int_{S_{d}^{+}\! \setminus \left\{ 0\right\}}\!\left\vert\frac{\operatorname{Tr}\!\left[\Sigma\!\left(s,T\right)\xi\right]}{T} - \operatorname{Tr}\!\left[\sigma_{\infty\!}\!\left(s\right) \xi\right]\right\vert \mu^{X\!}\!\left(ds,d\xi\right). \nonumber
\end{align}

\end{proof}

\begin{proposition}\label{LongTermJumps_Vanish_Proposition}
 Under \emph{Assumption \ref{Assumption1}} and \emph{\ref{Assumption2}}, it holds for all $t \geq 0$:
\begin{equation}\label{Prop_LongTermJumps_Vanish_Equ}
 \lim\limits_{T \rightarrow \infty} \int_{0}^{t}\!\int_{S_{d}^{+}\! \setminus \left\{ 0\right\}}\! \frac{e^{\operatorname{Tr}\left[\Sigma\left(s,T\right)\,\xi\right]} - e^{\operatorname{Tr}\left[\Sigma\left(s,t\right)\,\xi\right]}}{T - t}\, \nu\!\left(ds,d\xi\right) = 
0\,,
\end{equation}
where $\Gamma\!\left(s,t\right)$, $s \geq 0$, is defined for all $t \geq 0$ as in \eqref{Definition_Gamma}, and the convergence in \eqref{Prop_LongTermJumps_Vanish_Equ} is in ucp.
\end{proposition}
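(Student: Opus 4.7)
The plan is to split the $\xi$-integration into small jumps ($\|\xi\|\le 1$) and large jumps ($\|\xi\|>1$), and to exploit that for $\xi\in S_d^+$ one has $\operatorname{Tr}[\Sigma(s,T)\xi]\le 0$, because $-\Sigma(s,T)=\int_s^T\sigma(s,u)\,du\in S_d^+$ under Assumption \ref{Assumption1}. Consequently both exponentials lie in $(0,1]$ and $|e^{\operatorname{Tr}[\Sigma(s,T)\xi]}-e^{\operatorname{Tr}[\Sigma(s,t)\xi]}|\le 2$, while for non-positive exponents the mean value theorem yields the sharper bound $|e^a-e^b|\le |a-b|$.

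For the large-jump part I would apply the crude estimate $|e^a-e^b|\le 2$, giving
\[
\left|\int_0^t\!\!\int_{\|\xi\|>1}\!\frac{e^{\operatorname{Tr}[\Sigma(s,T)\xi]}-e^{\operatorname{Tr}[\Sigma(s,t)\xi]}}{T-t}\,\nu(ds,d\xi)\right|\le \frac{2}{T-t}\int_0^t\!\!\int_{\|\xi\|>1}\!\nu(ds,d\xi).
\]
The admissibility integrability conditions \eqref{Constant_Jump_Term_Equation}--\eqref{Linear_Jump_Coefficient_Equation} imply $m(\{\|\xi\|>1\})<\infty$ and $\mu(\{\|\xi\|>1\})<\infty$, and $X$ is c\`{a}dl\`{a}g and hence bounded on $[0,t]$ by (4) of Section 2.8 in \cite{book_Applebaum}, so the inner integral is a.s.\ finite and the whole bound tends to $0$ as $T\to\infty$.

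For the small-jump part I would use the sharper mean value estimate
\[
|e^{\operatorname{Tr}[\Sigma(s,T)\xi]}-e^{\operatorname{Tr}[\Sigma(s,t)\xi]}|\le |\operatorname{Tr}[(\Sigma(s,T)-\Sigma(s,t))\xi]|\le \|\Sigma(s,T)-\Sigma(s,t)\|\,\|\xi\|,
\]
and invoke Assumption \ref{Assumption2} entry-wise to obtain $\|\Sigma(s,T)\|\le \sqrt{T}\,\|w(s)\|$, so the integrand is dominated by $(\sqrt{T}+\sqrt{t})(T-t)^{-1}\|w(s)\|\,\|\xi\|$. The $\xi$-integral of $\|\xi\|$ against $m(d\xi)+\operatorname{Tr}[X_s\,\mu(d\xi)]$ on $\{\|\xi\|\le 1\}$ is finite by \eqref{Constant_Jump_Term_Equation}--\eqref{Linear_Jump_Coefficient_Equation}, the map $s\mapsto \|w(s)\|$ integrates against $ds$ on $[0,t]$ by its c\`{a}dl\`{a}g property, and the prefactor $(\sqrt{T}+\sqrt{t})/(T-t)\to 0$ then forces the limit.

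Adding the two bounds produces pointwise convergence to $0$ for every $\omega$ outside a null set; since each bound is monotone in $t$, replacing $t$ by $b$ in the estimates dominates $\sup_{t\in[a,b]}|\cdot|$, yielding a.s.\ uniform convergence on compacts and hence ucp. The main obstacle is the small-jump contribution: the exponent $\operatorname{Tr}[\Sigma(s,T)\xi]$ is unbounded in $T$, so one cannot simply invoke dominated convergence with a $T$-uniform integrand. What saves the argument is the precise $\sqrt{t}$-scaling in \eqref{Assumption2_w}, which creates the $\sqrt{T}/(T-t)\to 0$ cancellation absent from a na\"{\i}ve linear bound.
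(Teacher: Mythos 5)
Your argument is correct, but it follows a genuinely different route from the paper's. The paper first rewrites the expression as the difference of two separately normalised integrals (see \eqref{Prop_LongTermJumps_Vanish_0}): the $\Sigma\!\left(s,t\right)$-piece is identified with $-\int_{0}^{t}(\tilde{F}(-\Sigma\!\left(s,t\right))+\operatorname{Tr}[\tilde{R}(-\Sigma\!\left(s,t\right))X_{s}])\,ds$ via the auxiliary functions \eqref{LongTermJumps_Vanish_Help_2}--\eqref{LongTermJumps_Vanish_Help_3}, recognised as a c\`{a}dl\`{a}g (hence locally bounded) process and killed by the $1/T$ factor, while the $\Sigma\!\left(s,T\right)$-piece is controlled through the inequality $1-e^{-x}\leq 1\wedge x$ together with \eqref{Assumption2_w} and then dispatched by dominated convergence against the dominating function $1\wedge\left\Vert w\!\left(s\right)\right\Vert\left\Vert\xi\right\Vert$, whose $\nu$-integrability is checked by splitting on $\{\left\Vert w\!\left(s\right)\right\Vert\leq 1\}$ versus $\{\left\Vert w\!\left(s\right)\right\Vert>1\}$. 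You instead keep the difference of exponentials together and split the jump domain by size: on $\{\left\Vert\xi\right\Vert>1\}$ the trivial bound by $2$ suffices because $\nu\!\left(\left[0,t\right]\times\{\left\Vert\xi\right\Vert>1\}\right)<\infty$ by \eqref{Constant_Jump_Term_Equation}--\eqref{Linear_Jump_Coefficient_Equation} and local boundedness of $X$, and on $\{\left\Vert\xi\right\Vert\leq 1\}$ the $1$-Lipschitz property of the exponential on $(-\infty,0]$ (legitimate, since $\operatorname{Tr}[\Sigma\!\left(s,T\right)\xi]\leq 0$ for $\xi\in S_{d}^{+}$) combined with \eqref{Assumption2_w} produces the factor $(\sqrt{T}+\sqrt{t})/(T-t)$. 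What your version buys is an explicit almost sure rate of order $T^{-1/2}$, uniform on compacts in $t$ by monotonicity of the bounds, so you obtain a.s.\ uniform convergence on compacts (stronger than ucp) without invoking the dominated convergence theorem or the functions $\tilde{F},\tilde{R}$; what the paper's version buys is consistency with the Riccati-type formalism used elsewhere and a domination argument that is reused almost verbatim in Proposition \ref{LongTermJumps_Proposition}. Both proofs hinge on the same two ingredients, the $\sqrt{t}$-scaling in Assumption \ref{Assumption2} and the $(\left\Vert\xi\right\Vert\wedge 1)$-integrability of $m$ and $\mu$, so your proposal is a valid and somewhat more elementary substitute.
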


\begin{proof}
Fix $t\geq 0$. We note that the left-hand side of \eqref{Prop_LongTermJumps_Vanish_Equ} is equal to
\small
\begin{equation}\label{Prop_LongTermJumps_Vanish_0}
 \lim\limits_{T \rightarrow \infty} \frac{1}{T}\!\left(\int_{0}^{t}\!\int_{S_{d}^{+}\! \setminus \left\{ 0\right\}\!}\! \left(1\!-\!e^{\operatorname{Tr}\left[\Sigma\left(s,t\right)\,\xi\right]}\right) \nu\!\left(ds,d\xi\right) - \int_{0}^{t}\!\int_{S_{d}^{+}\! \setminus \left\{ 0\right\}\!}\! \left(1\!-\!e^{\operatorname{Tr}\left[\Sigma\left(s,T\right)\,\xi\right]}\right) \nu\!\left(ds,d\xi\right)\!\right).
\end{equation}
\normalsize
Hence we study the limit \eqref{Prop_LongTermJumps_Vanish_0} and
we introduce for all $u \in S_{d}^{+}$
 \begin{align}
 & \tilde{F}\!\left(u\right) \colonequals \int_{S_{d}^{+}\! \setminus \left\{ 0\right\}\!}\!\left(e^{-\operatorname{Tr}\left[u \xi\right]} - 1\right)\, m\!\left(d\xi\right)\,, \label{LongTermJumps_Vanish_Help_2}\\
 & \tilde{R}\!\left(u\right) \colonequals \int_{S_{d}^{+}\! \setminus \left\{ 0\right\}\!}\!\left(e^{-\operatorname{Tr}\left[u \xi\right]} - 1\right)\, \mu\!\left(d\xi\right)\,. \label{LongTermJumps_Vanish_Help_3}
 \end{align}
Then, we can write due to \eqref{Affine_Process_Representation_Equation_2}, \eqref{LongTermJumps_Vanish_Help_2}, and \eqref{LongTermJumps_Vanish_Help_3} that
\begin{equation}\label{LongTermJumps_Vanish_Help_4}
\int_{0}^{t}\!\int_{S_{d}^{+}\! \setminus \left\{ 0\right\}\!}\! \left(1\!-\!e^{\operatorname{Tr}\left[\Sigma\left(s,t\right)\xi\right]}\right) \nu\!\left(ds,d\xi\right) = -\int_{0}^{t}\!\left(\tilde{F}\!\left(-\Sigma\!\left(s,t\right)\right) + \operatorname{Tr}\!\left[\tilde{R}\!\left(-\Sigma\!\left(s,t\right)\right)X_{s}\right]\right)ds.\nonumber
\end{equation}
The process $\tilde{F}\!\left(-\Sigma\!\left(s,t\right)\right) + \operatorname{Tr}\!\left[\tilde{R}\!\left(-\Sigma\!\left(s,t\right)\right)X_{s}\right], s \in \left[0,t\right],$ is c\`{a}dl\`{a}g for all $t\geq 0$ since $\tilde{F}$ and $\tilde{R}$ are continuous functions and $X$ is c\`{a}dl\`{a}g. 
Due to this and (4) of Section 2.8 in \cite{book_Applebaum} we get that for all compact intervals $\left[a,b\right]$ with $a,b \geq 0$
\begin{equation}\label{Prop_LongTermJumps__Vanish_1}
 \sup_{t \in \left[a,b\right]} \int_{0}^{t}\!\int_{S_{d}^{+}\! \setminus \left\{ 0\right\}\!}\!\left(1\!-\!e^{\operatorname{Tr}\left[\Sigma\left(s,t\right)\,\xi\right]}\right)\, \nu\!\left(ds,d\xi\right) < \infty\ \,\mathbb{P}\text{-a.s.}\nonumber
\end{equation}
Consequently on every compact interval $\left[a,b\right]$
\begin{equation}\label{Prop_LongTermJumps_Vanish_11}
\frac{1}{T} \sup_{t \in \left[a,b\right]} \int_{0}^{t}\!\int_{S_{d}^{+}\! \setminus \left\{ 0\right\}\!}\! \left(1\!-\!e^{\operatorname{Tr}\left[\Sigma\left(s,t\right)\,\xi\right]}\right)\, \nu\!\left(ds,d\xi\right) \overset{T \rightarrow \infty}{\longrightarrow} 0\ \,\mathbb{P}\text{-a.s.}\nonumber
\end{equation}
Therefore
\begin{equation}\label{Prop_LongTermJumps_Vanish_2}
\frac{1}{T} \int_{0}^{t}\!\int_{S_{d}^{+}\! \setminus \left\{ 0\right\}\!}\! \left(1\!-\!e^{\operatorname{Tr}\left[\Sigma\left(s,t\right)\,\xi\right]}\right)\, \nu\!\left(ds,d\xi\right) \overset{T \rightarrow \infty}{\longrightarrow} 0\, \text{ in ucp.}
\end{equation}
Next, we use the inequality
\begin{equation}\label{Prop_LongTermJumps_Inequ1}
 1\! -\! e^{\operatorname{Tr}\left[\Sigma\left(s,T\right)\,\xi\right]} =   1\! -\! e^{-\operatorname{Tr}\left[-\!\Sigma\left(s,T\right)\,\xi\right]} \leq 1\! \land\! \operatorname{Tr}\!\left[-\!\Sigma\!\left(s,T\right)\xi\right]
\overset{\eqref{Assumption2_w}}{\leq}  1\! \land\! \sqrt{T} \operatorname{Tr}\!\left[w\!\left(s\right)\xi\right]
\end{equation}
which holds for all $\xi \in S_{d}^{+}$ and a.e. $\omega \in \Omega$, to see that for all $0 \leq s \leq T$ with $T \geq 1$, and for a.e. $\omega \in \Omega$
\begin{align}\label{Prop_LongTermJumps_Inequ2}
 \frac{1-e^{\operatorname{Tr}\left[\Sigma\left(s,T\right)\,\xi\right]}}{T} & \overset{\eqref{Prop_LongTermJumps_Inequ1}}{\leq} \frac{1}{T} \land \frac{1}{\sqrt{T}} \operatorname{Tr}\!\left[w\!\left(s\right)\xi\right] 
\leq 1 \land \operatorname{Tr}\!\left[w\!\left(s\right)\xi\right] \leq  1 \land \left\Vert w\!\left(s\right)\right\Vert \left\Vert \xi \right\Vert \equalscolon i\!\left(s,\xi\right)\,.\nonumber
\end{align}
Since we investigate long-term interest rates it is sufficient to impose long times of maturity, say $T\geq 1$. 
\par
\noindent
Then, we show that the process $i$ is integrable with respect to the random measure $\nu$ on $\left[0,t\right] \times S_{d}^{+}$:
\small
\begin{align*}
 \int_{0}^{t}\!\int_{S_{d}^{+}\! \setminus \left\{ 0\right\}}\! i\!\left(s,\xi\right) \nu\!\left(ds,d\xi\right) 
& \overset{\phantom{\eqref{Affine_Process_Representation_Equation_2}}}{=} \int_{0}^{t}\!\int_{S_{d}^{+}\! \setminus \left\{ 0\right\}}\! i\!\left(s,\xi\right) \left( \mathbbm{1}_{\left\{\left\Vert w\!\left(s\right) \right\Vert \leq 1\right\}} +  \mathbbm{1}_{\left\{\left\Vert w\!\left(s\right) \right\Vert > 1\right\}}\right) \nu\!\left(ds,d\xi\right) \nonumber\\
& \overset{\phantom{\eqref{Affine_Process_Representation_Equation_2}}}{=} \int_{0}^{t}\!\int_{S_{d}^{+}\! \setminus \left\{ 0\right\}}\! i\!\left(s,\xi\right) \mathbbm{1}_{\left\{\left\Vert w\!\left(s\right) \right\Vert \leq 1\right\}}\, \nu\!\left(ds,d\xi\right)  \nonumber\\
& \phantom{===} + \int_{0}^{t}\!\int_{S_{d}^{+}\! \setminus \left\{ 0\right\}}\! i\!\left(s,\xi\right) \mathbbm{1}_{\left\{\left\Vert w\!\left(s\right)\right\Vert  > 1\right\}}\, \nu\!\left(ds,d\xi\right) \nonumber\\
& \overset{\phantom{\eqref{Affine_Process_Representation_Equation_2}}}{\leq} \int_{0}^{t}\!\int_{S_{d}^{+}\! \setminus \left\{ 0\right\}}\!\mathbbm{1}_{\left\{\left\Vert w\!\left(s\right) \right\Vert \leq 1\right\}} \left(1 \land \left\Vert \xi\right\Vert\right) \nu\!\left(ds,d\xi\right) \nonumber\\
& \phantom{===} + \int_{0}^{t}\!\int_{S_{d}^{+}\! \setminus \left\{ 0\right\}}\!\mathbbm{1}_{\left\{\left\Vert w\!\left(s\right)\right\Vert > 1\right\}} \left\Vert w\!\left(s\right)\right\Vert\left(1 \land \left\Vert \xi\right\Vert\right) \nu\!\left(ds,d\xi\right) \nonumber\\
& \overset{\eqref{Affine_Process_Representation_Equation_2}}{\leq} \int_{0}^{t}\!\int_{S_{d}^{+}\! \setminus \left\{ 0\right\}}\!\left(1 \land \left\Vert \xi\right\Vert\right) \nu\!\left(ds,d\xi\right) \nonumber\\
& \phantom{===} +  \int_{0}^{t}\!\left\Vert w\!\left(s\right)\right\Vert \mathbbm{1}_{\left\{\left\Vert w\!\left(s\right)\right\Vert > 1\right\}}\, ds \int_{S_{d}^{+}\! \setminus \left\{ 0\right\}}\!\left(1 \land \left\Vert\xi\right\Vert\right) m\!\left(d\xi\right) \nonumber\\
& \phantom{===} +  \operatorname{Tr}\!\left[\int_{0}^{t}\!\left\Vert w\!\left(s\right)\right\Vert \mathbbm{1}_{\left\{\left\Vert w\!\left(s\right)\right\Vert > 1\right\}} X_{s}\, ds \int_{S_{d}^{+}\! \setminus \left\{ 0\right\}}\!\left(1 \land \left\Vert\xi\right\Vert\right) \mu\!\left(d\xi\right)\right]\nonumber\\
& \overset{\phantom{\eqref{Affine_Process_Representation_Equation_2}}}{\leq} \int_{0}^{t}\!\int_{S_{d}^{+}\! \setminus \left\{ 0\right\}}\!\left(1 \land \left\Vert \xi\right\Vert\right) \nu\!\left(ds,d\xi\right) \nonumber\\
& \phantom{===} +  \int_{0}^{t}\!\left\Vert w\!\left(s\right)\right\Vert \, ds \int_{S_{d}^{+}\! \setminus \left\{ 0\right\}}\!\left(1 \land \left\Vert\xi\right\Vert\right) m\!\left(d\xi\right) \nonumber\\
& \phantom{===} +  \operatorname{Tr}\!\left[\int_{0}^{t}\!\left\Vert w\!\left(s\right)\right\Vert X_{s}\, ds \int_{S_{d}^{+}\! \setminus \left\{ 0\right\}}\!\left(1 \land \left\Vert\xi\right\Vert\right) \mu\!\left(d\xi\right)\right]\nonumber\\
& \overset{\phantom{\eqref{Affine_Process_Representation_Equation_2}}}{<} \infty\ \,\mathbb{P}\text{-a.s.}\nonumber
\end{align*}
\normalsize
because of \eqref{Constant_Jump_Term_Equation}, \eqref{Linear_Jump_Coefficient_Equation}, and (4) of Section 2.8 in \cite{book_Applebaum} applied for the c\`{a}dl\`{a}g processes $X$ and $\left\Vert w\!\left(t\right) \right\Vert X_{t}, t\geq 0$.
\par
\noindent
Then by the DCT we have that for all $t\geq 0$
\begin{equation}\label{Prop_LongTermJumps_Vanish_3}
 \int_{0}^{t}\!\int_{S_{d}^{+}\! \setminus \left\{ 0\right\}\!}\! \frac{e^{\operatorname{Tr}\left[\Sigma\left(s,T\right)\,\xi\right]}\!-\!1}{T}\, \nu\!\left(ds,d\xi\right) \overset{T \rightarrow \infty}{\longrightarrow} 0\ \,\mathbb{P}\text{-a.s.}
\end{equation}
With the same argument as in Proposition \ref{LongTermJumps_Proposition}, we then obtain that for $0\leq a < b$
\begin{equation}\label{Prop_LongTermJumps_Vanish_41}
 \sup_{t \in \left[a,b\right]} \int_{0}^{t}\!\int_{S_{d}^{+}\! \setminus \left\{ 0\right\}\!}\! \frac{1\!-\!e^{\operatorname{Tr}\left[\Sigma\left(s,T\right)\,\xi\right]}}{T}\, \nu\!\left(ds,d\xi\right)  
\leq  \int_{0}^{b}\!\int_{S_{d}^{+}\! \setminus \left\{ 0\right\}\!}\! \frac{1\!-\!e^{\operatorname{Tr}\left[\Sigma\left(s,T\right)\,\xi\right]}}{T}  \nu\!\left(ds,d\xi\right),\nonumber
\end{equation}
since $\nu$ is given by \eqref{Affine_Process_Representation_Equation_2}.
\par
\noindent
This converges to $0$ by \eqref{Prop_LongTermJumps_Vanish_3} applied for $t = b$.
Therefore a.e. $\omega \in \Omega$
\begin{equation}\label{Prop_LongTermJumps_Vanish_4}
 \sup_{t \in \left[a,b\right]} \int_{0}^{t}\!\int_{S_{d}^{+}\! \setminus \left\{ 0\right\}\!}\! \frac{1\!-\!e^{\operatorname{Tr}\left[\Sigma\left(s,T\right)\,\xi\right]}}{T}\, \nu\!\left(ds,d\xi\right) \overset{T \rightarrow \infty}{\longrightarrow} 0\,,\nonumber
\end{equation}
i.\,\,e.\,\,by \cite{Book_Protter}, page 57
\begin{equation}\label{Prop_LongTermJumps_Vanish_5}
 \int_{0}^{t}\!\int_{S_{d}^{+}\! \setminus \left\{ 0\right\}\!}\! \frac{e^{\operatorname{Tr}\left[\Sigma\left(s,T\right)\,\xi\right]}\!-\!1}{T}\, \nu\!\left(ds,d\xi\right) \overset{T \rightarrow \infty}{\longrightarrow} 0\, \text{ in ucp.}
\end{equation}
The result \eqref{Prop_LongTermJumps_Vanish_Equ} follows then by \eqref{Prop_LongTermJumps_Vanish_0}, \eqref{Prop_LongTermJumps_Vanish_2}, and \eqref{Prop_LongTermJumps_Vanish_5}.
\end{proof}

We are now ready to state the main result of this section.

\begin{theorem}\label{Main_Long_Term_Yield}
 Under  \emph{Assumption \ref{Assumption1}} and \emph{\ref{Assumption2}}, the long-term yield is given by
\begin{equation}\label{LongTermYield1}
 \ell_{t} = \ell_{0} + 2\, \int_{0}^{t}\! \operatorname{Tr}\!\left[Q\, \mu_{\infty}\!\left(s\right) Q^{\top}\right] ds,\, t\geq 0, 
\end{equation}
with $\operatorname{Tr}\!\left[Q\, \mu_{\infty}\!\left(s\right) Q^{\top}\right] \geq 0$ for all $0 \leq s \leq t$ if it exists in a finite form.
\end{theorem}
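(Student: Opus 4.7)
The plan is to take the limit $T \to \infty$ in the yield decomposition already provided by Lemma \ref{Cont_Comp_Spot_Rate_Lemma} and invoke the five convergence results of this section term by term. The starting point is the decomposition of $Y(t,T)$ as a sum of (i) the initial-curve piece $Y(0;t,T)$, (ii) the $\Gamma$-drift integral, (iii) the exponential-jump compensator integral, (iv) the raw-jump integral of $\operatorname{Tr}[(\Sigma(s,T)-\Sigma(s,t))\xi]/(T-t)$, and (v) the Brownian stochastic integral of $\Sigma(s,T)-\Sigma(s,t)$ normalised by $T-t$. Proposition \ref{LongTermYield0} sends (i) to $\ell_0$; Proposition \ref{LongTermDrift0} sends (ii) to $2\int_0^t \operatorname{Tr}[Q\mu_{\infty}(s)Q^{\top}]ds$; Proposition \ref{LongTermJumps_Vanish_Proposition} sends (iii) to $0$; Proposition \ref{LongTermJumps_Proposition} sends (iv) to $\int_0^t\!\int \operatorname{Tr}[\sigma_{\infty}(s)\xi]\,\mu^{X}(ds,d\xi)$; and Proposition \ref{LongTermVola0} sends (v) to $2\int_0^t \operatorname{Tr}[\sigma_{\infty}(s)\sqrt{X_s}\,dW_s\,Q]$. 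Because all five convergences are ucp (or a.s.) and the hypothesis of the theorem assumes that $\ell_t$ exists in a finite form, the ucp limits sum to $\ell_t$.

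The key observation that collapses the expression to the claimed one is that Assumption \ref{Assumption2} actually forces $\sigma_{\infty}\equiv 0$: dividing the bound $|\Sigma(s,t)_{ij}|/\sqrt{t} \leq w_{ij}(s)$ by $\sqrt{t}$ gives $|\Sigma(s,t)_{ij}|/t \leq w_{ij}(s)/\sqrt{t} \to 0$ as $t \to \infty$, so by \eqref{LongTermVolaEqu} every entry of $\sigma_{\infty}(s)$ vanishes. Hence the stochastic integral in term (v) and the jump integral in term (iv) vanish in the limit, and only $\ell_0$ together with the drift contribution $2\int_0^t \operatorname{Tr}[Q\mu_{\infty}(s)Q^{\top}]ds$ remain, which is precisely \eqref{LongTermYield1}. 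This observation is the real engine of the theorem, and it is also consistent with the DIR-type result that the long-term yield must be non-decreasing (hence of finite variation), so the local martingale contribution is compelled to disappear.

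For the non-negativity statement I would argue as follows. For every $s, T$ with $s \leq T$ one has $-\Sigma(s,T) \in S_d^+$ (noted right after Proposition \ref{Bond_Price_Process_Proposition}), while $X_s \in S_d^+$ by conservativity of $X$; since $A\mapsto ABA$ maps $S_d^+$ into itself when $B\in S_d^+$, the matrix $\Gamma(s,T) = \Sigma(s,T)X_s\Sigma(s,T) = (-\Sigma(s,T))X_s(-\Sigma(s,T))$ lies in $S_d^+$. The cone $S_d^+$ is closed, so the limit $\mu_{\infty}(s) = \lim_{T\to\infty}\Gamma(s,T)/T$ belongs to $S_d^+$. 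Using the trace identity and the cyclicity of trace,
\begin{equation*}
\operatorname{Tr}[Q\mu_{\infty}(s)Q^{\top}] = \operatorname{Tr}[Q^{\top}Q\,\mu_{\infty}(s)] = \operatorname{Tr}[\alpha\,\mu_{\infty}(s)]\,,
\end{equation*}
and since the trace of a product of two symmetric positive semidefinite matrices is non-negative (e.g.\ $\operatorname{Tr}[\alpha\mu_{\infty}] = \operatorname{Tr}[\sqrt{\alpha}\,\mu_{\infty}\sqrt{\alpha}] \geq 0$), the claim follows. The main obstacle is conceptual rather than technical: recognising that the growth bound of Assumption \ref{Assumption2} is tuned precisely so that the volatility component dies in the long-maturity limit, leaving a purely drift-type expression that is automatically monotone.
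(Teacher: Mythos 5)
Your proposal is correct and follows the same overall architecture as the paper: decompose $Y(t,T)$ via Lemma \ref{Cont_Comp_Spot_Rate_Lemma} and pass to the limit term by term using Propositions \ref{LongTermYield0}, \ref{LongTermVola0}, \ref{LongTermDrift0}, \ref{LongTermJumps_Proposition}, and \ref{LongTermJumps_Vanish_Proposition}. Where you genuinely diverge is in how the residual terms $-2\int_0^t\operatorname{Tr}[\sigma_{\infty}(s)\sqrt{X_s}\,dW_s\,Q]$ and $-\int_0^t\!\int\operatorname{Tr}[\sigma_{\infty}(s)\xi]\,\mu^{X}(ds,d\xi)$ are eliminated. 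The paper argues by contradiction: if $0<\Vert\sigma_{\infty}(t)\Vert<\infty$ for some $t$, then $\Sigma(t,T)\in\operatorname{O}(T-t)$, so $\Gamma(t,T)/(T-t)$ diverges and $\operatorname{Tr}[Q\mu_{\infty}(t)Q^{\top}]=\infty$, contradicting the assumed finiteness of $\ell_t$; hence $\sigma_{\infty}\equiv 0$ whenever the long-term yield is finite. You instead observe that Assumption \ref{Assumption2} already gives $|\Sigma(s,T)_{ij}|\leq\sqrt{T}\,w_{ij}(s)$, so $|\Sigma(s,T)_{ij}|/T\leq w_{ij}(s)/\sqrt{T}\to 0$ and $\sigma_{\infty}\equiv 0$ unconditionally. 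Your route is more direct and makes the vanishing of the martingale and jump contributions a consequence of the standing hypotheses rather than of the finiteness caveat; it is consistent with the paper's summary table, where the rows with $\Vert\sigma_{\infty}(t)\Vert>0$ correspond to volatility regimes ($\sigma(t,T)\sim\operatorname{O}(1)$ or $\operatorname{O}(T-t)$) under which Assumption \ref{Assumption2} cannot hold. The paper's contradiction argument, on the other hand, is what justifies the phrasing ``if it exists in a finite form'' and covers the boundary of the table. For the non-negativity, your argument ($\Gamma(s,T)=(-\Sigma)X_s(-\Sigma)\in S_d^{+}$, closedness of the cone, and $\operatorname{Tr}[\alpha\,\mu_{\infty}]=\operatorname{Tr}[\sqrt{\alpha}\,\mu_{\infty}\sqrt{\alpha}]\geq 0$) is an equivalent reformulation of the paper's computation $\operatorname{Tr}[Q\,\Gamma(t,T)Q^{\top}]=\Vert\sqrt{X_t}\,\Sigma(t,T)Q^{\top}\Vert^{2}\geq 0$. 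No gaps.
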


\begin{proof}
 By Lemma \ref{Cont_Comp_Spot_Rate_Lemma}, Proposition \ref{LongTermYield0}, Proposition \ref{LongTermVola0}, Proposition \ref{LongTermDrift0}, Proposition \ref{LongTermJumps_Proposition}, and Proposition \ref{LongTermJumps_Vanish_Proposition},  the long-term yield can be written in the following way:
\begin{align*}
   \ell_{t} & = \ell_{0} + 2 \int_{0}^{t}\! \operatorname{Tr}\!\left[Q\,\mu_{\infty}\!\left(s\right)Q^{\top}\right]\,ds - 2 \int_{0}^{t}\!\operatorname{Tr}\!\left[\sigma_{\infty}\!\left(s\right) \sqrt{X_{s}}\, dW_{s}\, Q\right] \nonumber\\
& \phantom{====} - \int_{0}^{t}\!\int_{S_{d}^{+}\! \setminus \left\{ 0\right\}\!}\!  \operatorname{Tr}\!\left[\sigma_{\infty}\!\left(s\right) \xi\right] \mu^{X\!}\!\left(ds,d\xi\right),\, t\geq 0,
\end{align*}
whereas the convergence is uniformly on compacts in probability. 
However if $0 < \left\Vert \sigma_{\infty}\!\left(t\right) \right\Vert < \infty$ for some $t \in \left[0,T\right]$, by \eqref{LongTermVolaEqu} we have 
$\Sigma\!\left(t,T\right)_{ij} \in \operatorname{O}\!\left(T-t\right)$ for all $i,j \in \left\{1,\dots,d\right\}$. 
Then we get for all $t\geq 0$
\begin{align*}
 \operatorname{Tr}\!\left[Q\, \mu_{\infty}\!\left(t\right) Q^{\top}\right] & \overset{\eqref{LongTermDriftEqu}}{=} \sum_{i,j,k} Q_{ij} \lim_{T \rightarrow \infty} \frac{\Gamma\!\left(t,T\right)_{jk}}{T-t}\, Q^{\top}_{ki}\nonumber\\
& \overset{\eqref{Definition_Gamma}}{=} \lim_{T \rightarrow \infty} \frac{1}{T-t} \sum_{i,j,k,l,m} Q_{ij}\, \Sigma\!\left(t,T\right)_{jl} X_{lm,t}\, \Sigma\!\left(t,T\right)_{mk} Q_{ik} \nonumber\\
& = \infty\ \,\mathbb{P}\text{-a.s.}
\end{align*}
That is a contradiction to the existence of the long-term yield. 
Hence representation \eqref{LongTermYield1} holds if the long-term yield exists in a finite form. 
Moreover, we have for all $t\geq 0$
\begin{align}\label{Non_Negative_Long_Term_Drift_Equation_1}
 \operatorname{Tr}\!\left[Q\, \mu_{\infty}\!\left(t\right) Q^{\top}\right] & \underset{\phantom{(\ref{Definition_Gamma})}}{\overset{(\ref{LongTermDriftEqu})}{=}} \lim\limits_{T \rightarrow \infty} \frac{1}{T-t} \operatorname{Tr}\!\left[Q \, \Gamma\!\left(t,T\right) Q^{\top}\right]\nonumber \\
& \overset{(\ref{Definition_Gamma})}{=} \lim\limits_{T \rightarrow \infty} \frac{1}{T-t} \operatorname{Tr}\!\left[Q \, \Sigma\!\left(t,T\right) X_{t}\, \Sigma\!\left(t,T\right) Q^{\top}\right] \nonumber\\
& \overset{\phantom{(\ref{Definition_Gamma})}}{=} \lim\limits_{T \rightarrow \infty} \frac{1}{T-t} \left\Vert \sqrt{X_{t}}\, \Sigma\!\left(t,T\right) Q^{\top}\right\Vert^{2} \geq 0\ \,\mathbb{P}\text{-a.s.}
\end{align}
This concludes the proof.
\end{proof}

It follows immediately that $\left(\ell_{t}\right)_{t \geq 0}$ is a non-decreasing process. 
This is a well-known result, which was shown for the first time in 1996 by Dybvig, Ingersoll and Ross in \cite{dybvig96} and generally proven in \cite{article_Hubalek}. 
Our main contribution is to compute explicitly the form of the long-term yield in dependence of the model's parameters.
In particular, we obtain that 
the drift $\mu_{\infty}$ is given by a stochastic process which depends on (the limit of) 
the volatility and on $X$.  
This implies that the representation \eqref{LongTermYield1} of $\ell$ remains the same under a change 
of equivalent probability measures. This result can be proven by applying the convergence results of 
Propositions \ref{LongTermJumps_Proposition} and \ref{LongTermJumps_Vanish_Proposition} to the yield expressed in the form \eqref{Remark_Yield_Equation_0}, 
which provides the form of $Y$ under a change of equivalent probability measures. 
In this way we extend a result of \cite{Karoui97} and \cite{bh2012} to a multifactor setting.
\par
To conclude we now discuss some conditions on the volatility process $\sigma\!\left(t,T\right)$ that guarantee the existence of the long-term drift $\mu_{\infty}$.

\begin{proposition}\label{Behaviour_Long_Term_Drift_Proposition_1}
Let $\sigma\!\left(t,T\right) \in \operatorname{O}\!\left(\frac{1}{\sqrt{T-t}}\right)$ for every $t \geq 0$, i.e. 
$\sigma\!\left(t,T\right)_{ij} \in \operatorname{O}\!\left(\frac{1}{\sqrt{T-t}}\right)$ for all $i,j \in \left\{1,\dots,d\right\}$ 
$\mathbb{P}$-a.s.
Under the setting outlined in \emph{Section \ref{Affine_HJM_Framework}}, we get
  \begin{equation}\label{Behaviour_Long_Term_Drift_Equation_0}
\operatorname{Tr}\!\left[Q\, \mu_{\infty}\!\left(t\right) Q^{\top}\right] < \infty\,\ \mathbb{P}\text{-a.s.}  \nonumber 
  \end{equation}
\end{proposition}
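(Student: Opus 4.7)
The plan is to translate the hypothesis on $\sigma$ into a quantitative bound on $\Sigma(t,T)$, and then exploit the representation of $\operatorname{Tr}[Q\,\mu_{\infty}(t)Q^{\top}]$ as the limit of a quadratic form in $\Sigma(t,T)$ that already appears inside the proof of Theorem \ref{Main_Long_Term_Yield} (see \eqref{Non_Negative_Long_Term_Drift_Equation_1}).

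First I would unpack the $\operatorname{O}$-hypothesis. For each fixed $t\geq 0$ and each pair $(i,j)$ there exists (a.s.) a random threshold $T_{0}(t)\geq t$ and a random constant $C_{ij}(t)<\infty$ such that $|\sigma(t,u)_{ij}|\leq C_{ij}(t)/\sqrt{u-t}$ for every $u\geq T_{0}(t)$. Splitting the integral defining $\Sigma(t,T)$ in \eqref{Sigma_Definition} at $T_{0}(t)$ and using the a.s.\ boundedness of $\sigma$ on compacts from Assumption \ref{Assumption1} gives
\begin{equation*}
 |\Sigma(t,T)_{ij}|\;\leq\; \int_{t}^{T_{0}(t)}|\sigma(t,u)_{ij}|\,du \;+\; 2C_{ij}(t)\bigl(\sqrt{T-t}-\sqrt{T_{0}(t)-t}\bigr),
\end{equation*}
so that $|\Sigma(t,T)_{ij}|^{2}/(T-t)$ is a.s.\ bounded in $T$, and consequently $\|\Sigma(t,T)\|^{2}/(T-t)$ admits an a.s.\ finite $\limsup$ as $T\to\infty$, which I will denote $K(t)<\infty$.

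Next I would invoke the identity established in \eqref{Non_Negative_Long_Term_Drift_Equation_1}, namely
\begin{equation*}
 \operatorname{Tr}\!\left[Q\,\mu_{\infty}(t)\,Q^{\top}\right] \;=\; \lim_{T\to\infty}\frac{\bigl\|\sqrt{X_{t}}\,\Sigma(t,T)\,Q^{\top}\bigr\|^{2}}{T-t},
\end{equation*}
which is valid whenever the right-hand side converges (the sequence is non-negative, so the limit is always well defined in $[0,\infty]$). Using submultiplicativity of the Frobenius norm together with the a.s.\ finiteness of $\|X_{t}\|$ and the deterministic finiteness of $\|Q\|$, I would estimate
\begin{equation*}
 \frac{\bigl\|\sqrt{X_{t}}\,\Sigma(t,T)\,Q^{\top}\bigr\|^{2}}{T-t}\;\leq\; \|X_{t}\|\,\|Q\|^{2}\,\frac{\|\Sigma(t,T)\|^{2}}{T-t},
\end{equation*}
and passing to the $\limsup$ gives $\operatorname{Tr}[Q\,\mu_{\infty}(t)\,Q^{\top}]\leq \|X_{t}\|\,\|Q\|^{2}\,K(t)<\infty$ $\mathbb{P}$-a.s., which is the claim.

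I do not anticipate a real obstacle: the hypothesis is tailored so that the $1/\sqrt{T-t}$ decay of $\sigma$ integrates to $\sqrt{T-t}$ growth of $\Sigma$, which is exactly the borderline growth compatible with a finite ratio $\Gamma(t,T)/(T-t)$. The only point requiring a small amount of care is that the constants $C_{ij}(t)$, $T_{0}(t)$ and $K(t)$ are random, so the estimate must be written $\mathbb{P}$-a.s.\ on a set of full measure, which is compatible with the a.s.\ statement of the proposition.
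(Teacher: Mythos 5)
Your argument is correct and is essentially the paper's own proof: both start from the identity \eqref{Non_Negative_Long_Term_Drift_Equation_1}, write $\Sigma\!\left(t,T\right)$ via \eqref{Sigma_Definition}, and use that the $\operatorname{O}\!\left(\frac{1}{\sqrt{T-t}}\right)$ decay of $\sigma$ integrates to at most $\operatorname{O}\!\left(\sqrt{T-t}\right)$ growth of $\Sigma\!\left(t,T\right)$, which is exactly cancelled by the $\frac{1}{T-t}$ factor. You merely spell out the quantitative details (the random threshold, the split of the integral, the norm submultiplicativity) that the paper leaves implicit, so no substantive difference.
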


\begin{proof}
 Let $t \geq 0$ and $\sigma\!\left(t,T\right)  \in \operatorname{O}\!\left(\frac{1}{\sqrt{T-t}}\right)$. 
 Then 
\begin{align*}\label{Behaviour_Long_Term_Drift_Equation_1_1}
 \operatorname{Tr}\!\left[Q\, \mu_{\infty}\!\left(t\right) Q^{\top}\right] & \overset{(\ref{Non_Negative_Long_Term_Drift_Equation_1})}{=} \lim\limits_{T \rightarrow \infty} \frac{1}{T\!-\!t} \left\Vert \sqrt{X_{t}}\, \Sigma\!\left(t,T\right) Q^{\top}\right\Vert^{2} \nonumber\\
& \underset{\phantom{(\ref{Non_Negative_Long_Term_Drift_Equation_1})}}{\overset{(\ref{Sigma_Definition})}{=}} \lim\limits_{T \rightarrow \infty} \frac{1}{T\!-\!t} \sum_{i,j,k,l,m} Q_{ij}\, \int\limits_{t}^{T}\! \sigma\!\left(t,u\right)_{jk}\! du \  X_{kl,t}\, \int\limits_{t}^{T}\! \sigma\!\left(t,u\right)_{lm}\! du\  Q_{mi}^{\top}\nonumber\\
& \underset{\phantom{\eqref{Non_Negative_Long_Term_Drift_Equation_1}}}{<} \infty\ \,\mathbb{P}\text{-a.s.}
\end{align*}
\end{proof}

\begin{proposition}\label{Behaviour_Long_Term_Drift_Proposition_2}
  Let $\sigma\!\left(t,T\right) \in \operatorname{O}\!\left(\frac{1}{T-t}\right)$ for every $t \geq 0$ $\mathbb{P}$-a.s. Under the setting outlined in \emph{Section \ref{Affine_HJM_Framework}}, we get
\begin{equation}\label{Behaviour_Long_Term_Drift_Equation_2_1}
 \mu_{\infty}\!\left(t\right) = 0\nonumber
\end{equation}
 and therefore $\left(\ell_{t}\right)_{t \geq 0}$ is constant.
\end{proposition}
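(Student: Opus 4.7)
The plan is to replicate the strategy of Proposition \ref{Behaviour_Long_Term_Drift_Proposition_1}, exploiting now that the stronger decay $\sigma(t,T) \in \operatorname{O}(1/(T-t))$ forces $\Sigma(t,T)/(T-t)$ to vanish in the limit, so that in particular $\mu_{\infty}(t) = 0$.

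First I would bound the growth of each entry of $\Sigma(t,T) = -\int_{t}^{T} \sigma(t,u)\, du$ as $T \to \infty$. Because $1/(u-t)$ fails to be integrable at the lower endpoint, I split the integral at a fixed $u_{0} > t$: on $[t, u_{0}]$ the integrand is bounded $\mathbb{P}$-a.s. by Assumption \ref{Assumption1} (giving a finite contribution $K = K(t,\omega)$), while on $[u_{0}, T]$ the hypothesis yields $\left\vert \sigma(t,u)_{ij} \right\vert \leq C/(u-t)$, so
\begin{equation*}
 \left\vert \Sigma(t,T)_{ij} \right\vert \leq K + C \int_{u_{0}}^{T} \frac{du}{u-t} = K + C \log \frac{T-t}{u_{0}-t} = \operatorname{O}(\log(T-t))
\end{equation*}
$\mathbb{P}$-a.s., for every fixed $t$ and all $i,j \in \left\{1,\dots,d\right\}$.

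Plugging this estimate into the definition $\Gamma(t,T) = \Sigma(t,T) X_{t} \Sigma(t,T)$ of \eqref{Definition_Gamma}, I then compute entrywise
\begin{equation*}
\frac{\Gamma(t,T)_{ij}}{T-t} = \frac{1}{T-t} \sum_{k,l} \Sigma(t,T)_{ik}\, X_{kl,t}\, \Sigma(t,T)_{lj} = \operatorname{O}\!\left(\frac{(\log(T-t))^{2}}{T-t}\right) \longrightarrow 0
\end{equation*}
as $T \to \infty$, $\mathbb{P}$-a.s. By \eqref{LongTermDriftEqu} this gives $\mu_{\infty}(t) = 0$ for every $t \geq 0$. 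Substituting into the expression \eqref{LongTermYield1} from Theorem \ref{Main_Long_Term_Yield} yields $\ell_{t} = \ell_{0}$ for all $t \geq 0$, i.e. the long-term yield is constant.

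The only technical subtlety is handling the non-integrable singularity of the pointwise bound $1/(u-t)$ at $u = t$, which is resolved by the splitting argument via the local boundedness of $\sigma(t,\cdot)$ supplied by Assumption \ref{Assumption1}. Everything else is straightforward asymptotic analysis paralleling Proposition \ref{Behaviour_Long_Term_Drift_Proposition_1}.
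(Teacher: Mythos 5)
Your proof follows essentially the same route as the paper: derive $\Sigma\!\left(t,T\right)_{ij} \in \operatorname{O}\!\left(\log\!\left(T-t\right)\right)$ from the hypothesis, conclude that $\Gamma\!\left(t,T\right)_{ij}/\left(T-t\right) \rightarrow 0$ entrywise, and then read off $\ell_{t} = \ell_{0}$ from \eqref{LongTermYield1}. The only difference is that you explicitly justify the logarithmic bound by splitting the integral near the singularity at $u = t$, a detail the paper's proof states without elaboration; this is a welcome clarification but not a different argument.
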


\begin{proof}
Let $t \geq 0$ and $\sigma\!\left(t,T\right) \in \operatorname{O}\!\left(\frac{1}{T-t}\right)$, i.e. for all $i,j \in \left\{1,\dots,d\right\}$ 
it is $\sigma\!\left(t,T\right)_{ij} \in \operatorname{O}\!\left(\frac{1}{T-t}\right)$. Then, we get for all $i,j \in \left\{1,\dots,d\right\}$ that 
$\Sigma\!\left(t,T\right)_{ij} \in \operatorname{O}\!\left(\log\!\left(T-t\right)\right)$ and therefore for all $i,j,k,l \in \left\{1,\dots,d\right\}$ that 
\begin{equation}\label{Behaviour_Long_Term_Drift_Equation_2_2}
 \lim\limits_{T \rightarrow \infty} \frac{1}{T-t} \, \Sigma\!\left(t,T\right)_{ij} \, \Sigma\!\left(t,T\right)_{kl} = 0\ \,\mathbb{P}\text{-a.s.}
\end{equation}
Hence, for all $i,j \in \left\{1,\dots,d\right\}$ it is
\begin{equation}
 \mu_{\infty\!}\!\left(t\right)_{ij} \overset{(\ref{LongTermDriftEqu})}{=} \lim\limits_{T \rightarrow \infty\!}\! \frac{\Gamma\!\left(t,T\right)_{ij}}{T-t} \overset{(\ref{Definition_Gamma})}{=}  \lim\limits_{T \rightarrow \infty}\! \frac{1}{T\!-\!t} \sum_{k,l} \Sigma\!\left(t,T\right)_{ik}  X_{kl,t}  \Sigma\!\left(t,T\right)_{lj} \overset{(\ref{Behaviour_Long_Term_Drift_Equation_2_2})}{=}\! 0\ \,\mathbb{P}\text{-a.s.}\nonumber
\end{equation}
By \eqref{LongTermYield1} this yields 
$\ell_{t} = \ell_{0}$ for all $t \geq 0$, i.e. $\left(\ell_{t}\right)_{t \geq 0}$ is constant.
\end{proof}

The following table summarises the results regarding the convergence behaviour of the long-term yield for all $t\geq 0$.

\par
\noindent
\newline
\begin{tabular}[c]{||p{3.45cm}|p{2.83cm}|p{2.24cm}|p{2.83cm}||}\hline
\small{Long-term drift} & \small{Long-term volatility} & \small{Long-term yield} & \small{Volatility curve}\\ \hline \hline
\small{$\operatorname{Tr}\!\left[Q\,\mu_{\infty}\!\left(t\right)Q^{\top}\right] = \infty$} & \small{$0 < \left\Vert \sigma_{\infty}\!\left(t\right)\right\Vert < \infty$} & \small{infinite} & \small{$\sigma\!\left(t,T\right) \sim \operatorname{O}\!\left(1\right)$}\phantom{\huge{I}}\\ \hline
\small{$\operatorname{Tr}\!\left[Q\,\mu_{\infty}\!\left(t\right)Q^{\top}\right] = \infty$} & \small{$0 < \left\Vert \sigma_{\infty}\!\left(t\right)\right\Vert < \infty$} & \small{infinite} & \small{$\sigma\!\left(t,T\right) \sim \operatorname{O}\!\left(T\!-\!t\right)$}\phantom{\huge{I}}\\ \hline
\small{$\operatorname{Tr}\!\left[Q\,\mu_{\infty}\!\left(t\right)Q^{\top}\right] = 0$} & \small{$\left\Vert \sigma_{\infty}\!\left(t\right)\right\Vert = 0$} & \small{constant} & \small{$\sigma\!\left(t,T\right) \sim \operatorname{O}\!\left(\frac{1}{T-t}\right)$}\phantom{\huge{I}}\\ \hline
\small{$0\! <\! \operatorname{Tr}\!\left[Q\, \mu_{\infty}\!\left(t\right)Q^{\top}\right]\! <\! \infty$} & \small{$\left\Vert \sigma_{\infty}\!\left(t\right)\right\Vert = 0$}\phantom{\huge{I}} & \small{non-decreasing} & \small{$\sigma\!\left(t,T\right) \sim \operatorname{O}\!\left(\frac{1}{\sqrt{T-t}}\right)$}\\ \hline
\end{tabular}

\section{Examples}\label{Examples}

In this section we present two examples for the long-term yield driven by an affine process on $S_{d}^{+}$ under two different volatility specifications. 
Detailed calculations of this example can be found in \cite{Haertel_Thesis}.
Consider the affine process on $S_{d}^{+}$ defined as 
\begin{equation}\label{Example_1}
 dX_{t} \colonequals a I_{d} dt + \sqrt{X_{t}} dW_{t} + dW_{t}^{\top} \sqrt{X_{t}}
\end{equation}
for $t \geq 0$, $a \geq \left(d - 1\right)$, $W$ a $d$-dimensional matrix Brownian motion, and $I_{d}$ denoting the $d$-dimensional unit matrix. 
Here we model directly under the equivalence martingale measure $\mathbb{Q}$, i.e.\,\,we assume\,\,$\nu = \nu^{\ast}$ which implies $K\!\left(s,\xi\right) = 1$ for all $s \geq 0$ and 
$\xi \in S_{d}^{+}$, and $W = W^{\ast}$ which implies $\gamma \equiv 0$.
\par
By Theorem \ref{HJM_Drift_Condition_Theorem} the $\mathbb{Q}$-dynamics of the forward rate are given by
\small
\begin{equation}
\label{Example_2}
f\!\left(t,T\right) = f\!\left(0,T\right) + 4 \int_{0}^{t}\! \operatorname{Tr}\left[\sigma\!\left(s,T\right) X_{s}\, \left(-\Sigma\!\left(s,T\right)\right)\right]ds + 2 \int_{0}^{t}\! \operatorname{Tr}\left[\sigma\!\left(s,T\right) \sqrt{X_{s}}\, dW_{s}\right]\,.
\end{equation}
\normalsize
In the sequel we use the following lemma.
\begin{lemma}\label{Example_Lemma}
 Let $\left(X_{t}\right)_{t\geq 0}$ be defined as in \eqref{Example_1}, $\sigma$ as in \emph{Assumption \ref{Assumption1}} and $f \in \mathcal{C}^{1}\!\left(\mathbb{R}\right)$. 
 Then, it holds for all $0 \leq t \leq T$:
 \begin{align*}\label{Example_Lemma_1}
              2\, \int_{0}^{t}\! f\!\left(s\right)\operatorname{Tr}\!\left[\sigma\!\left(s,T\right) \sqrt{X_{s}}\, dW_{s}\right] & = 
               - \int_{0}^{t}\! f\!\left(s\right)\left( \operatorname{Tr}\!\left[X_{s}\, \partial_{s}\sigma\!\left(s,T\right)\right] + a \operatorname{Tr}\!\left[\sigma\!\left(s,T\right)\right]\right) ds \nonumber\\
              & \phantom{==} + f\!\left(t\right) \operatorname{Tr}\!\left[\sigma\!\left(t,T\right) X_{t}\right] 
               -  f\!\left(0\right) \operatorname{Tr}\!\left[\sigma\!\left(0,T\right) X_{0}\right] \nonumber\\
              & \phantom{==} - \int_{0}^{t}\! f'\!\left(s\right) \operatorname{Tr}\!\left[\sigma\!\left(s,T\right) X_{s}\right] ds  \,.
 \end{align*}
\end{lemma}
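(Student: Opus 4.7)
The plan is to prove the identity by applying stochastic integration by parts (Itô's product rule) to the real-valued semimartingale $s \mapsto f(s)\operatorname{Tr}\!\left[\sigma(s,T) X_s\right]$ on $[0,t]$, and then isolating the stochastic integral $2\int_0^t f(s)\operatorname{Tr}\!\left[\sigma(s,T)\sqrt{X_s}\,dW_s\right]$ on one side. Since $f$ is deterministic and $C^{1}$, and $\sigma(s,T)$ is assumed (for this example) to be sufficiently smooth in $s$, there are no mixed quadratic-covariation terms beyond what already appears in the dynamics of $X$ via \eqref{Example_1}.

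First, I would compute the differential $d\operatorname{Tr}\!\left[\sigma(s,T) X_s\right]$. Splitting off the explicit time dependence of $\sigma$ and substituting \eqref{Example_1} gives
\begin{equation*}
d\operatorname{Tr}\!\left[\sigma(s,T) X_s\right] = \operatorname{Tr}\!\left[\partial_s \sigma(s,T) X_s\right]ds + a\operatorname{Tr}\!\left[\sigma(s,T)\right]ds + \operatorname{Tr}\!\left[\sigma(s,T)\sqrt{X_s}\,dW_s\right] + \operatorname{Tr}\!\left[\sigma(s,T)\,dW_s^\top\sqrt{X_s}\right].
\end{equation*}
The key algebraic observation is that, since both $\sigma(s,T)$ and $\sqrt{X_s}$ are symmetric, the cyclic property of the trace combined with $\operatorname{Tr}[A]=\operatorname{Tr}[A^\top]$ yields $\operatorname{Tr}\!\left[\sigma\,dW^\top\sqrt{X}\right] = \operatorname{Tr}\!\left[\sigma\sqrt{X}\,dW\right]$. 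Hence the two martingale contributions coincide, and
\begin{equation*}
d\operatorname{Tr}\!\left[\sigma(s,T) X_s\right] = \operatorname{Tr}\!\left[\partial_s \sigma(s,T) X_s\right]ds + a\operatorname{Tr}\!\left[\sigma(s,T)\right]ds + 2\operatorname{Tr}\!\left[\sigma(s,T)\sqrt{X_s}\,dW_s\right].
\end{equation*}

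Second, since $f$ is a deterministic $C^1$ function, the Itô product rule reduces to the ordinary Leibniz rule and gives
\begin{equation*}
d\!\left(f(s)\operatorname{Tr}\!\left[\sigma(s,T) X_s\right]\right) = f'(s)\operatorname{Tr}\!\left[\sigma(s,T) X_s\right]ds + f(s)\,d\operatorname{Tr}\!\left[\sigma(s,T) X_s\right].
\end{equation*}
Integrating from $0$ to $t$, plugging in the expression for $d\operatorname{Tr}\!\left[\sigma(s,T) X_s\right]$ obtained above, and rearranging to isolate the $dW$-integral on the left-hand side yields exactly the claimed formula.

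The main (minor) obstacle is the symmetry identity $\operatorname{Tr}\!\left[\sigma\,dW^\top\sqrt{X}\right] = \operatorname{Tr}\!\left[\sigma\sqrt{X}\,dW\right]$, which produces the factor $2$ in front of the stochastic integral; once this is established, the rest is a routine application of Itô's formula. Note also that the assumption $b = aI_d$ and the absence of jumps in \eqref{Example_1} is what permits the compact drift term $a\operatorname{Tr}\!\left[\sigma(s,T)\right]$ rather than a more general $\operatorname{Tr}\!\left[b\,\sigma(s,T)\right]$.
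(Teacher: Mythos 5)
Your proof is correct: applying the It\^{o} product rule to $s \mapsto f(s)\operatorname{Tr}\!\left[\sigma(s,T)X_s\right]$, substituting the jump-free dynamics \eqref{Example_1}, and collapsing the two martingale terms via the symmetrisation identity $\operatorname{Tr}\!\left[A(B+B^{\top})\right]=2\operatorname{Tr}\!\left[AB\right]$ for symmetric $A$ (the paper's \eqref{Trace_Representation}) yields exactly the stated formula after rearranging. The paper itself defers this proof to \cite{Haertel_Thesis}, but your argument is the natural one built from the same tools the paper uses elsewhere, and you correctly note the implicit extra requirement that $\sigma(\cdot,T)$ be differentiable in $s$ (which holds for the specific choices in Section \ref{Examples} though it is not part of Assumption \ref{Assumption1}).
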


\begin{proof}
 The proof can be found in \cite{Haertel_Thesis}.
\end{proof}

 We choose $\sigma\!\left(t,T\right)$ as a deterministic symmetric positive semidefinite $d \times d$ matrix 
in the following two different ways
\begin{itemize}
 \item[(i)]
\begin{equation}\label{Example_3}
 \sigma\!\left(t,T\right) \colonequals\begin{cases}
  \sigma e^{-\beta\left(T-t\right)} & \text{for } T \geq t\,,\\
  0 & \text{for } T < t\,,
\end{cases}
\end{equation}
 \item[(ii)] 
 \begin{equation}\label{Example_4}
 \sigma\!\left(t,T\right) \colonequals\begin{cases}
  \sigma \frac{1}{\sqrt{T-t}}  & \text{for } T > t\,,\\
  0 & \text{for } T \leq t\,.\nonumber
\end{cases}
 \end{equation}
\end{itemize}
 with $\sigma \in S_{d}^{+}$, $\sigma_{ij} \in \mathbb{R}$ for all $i,j \in \left\{1,\dots,d\right\}$, and $\beta > 0$.
 \par
 \noindent
 \vspace{0.01cm}
 \par\noindent
\textbf{Case (i):} Assumption \ref{Assumption1} is obviously fulfilled and Assumption \ref{Assumption2} follows for $0\leq s\leq t$ from
\begin{equation}\label{Example_5}
  \frac{1}{\sqrt{t}} \left\vert \Sigma\!\left(s,t\right)_{ij}\right\vert = \frac{ \left\vert \sigma_{ij} \right\vert}{\sqrt{t}}  \int_{s}^{t}\! e^{-\beta\left(u-s\right)}\, du 
  \leq \frac{2\sqrt{t-s}}{\beta \sqrt{t}}  \left\vert \sigma_{ij} \right\vert \leq \frac{2}{\beta} \left\vert \sigma_{ij} \right\vert \equalscolon w_{ij}\!\left(s\right)\,.\nonumber
 \end{equation}
Then, by Lemma \ref{Example_Lemma} we get for the forward rate
\begin{align}\label{Example_6}
f\!\left(t,T\right) & \overset{\eqref{Example_2}}{=} f\!\left(0,T\right) + 4 \int_{0}^{t}\! \operatorname{Tr}\left[\sigma\!\left(s,T\right) X_{s}\, \left(-\Sigma\!\left(s,T\right)\right)\right]ds + 2 \int_{0}^{t}\! \operatorname{Tr}\left[\sigma\!\left(s,T\right) \sqrt{X_{s}}\, dW_{s}\right]\nonumber\\
& \overset{\eqref{Sigma_Definition}}{=} f\!\left(0,T\right) + \frac{4}{\beta} \int_{0}^{t}\!\left(e^{-\beta\left(T-s\right)} - e^{-2\beta\left(T-s\right)}\right) \operatorname{Tr}\left[\sigma^{2} X_{s}\right] ds\nonumber\\
& \phantom{=====} + 2 \int_{0}^{t}\!e^{-\beta\left(T-s\right)} \operatorname{Tr}\left[\sigma \sqrt{X_{s}}\, dW_{s}\right]\nonumber\\
& \underset{\eqref{Trace_Representation}}{\overset{\eqref{Example_1}}{=}}f\!\left(0,T\right) + \frac{4}{\beta} \int_{0}^{t}\!\left(e^{-\beta\left(T-s\right)} - e^{-2\beta\left(T-s\right)}\right) \operatorname{Tr}\left[\sigma^{2} X_{s}\right] ds \nonumber\\
& \phantom{=====} - \beta \int_{0}^{t}\! e^{-\beta\left(T-s\right)} \operatorname{Tr}\!\left[\sigma X_{s}\right] ds  + e^{-\beta\left(T-t\right)} \operatorname{Tr}\!\left[\sigma X_{t}\right] \nonumber\\
& \phantom{=====} - e^{-\beta\, T}\operatorname{Tr}\!\left[\sigma X_{0}\right]
- \frac{a}{\beta} \operatorname{Tr}\!\left[\sigma\right] \left(e^{-\beta\left(T-t\right)} - e^{-\beta\, T}\right)\,.\nonumber
\end{align}
If we put 
\begin{equation}\label{Example_7}
 h_{0}\!\left(t,T\right) \colonequals f\!\left(0,T\right) - e^{-\beta T} \operatorname{Tr}\!\left[\sigma X_{0}\right] - \frac{a}{\beta} \operatorname{Tr}\!\left[\sigma\right] \left(e^{-\beta \left(T-t\right)} - e^{-\beta T}\right)\,,\nonumber
\end{equation}
\begin{equation}\label{Example_8}
 h\!\left(t,T\right) \colonequals 
  \begin{pmatrix}
  -\frac{4}{\beta} \sigma^{2} e^{-2 \beta T} \\ -\beta \sigma e^{- \beta T} + \frac{4}{\beta} \sigma^{2} e^{- \beta T} \\ e^{-\beta \left(T-t\right)} \sigma
 \end{pmatrix}\,, \nonumber
\end{equation}
and 
\begin{equation}\label{Example_9}
 Z_{t}\colonequals 
  \begin{pmatrix}
 \int_{0}^{t}\! e^{2 \beta s} X_{s} \, ds \\ \int_{0}^{t}\! e^{\beta s} X_{s} \, ds \\ X_{t} 
 \end{pmatrix}\,,
\end{equation}
we get
\begin{equation}\label{Example_10}
 f\!\left(t,T\right) = h_{0}\!\left(t,T\right) + h\!\left(t,T\right) \cdot Z_{t} = h_{0}\!\left(t,T\right) + \operatorname{Tr}\!\left[Z_{t}^{\top} h\!\left(t,T\right)\right]\,.\nonumber
\end{equation}
Note that in this setting we obtain an affine $d$-dimensional realisation for the forward rate as in Definition 3 of \cite{Chiarella_Kwon} with $h_{0}\!\left(t,T\right)$ and $h\!\left(t,T\right)$ being $\mathcal{F}_{t}$-measurable 
processes and $Z$ an affine process.
\par
Furthermore, the short rate process $\left(r_{t}\right)_{t\geq 0}$ has the form 
\begin{equation}\label{Example_14}
 r_{t} \overset{\eqref{Short_Rate_Process_Equation}}{=} r_{0} - \operatorname{Tr}\!\left[\sigma\, X_{0}\right] + \int_{0}^{t}\! \phi\!\left(u\right)\, du + \operatorname{Tr}\!\left[\sigma\, X_{t}\right]\,.\nonumber
\end{equation}
with $\phi$ as in \eqref{Phi_Definition}, since $\int_{0}^{t} \operatorname{Tr}\!\left[\sigma\!\left(u,u\right) dX_{u}\right] = \operatorname{Tr}\!\left[\sigma\left(X_{t} - X_{0}\right)\right]$ 
for all $t \geq 0$ by \eqref{Example_3}. 
By Theorem \ref{HJM_Drift_Condition_Theorem} we get that
\begin{align*}
r_{t} & \overset{\eqref{HJM_Drift_Condition_Equation}}{=} f\!\left(0,t\right) + \frac{1}{\beta} \left(1 - \beta - e^{-\beta t}\right) \operatorname{Tr}\!\left[\sigma X_{0}\right]
- \frac{a}{\beta} \left(1 - e^{-\beta t}\right)\operatorname{Tr}\!\left[\sigma\right] + \operatorname{Tr}\!\left[\sigma X_{t}\right] \nonumber\\
& \phantom{====} - \beta \int_{0}^{t}\! e^{-\beta \left(t-s\right)} \operatorname{Tr}\!\left[\sigma X_{s}\right]\, ds - \frac{4}{\beta} \int_{0}^{t}\! \left(e^{-2 \beta \left(t - s\right)} - e^{-\beta \left(t - s\right)}\right)\operatorname{Tr}\!\left[\sigma^{2} X_{s}\right]\, ds\,,
\end{align*}
hence the short-rate process $r$ is a Markov process with respect to $Z$, defined in \eqref{Example_9}.
\par
\noindent
The yield has the following form 
\small
\begin{align}\label{Example_11}
Y\!\left(t,T\right) & \overset{\eqref{Cont_Comp_Spot_Rate_Equation}}{=} 
Y\!\left(0;t,T\right) + \frac{2}{\left(T-t\right)\beta^{2}} \int_{0}^{t} \! \left(\left(e^{-\beta\left(T-s\right)} - 1\right)^{2} - \left(e^{-\beta\left(t-s\right)} - 1\right)^{2}\right) \operatorname{Tr}\!\left[\sigma^{2} X_{s}\right]ds\nonumber\\
& \phantom{====} -\frac{1}{T\!-\!t}\left( \int_{0}^{t}\!  \left(e^{-\beta\left(T-s\right)} - e^{-\beta\left(t-s\right)}\right)\operatorname{Tr}\!\left[\sigma X_{s}\right] ds + \frac{1}{\beta} \left(1 - e^{-\beta\left(T-t\right)}\right) \operatorname{Tr}\!\left[\sigma X_{t}\right] \right)\nonumber\\
& \phantom{====} - \frac{1}{\left(T\!-\!t\right)\beta}\left( \left(e^{-\beta T} -\! e^{-\beta t}\right) \operatorname{Tr}\!\left[\sigma X_{0}\right] + \frac{a \operatorname{Tr}\!\left[\sigma\right]}{\beta} \left( e^{-\beta\left(T-t\right)} -\!  e^{\beta T} +\! e^{-\beta t} -\! 1\right)\right)\,. \nonumber
\end{align}
\normalsize
Then, the long-term drift for $t \geq 0$ is
\begin{equation}\label{Example_12}
 \mu_{\infty}\!\left(t\right) \overset{\eqref{LongTermDriftEqu}}{=} \lim\limits_{T \rightarrow \infty} \frac{\Gamma\!\left(t,T\right)}{T} 
 \overset{\eqref{Definition_Gamma}}{=} \lim\limits_{T \rightarrow \infty} \frac{1}{\beta^{2} T} \left(e^{-\beta \left(T-t\right)} - 1\right)^{2} \sigma\, X_{t}\, \sigma = 0\,,
\end{equation}
and we get by Theorem \ref{LongTermYield} that 
\begin{align}\label{Example_13}
\ell_{t} & \overset{\eqref{LongTermYieldEqu}}{=} \ell_{0} + 2 \int_{0}^{t} \! \operatorname{Tr}\!\left[\mu_{\infty}\!\left(s\right)\right]ds 
\overset{\eqref{Example_12}}{=} \ell_{0} \,,\nonumber
\end{align}
i.e.\,\,$\left(\ell_{t}\right)_{t \geq 0}$ is constant.
\par\noindent
 \vspace{0.01cm}
 \par\noindent
\textbf{Case (ii):} 
Assumption \ref{Assumption1} is again fulfilled obviously and Assumption \ref{Assumption2} follows for $0\leq s\leq t$ since
\begin{equation}\label{Example_2_1}
  \frac{1}{\sqrt{t}} \left\vert \Sigma\!\left(s,t\right)_{ij}\right\vert = 2 \left\vert \sigma_{ij} \right\vert \frac{\sqrt{t-s}}{\sqrt{t}} 
  = 2 \left\vert \sigma_{ij} \right\vert \sqrt{1 - \frac{s}{t}} \leq 2 \left\vert \sigma_{ij} \right\vert \equalscolon w_{ij}\!\left(s\right)\,.\nonumber
 \end{equation}
The forward rate for $0 \leq t \leq T$ is
\begin{align*}
f\!\left(t,T\right) & \overset{\eqref{Example_2}}{=} f\!\left(0,T\right) + 4 \int_{0}^{t}\! \operatorname{Tr}\left[\sigma\!\left(s,T\right) X_{s}\, \left(-\Sigma\!\left(s,T\right)\right)\right]ds + 2 \int_{0}^{t}\! \operatorname{Tr}\left[\sigma\!\left(s,T\right) \sqrt{X_{s}}\, dW_{s}\right]\nonumber\\
& \overset{\eqref{Sigma_Definition}}{=}f\!\left(0,T\right) + 8 \int_{0}^{t}\! \operatorname{Tr}\left[\sigma^{2}\, X_{s}\right] ds 
- \frac{1}{2} \int_{0}^{t}\! \left(T\!-\!s\right)^{-\frac{3}{2}} \operatorname{Tr}\!\left[\sigma X_{s}\right] ds\nonumber\\
& \phantom{=====} +  \frac{1}{\sqrt{T\!-\!t}} \operatorname{Tr}\!\left[\sigma X_{t}\right] - \frac{1}{\sqrt{T}} \operatorname{Tr}\!\left[\sigma X_{0}\right] 
- 2 a \operatorname{Tr}\!\left[\sigma\right] \left(\sqrt{T} - \sqrt{T\!-\!t}\right) \nonumber
\end{align*}
and the yield for $\left[t,T\right]$ has the following forms
\small
\begin{align*}
Y\!\left(t,T\right) & \overset{\eqref{Cont_Comp_Spot_Rate_Equation}}{=} 
Y\!\left(0;t,T\right) + 8 \int_{0}^{t} \!  \operatorname{Tr}\left[\sigma^{2}\, X_{s}\right] ds + \frac{1}{T\!-\!t} \int_{0}^{t}\! \left(\frac{1}{\sqrt{T\!-\!s}} - \frac{1}{\sqrt{t\!-\!s}}\right)\operatorname{Tr}\!\left[\sigma X_{s}\right] ds \nonumber\\
& \phantom{====} + \frac{2\, \operatorname{Tr}\!\left[\sigma X_{t}\right]}{\sqrt{T\!-\!t}}  - \frac{2 \left(\sqrt{T}\! -\! \sqrt{t}\right) \operatorname{Tr}\!\left[\sigma X_{0}\right]}{T\!-\!t} + \frac{4}{3} a \operatorname{Tr}\!\left[\sigma\right] \left(\sqrt{T\!-\!t} - \frac{T\sqrt{T} - t\sqrt{t}}{T\!-\!t}\right)\,.\nonumber
\end{align*}
\normalsize
Then, the long-term drift for $t \geq 0$ is
\begin{equation}\label{Example_2_4}
 \mu_{\infty}\!\left(t\right) \overset{\eqref{LongTermDriftEqu}}{=} \lim\limits_{T \rightarrow \infty} \frac{\Gamma\!\left(t,T\right)}{T} 
 \overset{\eqref{Definition_Gamma}}{=} \lim\limits_{T \rightarrow \infty} \frac{4 \left(T-t\right) \sigma X_{t} \sigma}{T} = 4 \sigma\, X_{t}\, \sigma\,,
\end{equation}
and we get with Theorem \ref{LongTermYield} that 
\begin{align}\label{Example_2_5}
\ell_{t} & \overset{\eqref{LongTermYieldEqu}}{=} \ell_{0} + 2 \int_{0}^{t} \! \operatorname{Tr}\!\left[\mu_{\infty}\!\left(s\right)\right]ds 
\overset{\eqref{Example_2_4}}{=} \ell_{0} + 8 \int_{0}^{t} \! \operatorname{Tr}\!\left[\sigma\, X_{s}\, \sigma\right]ds\,.\nonumber
\end{align}



\appendix

\section{}

\addcontentsline{toc}{section}{A} 

Here we provide the proofs of Proposition \ref{Bond_Price_Process_Proposition} and Theorem \ref{HJM_Drift_Condition_Theorem}. 
The results follow by applying the Fubini theorem for integrable functions (cf.\,\,Theorem 14.16 in Chapter 14 of \cite{Book_Klenke}) 
and the stochastic Fubini theorem (cf.\,\,Theorem 65 in Chapter IV of \cite{Book_Protter}).
For further details on the following computations, we refer to \cite{Haertel_Thesis}.

\begin{pf_appendix_1}
 \emph{Let us introduce for every maturity $T > 0$ the quantity}
\begin{equation}\label{Forward_Rate_Integral_Definition}
Z\!\left(t,T\right) \colonequals -\int_{t}^{T} f\!\left(t,u\right) \, du\,,
\end{equation}
\emph{for all $0 \leq t \leq T$.}
\emph{From the dynamics of the forward rate \eqref{Forward_Rates_Process_Definition} we deduce that for all $T > 0$}
\begin{equation}\label{Forward_Rate_Integral_Representation}
Z\!\left(t,T\right) \overset{\eqref{Forward_Rate_Integral_Definition}}{=} -\int_{t}^{T} \! f\!\left(0,u\right)\, du - \int_{t}^{T} \! \int_{0}^{t} \! \alpha\!\left(s,u\right)\,  ds \, du - \int_{t}^{T} \! \int_{0}^{t} \! \operatorname{Tr}\!\left[\sigma\!\left(s,u\right) dX_s \right] du\,,
\end{equation}
\emph{for all $0 \leq t \leq T$.
By combining \eqref{Affine_Process_Representation_Equation_1}, \eqref{Sigma_Definition}, \eqref{Forward_Rate_Integral_Representation}, and \eqref{Short_Rate_Representation}, 
we derive the following identity}
\begin{align*}
Z\!\left(t,T\right) & \overset{(\ref{Forward_Rate_Integral_Representation})}{=} Z\!\left(0,T\right) \!+\! \int_{0}^{t} \! f\!\left(0,u\right) \, du \!-\! \int_{t}^{T} \! \int_{0}^{t} \! \alpha\!\left(s,u\right) \, ds \, du \!-\! \int_{t}^{T} \! \int_{0}^{t} \! \operatorname{Tr}\!\left[ \sigma\!\left(s,u\right) dX_{s} \right] du\nonumber\\
& \overset{\eqref{Short_Rate_Representation}}{\underset{\phantom{\eqref{Affine_Process_Representation_Equation_1}}}{=}} Z\!\left(0,T\right) + \int_{0}^{t} \! r_{s} \, ds -\int_{0}^{t} \! \int_{s}^{T}\! \alpha\!\left(s,u\right) \, du \, ds - \int_{0}^{t} \! \int_{s}^{T}\! \operatorname{Tr}\!\left[\sigma\!\left(s,u\right) \, du \, dX_{s} \right]\nonumber\\
& \overset{\eqref{Affine_Process_Representation_Equation_1}}{\underset{\eqref{Sigma_Definition}}{=}} Z\!\left(0,T\right) + \int_{0}^{t} \! r_{s} \, ds -\int_{0}^{t} \! \int_{s}^{T}\! \alpha\!\left(s,u\right) \, du \, ds \nonumber\\
& \phantom{====} +\! \int_{0}^{t} \!\operatorname{Tr}\!\left[\Sigma\!\left(s,T\right)\left(\sqrt{X_{s}}\,dW_{s}\,Q + Q^{\top} dW_{s}^{\top} \sqrt{X_{s}}\right)\right] \nonumber\\
& \phantom{====} +\! \int_{0}^{t} \operatorname{Tr}\!\left[\Sigma\!\left(s,T\right)\left(b +\! B\!\left(X_{s}\right)\right)\right] ds +\! \int_{0}^{t}\!\int_{S_{d}^{+}\! \setminus \left\{ 0\right\}}\!\operatorname{Tr}\!\left[\Sigma\!\left(s,T\right)\xi\right] \mu^{X\!}\!\left(ds,d\xi\right)\,.
\end{align*}
\emph{Note that in general for $A, B \in \mathcal{M}_{d}$ with $A$ symmetric, i.e. $A \in S_{d}$, it holds}
\begin{align}\label{Trace_Representation}
 \operatorname{Tr}\!\left[ A \left(B + B^{\top}\right)\right] & = \operatorname{Tr}\!\left[A B \right] + \operatorname{Tr}\!\left[A B^{\top} \right] = \operatorname{Tr}\!\left[A B \right] + \operatorname{Tr}\!\left[\left(B A\right)^{\top} \right]\nonumber\\
&  = \operatorname{Tr}\!\left[A B \right] + \operatorname{Tr}\left[ B A \right] = 2 \operatorname{Tr}\!\left[A B \right]\,.
\end{align}
\emph{Therefore, we get due to $\sigma\!\left(s,t\right) \in S_{d}$ for all $s, t \geq 0$ that}
\begin{align}\label{Forward_Rate_Integral_Dynamics_2}
Z\!\left(t,T\right) & \overset{(\ref{Trace_Representation})}{=} Z\!\left(0,T\right) \!+\! \int_{0}^{t} \! r_{s} \, ds \!-\!\int_{0}^{t} \! \int_{s}^{T}\! \alpha\!\left(s,u\right) \, du \, ds \!+\! 2 \int_{0}^{t} \! \operatorname{Tr}\!\left[ \Sigma\!\left(s,T\right) \sqrt{X_{s}}\, dW_{s}\, Q \right] \nonumber\\
 & \phantom{====} +\! \int_{0}^{t} \operatorname{Tr}\!\left[\Sigma\!\left(s,T\right)\left(b +\! B\!\left(X_{s}\right)\right)\right] ds +\! \int_{0}^{t}\!\int_{S_{d}^{+}\! \setminus \left\{ 0\right\}}\!\operatorname{Tr}\!\left[\Sigma\!\left(s,T\right)\xi\right] \mu^{X\!}\!\left(ds,d\xi\right)\,.
\end{align}
\emph{Note that for all $0 \leq t \leq T$}
\begin{equation}\label{Delta_Z}
 \Delta Z\!\left(t,T\right) = \operatorname{Tr}\left[\Sigma\!\left(t,T\right)\Delta X_{t}\right]\,.
\end{equation}
\emph{With the help of \eqref{Forward_Rate_Integral_Dynamics_2} and the fact that}
\begin{equation}\label{Brownian_Motion_Dynamics}
 \left\langle W_{lm}, W_{ru} \right\rangle_{s} = \left\{ \begin{array}{lc} s & \text{\emph{ if} } l = r \text{\emph{ and} } m = u,\\
                                     0 & \text{\emph{ else,}}
                                    \end{array}\right.
\end{equation}
\emph{we can calculate the quadratic variation of $Z$ for all $T > 0$ as follows}
\small
\begin{align}\label{Quadratic_Variation_Forward_Rate_Integral}
  \left\langle Z\!\left(\,\cdot\,,T\right) \right\rangle_{t}^{c} & \overset{\phantom{\eqref{Brownian_Motion_Dynamics}}}{=}  \left\langle \operatorname{Tr}\!\left[\int_{0}^{\cdot}\! \Sigma\!\left(s,T\right) \sqrt{X_{s}}\, dW_{s}\, Q\right] \right\rangle_{t} 
 \overset{\eqref{Brownian_Motion_Dynamics}}{=} 4 \int_{0}^{t}\! \operatorname{Tr}\!\left[ Q\, \Sigma\!\left(s,T\right) X_{s}\, \Sigma\!\left(s,T\right) Q^{\top}\right]\, ds\,.
\end{align}
\normalsize
\emph{Further, we see that due to equation (2.27) of \cite{article_Cuchiero} for all $u \in S_{d}^{+}$ and a process $Y$ on $S_{d}^{+}$ it holds}
\begin{equation}\label{Remark_Bond_Price_Equation_2}
 \operatorname{Tr}\!\left[B^{\top}\!\left(u\right) Y \right] = \operatorname{Tr}\!\left[B\!\left(Y\right) u \right]\,,
\end{equation}
\emph{where $B$ is defined according to \eqref{Linear_Drift_Equation_2} and therefore for all $0\leq t\leq T$ and $\alpha = Q^{\top}Q$}
\begin{align}\label{Help_F_R}
& \int_{0}^{t}\!\operatorname{Tr}\!\left[\Sigma\!\left(s,T\right)\left(b + B\!\left(X_{s}\right)\right)\right] + 2\, \operatorname{Tr}\!\left[Q\, \Sigma\!\left(s,T\right) X_{s}\, \Sigma\!\left(s,T\right) Q^{\top} \right]ds \nonumber\\
& \phantom{========} +  \int_{0}^{t}\!\int_{S_{d}^{+}\! \setminus \left\{ 0\right\}}\!\left(e^{\operatorname{Tr}\left[\Sigma\left(s,T\right)\, \xi\right]} - 1\right)\, \nu\!\left(ds,d\xi\right)\nonumber\\
& \phantom{====}\underset{\eqref{Remark_Bond_Price_Equation_2}}{\overset{\eqref{Affine_Process_Representation_Equation_2}}{=}}-\int_{0}^{t}\!\operatorname{Tr}\!\left[-\Sigma\!\left(s,T\right) b - 2\, \Sigma\!\left(s,T\right) \alpha \Sigma\!\left(s,T\right) X_{s} + B^{\top}\!\left(-\Sigma\!\left(s,T\right)\right) X_{s}\right]ds \nonumber\\
& \phantom{========} + \int_{0}^{t}\!\int_{S_{d}^{+}\! \setminus \left\{ 0\right\}}\!\left(e^{-\operatorname{Tr}\left[-\Sigma\left(s,T\right)\, \xi\right]} - 1\right)\, m\!\left(d\xi\right)ds\nonumber\\
& \phantom{========} + \int_{0}^{t}\!\operatorname{Tr}\!\left[ X_{s} \int_{S_{d}^{+}\! \setminus \left\{ 0\right\}}\!\left(e^{-\operatorname{Tr}\left[-\Sigma\left(s,T\right)\, \xi\right]} - 1\right)\, \mu\!\left(d\xi\right)\right]ds \nonumber\\
& \phantom{====}\underset{\eqref{Riccati_Equations_Theorem_Equation_3}}{\overset{\eqref{Riccati_Equations_Theorem_Equation_4}}{=}} \int_{0}^{t}\!\left(-F\!\left(-\Sigma\!\left(s,T\right)\right)-\operatorname{Tr}\!\left[R\!\left(-\Sigma\!\left(s,T\right)\right)X_{s}\right] \right)ds\,.
\end{align}
\emph{Now, we apply It\^{o}'s formula on $P\!\left(t,T\right) \colonequals \exp\!\left(Z\!\left(t,T\right)\right)$ for every maturity $T > 0$ (cf.\,\,Definition 1.4.2 of \cite{brigo06}) and obtain the following representation, where 
we use Proposition 1.28 of Chapter II in \cite{Book_JacodShiryaev} to combine the measures  $\mu^{X\!}\!\left(ds,d\xi\right)$ and $\nu\!\left(ds,d\xi\right)$, 
since the affine process $X$ has only jumps of finite variation (cf.\,\,\eqref{Finite_Variation_Jumps_Remark_Equation}).}  
\small
\begin{align*}
P\!\left(t,T\right) & \underset{\phantom{\eqref{Quadratic_Variation_Forward_Rate_Integral}}}{\overset{\phantom{(\ref{A_Definition})}}{=}} P\!\left(0,T\right) + \int_{0}^{t} \! P\!\left(s-,T\right)\, dZ\!\left(s,T\right) + \frac{1}{2}\int_{0}^{t} P\!\left(s,T\right) \, d\left\langle Z\!\left(\,\cdot\,,T\right)\right\rangle^{c}_{s}\nonumber\\
 & \phantom{\underset{\eqref{A_Definition}}{\overset{(\ref{A_Definition})}{=}}==} + \sum_{0 < s \leq t}^{\Delta Z\left(s,T\right)\neq 0} \left[e^{Z\left(s,T\right)} - e^{Z\left(s-,T\right)} - \Delta Z\!\left(s,T\right)e^{Z\left(s-,T\right)}\right]\nonumber\\
 & \overset{\eqref{Delta_Z}}{\underset{\eqref{Quadratic_Variation_Forward_Rate_Integral}}{=}} P\!\left(0,T\right) + \int_{0}^{t} \! P\!\left(s-,T\right)\, dZ\!\left(s,T\right) \nonumber\\
 & \phantom{\underset{(\ref{A_Definition})}{\overset{(\ref{A_Definition})}{=}}==} + 2 \int_{0}^{t} \! P\!\left(s,T\right) \operatorname{Tr}\!\left[ Q\, \Sigma\!\left(s,T\right) X_{s}\, \Sigma\!\left(s,T\right) Q^{\top} \right]\, ds\nonumber\\
 & \phantom{\underset{(\ref{A_Definition})}{\overset{(\ref{A_Definition})}{=}}==} + \sum_{0 \leq s \leq t}^{\Delta X_{s}\neq 0} \left[e^{Z\left(s,T\right)} - e^{Z\left(s-,T\right)} - \operatorname{Tr}\!\left[\Sigma\!\left(s,T\right) \Delta X_{s}\right] e^{Z\left(s-,T\right)}\right]\nonumber\\
 & \overset{\eqref{Forward_Rate_Integral_Dynamics_2}}{\underset{\eqref{Jump_Definition}}{=}}  P\!\left(0,T\right) + 2 \int_{0}^{t} \! P\!\left(s,T\right) \operatorname{Tr}\!\left[ \Sigma\!\left(s,T\right) \sqrt{X_{s}}\, dW_{s}\, Q \right]\nonumber\\
 & \phantom{\underset{(\ref{A_Definition})}{\overset{(\ref{A_Definition})}{=}}==} + \int_{0}^{t}\! P(s,T)\left(r_{s} - \int_{s}^{T}\!\alpha\!\left(s,u\right) du\right)\, ds \nonumber\\
 & \phantom{\underset{(\ref{A_Definition})}{\overset{(\ref{A_Definition})}{=}}==} + \int_{0}^{t}\! P(s,T) \operatorname{Tr}\!\left[\Sigma\!\left(s,T\right)\left(b + B\!\left(X_{s}\right)\right)\right]\, ds\nonumber\\
 & \phantom{\underset{(\ref{A_Definition})}{\overset{(\ref{A_Definition})}{=}}==} + \int_{0}^{t}\! P(s-,T) \int_{S_{d}^{+}\! \setminus \left\{ 0\right\}}\!\operatorname{Tr}\!\left[\Sigma\!\left(s,T\right) \xi\right] \, \mu^{X\!}\!\left(ds,d\xi\right)\nonumber\\
 & \phantom{\underset{(\ref{A_Definition})}{\overset{(\ref{A_Definition})}{=}}==} + 2 \int_{0}^{t} \! P\!\left(s,T\right) \operatorname{Tr}\!\left[ Q\, \Sigma\!\left(s,T\right) X_{s}\, \Sigma\!\left(s,T\right) Q^{\top} \right]\, ds\nonumber\\ 
 & \phantom{\underset{(\ref{A_Definition})}{\overset{(\ref{A_Definition})}{=}}==} + \sum_{0 \leq s \leq t}^{\Delta X_{s}\neq 0} \left[e^{\Delta Z\left(s,T\right)}P\!\left(s-,T\right)\! -\! P\!\left(s-,T\right)\! -\! \operatorname{Tr}\!\left[\Sigma\!\left(s,T\right) \Delta X_{s}\right] P\!\left(s-,T\right)\right] \nonumber\\
 & \underset{\eqref{Help_F_R}}{\overset{\eqref{Delta_Z}}{=}} P\!\left(0,T\right) + \int_{0}^{t}\! P\!\left(s-,T\right)\left(r_{s} + A\!\left(s,T\right)\right) ds 
 + 2 \int_{0}^{t}\! P\!\left(s,T\right) \operatorname{Tr}\!\left[\Sigma\!\left(s,T\right)\sqrt{X_{s}} \, dW_{s} Q\right] \nonumber\\
 & \phantom{\underset{(\ref{A_Definition})}{\overset{(\ref{A_Definition})}{=}}==} + \int_{0}^{t}\! P\!\left(s-,T\right) \int_{S_{d}^{+}\! \setminus \left\{ 0\right\}}\! \left(e^{\operatorname{Tr}\left[ \Sigma\left(s,T\right)\,\xi\right]} - 1\right) \left(\mu^{X} - \nu\right)\left(ds,d\xi\right)\,.
\end{align*}
\normalsize
\emph{Assumption \ref{Assumption1} guarantees that all integrals above are finite.}
\hfill
$\Box$
\end{pf_appendix_1}

 \begin{pf_appendix_2}
\emph{By using (\ref{Discounted_Bond_Price_Process_Corollary}) we see that the discounted bond price process under $\mathbb{Q}$ is}
\begin{align}\label{Discounted_Bond_Price_Process_Representation_Q}
 \frac{P\!\left(t,T\right)}{\beta_{t}} 
& \overset{\emph{\eqref{Q_Compensator_Equation_2}}}{=} P\!\left(0,T\right) + \int_{0}^{t}\! \frac{P\!\left(s,T\right)}{\beta_{s}}\left( A\!\left(s,T\right) + 2 \operatorname{Tr}\!\left[\Sigma\!\left(s,T\right)\! \sqrt{X_{s}}\, \gamma_{s}\, Q\right]\right) ds \nonumber\\
 & \phantom{=====} + 2 \int_{0}^{t}\! \frac{P\!\left(s,T\right)}{\beta_{s}} \operatorname{Tr}\!\left[\Sigma\!\left(s,T\right) \sqrt{X_{s}}\, dW^{\ast}_{s} \, Q\right]\nonumber\\
 & \phantom{=====} + \int_{0}^{t}\!\int_{S_{d}^{+}\! \setminus \left\{ 0\right\}}\! \frac{P\!\left(s-,T\right)}{\beta_{s}}\left(e^{\operatorname{Tr}\left[\Sigma\left(s,T\right)\,\xi\right]}\! -\! 1\right)\left(\mu^{X\!} - \nu^{\ast\!}\right)\!\left(ds,d\xi\right)\nonumber\\
 & \phantom{=====} + \int_{0}^{t}\!\int_{S_{d}^{+}\! \setminus \left\{ 0\right\}}\! \frac{P\!\left(s-,T\right)}{\beta_{s}}\left(e^{\operatorname{Tr}\left[\Sigma\left(s,T\right)\,\xi\right]}\! -\! 1\right)\left(K\!\left(s,\xi\right)\!-\!1\right)\nu\!\left(ds,d\xi\right)
\end{align}
\emph{for all $0\leq t\leq T$.} 
\emph{Since $\frac{P\left(t,T\right)}{\beta_{t}}$, $t \leq T$, has to be a local martingale under $\mathbb{Q}$, the drift in \eqref{Discounted_Bond_Price_Process_Representation_Q}
 must disappear, i.e.\,\,for all $0\leq t\leq T$}\small
\begin{align}
 0 & \overset{\emph{\eqref{Affine_Process_Representation_Equation_2}}}{=} \int_{0}^{t}\! \frac{P\!\left(s,T\right)}{\beta_{s}} A\!\left(s,T\right)\, ds + 2 \int_{0}^{t}\! \frac{P\!\left(s,T\right)}{\beta_{s}} \operatorname{Tr}\!\left[\Sigma\!\left(s,T\right)\! \sqrt{X_{s}}\, \gamma_{s}\, Q\right] ds \nonumber\\
& \phantom{===} + \int_{0}^{t}\!\int_{S_{d}^{+}\! \setminus \left\{ 0\right\}}\! \frac{P\!\left(s-,T\right)}{\beta_{s}}\left(e^{\operatorname{Tr}\left[\Sigma\left(s,T\right)\,\xi\right]}\! -\! 1\right)\left(K\!\left(s,\xi\right)\!-\!1\right)\left(m\!\left(d\xi\right)\! +\! \operatorname{Tr}\!\left[X_{s} \mu\!\left(d\xi\right)\right]\right) ds. \nonumber
\end{align}
 \normalsize
\emph{It follows for all $0\leq t\leq T$ that}
\begin{align*}
 A\!\left(t,T\right) & = - 2 \operatorname{Tr}\!\left[\Sigma\!\left(t,T\right)\! \sqrt{X_{t}}\, \gamma_{t}\, Q\right] \nonumber\\
& \phantom{=====} - \int_{S_{d}^{+}\! \setminus \left\{ 0\right\}\!}\!\left(e^{\operatorname{Tr}\left[\Sigma\left(t,T\right)\,\xi\right]}\! -\! 1\right)\left(K\!\left(t,\xi\right)\!-\!1\right) \left(m\!\left(d\xi\right)\! +\! \operatorname{Tr}\!\left[X_{t} \mu\!\left(d\xi\right)\right]\right)
\end{align*}
$dt \otimes d\mathbb{P}$-a.s.
\par
\noindent
\emph{Consequently we get for all $0\leq t\leq T$ by Satz 6.28 in \cite{Book_Klenke}}
\begin{align*}\label{Alpha_HJM_Drift}
 \alpha\!\left(t,T\right) & \underset{\phantom{\eqref{Discounted_Bond_Price_Process_Representation_Q}}}{\overset{\emph{\eqref{A_Definition}}}{=}} - \partial_{T} A\!\left(t,T\right) - \partial_{T} F\!\left(-\Sigma\!\left(t,T\right)\right) - \partial_{T} \operatorname{Tr}\!\left[R\!\left(-\Sigma\!\left(t,T\right)\right)X_{t}\right] \nonumber\\
& \underset{\phantom{\eqref{Discounted_Bond_Price_Process_Representation_Q}}}{\overset{\emph{\eqref{Trace_Representation}}}{=}} -\! \operatorname{Tr}\!\left[\sigma\!\left(t,T\right)\left(b + B\!\left(X_{t}\right) + 2 \sqrt{X_{t}}\, \gamma_{t}\, Q\right)\right]\! -\! 4 \operatorname{Tr}\!\left[Q\, \sigma\!\left(t,T\right) X_{t}\, \Sigma\!\left(t,T\right) Q^{\top}\right] \nonumber\\
& \phantom{====} - \int_{S_{d}^{+}\! \setminus \left\{ 0\right\}\!}\! \operatorname{Tr}\left[\sigma\!\left(t,T\right) \xi\right] e^{\operatorname{Tr}\left[\Sigma\left(t,T\right)\,\xi\right]} K\!\left(t,\xi\right) \left(m\!\left(d\xi\right)\! +\! \operatorname{Tr}\!\left[X_{t} \mu\!\left(d\xi\right)\right]\right)\nonumber
\end{align*}
\emph{$dt \otimes d\mathbb{P}$-a.s.}
\par
\noindent
\emph{
Hence, $\frac{P\left(t,T\right)}{\beta_{t}}$, $t \leq T$, is a $\mathbb{Q}$-local martingale if and only if equation (\ref{HJM_Drift_Condition_Equation}) is fulfilled $dt \otimes d\mathbb{P}$-a.s.
Equation \eqref{HJM_Drift_Condition_Equation} represents the HJM condition on the drift in the affine setting on $S_{d}^{+}$.
Then, the forward rate under $\mathbb{Q}$ follows a process of the form}
\small
\begin{align*}
 f\!\left(t,T\right) & \underset{\eqref{Affine_Process_Representation_Equation_1}}{\overset{\eqref{Forward_Rates_Process_Definition}}{=}} f\!\left(0,T\right) + \int_{0}^{t}\! \alpha\!\left(s,T\right) ds + \int_{0}^{t}\! \operatorname{Tr}\!\left[\sigma\!\left(s,T\right)\left(b +  B\!\left(X_{s}\right) + 2 \sqrt{X_{s}}\, \gamma_{s}\, Q\right)\right] ds \nonumber\\
& \phantom{=====} + \int_{0}^{t}\!\int_{S_{d}^{+}\! \setminus \left\{ 0\right\}}\! \operatorname{Tr}\!\left[\sigma\!\left(s,T\right) \xi\right]\, \mu^{X\!}\!\left(ds,d\xi\right) + 2 \int_{0}^{t}\! \operatorname{Tr}\left[\sigma\!\left(s,T\right) \sqrt{X_{s}}\, dW^{\ast}_{s}\, Q\right] \nonumber\\
& \overset{\emph{\eqref{HJM_Drift_Condition_Equation}}}{=} f\!\left(0,T\right) - 4 \int_{0}^{t}\! \operatorname{Tr}\left[Q\, \sigma\!\left(s,T\right)\, X_{s}\, \Sigma\!\left(s,T\right)\, Q^{\top}\right]\,ds\nonumber\\
& \phantom{=====} - \int_{0}^{t}\!\int_{S_{d}^{+}\! \setminus \left\{ 0\right\}}\!\operatorname{Tr}\!\left[\sigma\!\left(s,T\right) \xi\right] e^{\operatorname{Tr}\left[\Sigma\left(s,T\right)\,\xi\right]}K\!\left(s,\xi\right) \nu\!\left(ds,d\xi\right)\nonumber\\
& \phantom{=====} + \int_{0}^{t}\!\int_{S_{d}^{+}\! \setminus \left\{ 0\right\}}\! \operatorname{Tr}\!\left[\sigma\!\left(s,T\right) \xi\right]\, \mu^{X\!}\!\left(ds,d\xi\right) + 2 \int_{0}^{t}\! \operatorname{Tr}\left[\sigma\!\left(s,T\right) \sqrt{X_{s}}\, dW^{\ast}_{s}\, Q\right]\nonumber\\
& \underset{\phantom{\eqref{HJM_Drift_Condition_Equation}}}{\overset{\emph{\eqref{Sigma_Definition}}}{=}} f\!\left(0,T\right) + 4 \int_{0}^{t}\! \operatorname{Tr}\left[ Q\, \sigma\!\left(s,T\right) X_{s}\, \int_{s}^{T}\!\sigma\!\left(s,u\right) du\  Q^{\top}\right]\,ds\nonumber\\
& \phantom{=====} + \int_{0}^{t}\!\int_{S_{d}^{+}\! \setminus \left\{ 0\right\}}\!\operatorname{Tr}\!\left[\sigma\!\left(s,T\right) \xi\right] \left(\mu^{X\!}-\nu^{\ast\!}\right)\!\left(ds,d\xi\right)\nonumber\\
& \phantom{=====} - \int_{0}^{t}\!\int_{S_{d}^{+}\! \setminus \left\{ 0\right\}}\!\operatorname{Tr}\!\left[\sigma\!\left(s,T\right) \xi\right]\left(e^{\operatorname{Tr}\left[\Sigma\left(s,T\right)\,\xi\right]} - 1\right)\nu^{\ast\!}\!\left(ds,d\xi\right)\nonumber\\
& \phantom{=====} + 2 \int_{0}^{t}\! \operatorname{Tr}\left[\sigma\!\left(s,T\right) \sqrt{X_{s}}\, dW^{\ast}_{s}\, Q\right]\nonumber\\
& \underset{\phantom{\emph{\eqref{HJM_Drift_Condition_Equation}}}}{\overset{\emph{\eqref{Affine_Process_Representation_Equation_2}}}{=}} f\!\left(0,T\right) + \int_{0}^{t}\! \left\{4 \operatorname{Tr}\left[ Q\, \sigma\!\left(s,T\right) X_{s}\, \int_{s}^{T}\!\sigma\!\left(s,u\right) du\  Q^{\top}\right] \right.\nonumber\\
& \phantom{=====} \left. - \int_{S_{d}^{+}\! \setminus \left\{ 0\right\}\!}\! K\!\left(s,\xi\right) \operatorname{Tr}\!\left[\sigma\!\left(s,T\right) \xi\right] \left(e^{\operatorname{Tr}\left[\Sigma\left(s,T\right)\,\xi\right]}\! -\! 1\right) \left(m\!\left(d\xi\right)\! +\! \operatorname{Tr}\!\left[X_{s} \mu\!\left(d\xi\right)\right] \right)\right\} ds \displaybreak\nonumber\\
& \phantom{=====} + \int_{0}^{t}\!\int_{S_{d}^{+}\! \setminus \left\{ 0\right\}}\!\operatorname{Tr}\!\left[\sigma\!\left(s,T\right) \xi\right] \left(\mu^{X\!}-\nu^{\ast\!}\right)\!\left(ds,d\xi\right)\nonumber\\
& \phantom{=====} + 2 \int_{0}^{t}\! \operatorname{Tr}\left[\sigma\!\left(s,T\right) \sqrt{X_{s}}\, dW^{\ast}_{s}\, Q\right]\,,\nonumber
\end{align*}
\normalsize
\emph{where we have used again Proposition 1.28 of Chapter II in} \cite{Book_JacodShiryaev}.
\hfill
$\Box$
 \end{pf_appendix_2}


\vspace{0.5cm}
\bibliography{biblio}
\bibliographystyle{plain}

\end{document}